\documentclass[11pt]{article}

\usepackage{graphicx} 
\usepackage{verbatim}

\usepackage{enumitem}

\usepackage{booktabs, multirow}

\usepackage{amsmath, amssymb, mathtools, bbm}
\usepackage{amsthm}
\usepackage{bm}
\allowdisplaybreaks

\usepackage[labelfont=bf]{caption}
\usepackage{subcaption}
\usepackage[table]{xcolor}
\usepackage{float}
\usepackage{tikz}
\usetikzlibrary{arrows.meta,positioning}

\usepackage[scaled=0.8]{beramono}

\usepackage{xcolor}
\definecolor{linkcolor}{HTML}{0645AD}
\definecolor{crimson}{HTML}{A41034}
\definecolor{darkgreen}{HTML}{6D8E42}
\definecolor{lightblue}{HTML}{ADD8E6}
\definecolor{brickred}{HTML}{AA4A44}

\usepackage{hyperref}
\hypersetup{
     colorlinks   = true,
     allcolors    = crimson,
}

\newcommand{\indep}{\!\perp\!\!\!\perp}

\newcommand{\E}{\mathbb{E}}

\def\*#1{\mathbf{#1}}
\newcommand{\argmax}{\mathop{\rm argmax}\limits}
\newcommand{\argmin}{\mathop{\rm argmin}\limits}
\newcommand{\bY}{\bm{Y}}
\newcommand{\by}{\bm{y}}
\newcommand{\bx}{\bm{x}}
\newcommand{\bX}{\bm{X}}

\newcommand{\bL}{\bm{L}}

\newcommand{\bu}{\bm{u}}
\newcommand{\bU}{\bm{U}}

\newcommand{\bS}{\bm{S}}
\newcommand{\bbone}{\mathbbm{1}}

\newcommand{\ROR}{\textsc{ror}}
\newcommand{\cash}{\textsc{cash}}
\newcommand{\crime}{\textsc{crime}}
\newcommand{\nocrime}{\neg\textsc{crime}}
\newcommand{\LCB}{\textsc{lcb}}
\newcommand{\HCB}{\textsc{hcb}}
\newcommand{\REM}{\textsc{rem}}
\newcommand{\NCA}{\textsc{nca}}
\newcommand{\NVCA}{\textsc{nvca}}
\newcommand{\cD}{\mathcal{D}}
\newcommand{\cR}{\mathcal{R}}
\newcommand{\cX}{\mathcal{X}}
\newcommand{\cY}{\mathcal{Y}}

\newtheorem{theorem}{Theorem}
\newtheorem{proposition}{Proposition}
\newtheorem{corollary}{Corollary}

\newtheorem{assumption}{Assumption}
\newtheorem{remark}{Remark}

\newtheorem{example}{Example}
\theoremstyle{definition}
\newtheorem{definition}{Definition}

\usepackage{natbib}

\newcommand{\blind}{0}
\newcommand{\tit}{Triage Score:\\ {\Large A Counterfactual Risk Assessment Instrument}}
\begin{document}

\date{\today}

\if0\blind

\title{{\bf\tit}\thanks{We acknowledge partial financial support from
    Arnold Ventures and the Impact Labs at Harvard Kennedy School.}}

  \author{Kosuke Imai\thanks{Professor, Department of Government and
      Department of Statistics, Harvard University.  1737 Cambridge
      Street, Institute for Quantitative Social Science, Cambridge MA
      02138.  Email: \href{mailto:imai@harvard.edu}{imai@harvard.edu}
      URL:
      \href{https://imai.fas.harvard.edu}{https://imai.fas.harvard.edu}}
    \and Sooahn Shin\thanks{Postdotcoral Associate, Department of Political Science, Massachusetts Institute of Technology. 30 Wadsworth St, Cambridge, MA 02142.
    Email:
      \href{mailto:sshin3@mit.edu}{sshin3@mit.edu} URL:
      \href{https://sooahnshin.com}{https://sooahnshin.com}} \and
    D. James Greiner\thanks{Honorable S.  William Green Professor of
      Public Law, Harvard Law School, 1525 Massachusetts Avenue,
      Griswold 504, Cambridge, MA 02138.} \and Ryan Halen\thanks{Data
      Analyst, Access to Justice Lab at Harvard Law School, 1607
      Massachusetts Avenue, Third Floor, Cambridge, MA 02138.}}

\fi

\if1\blind
\title{\bf \tit}

\fi

\maketitle

\begin{abstract}
  Risk assessment instruments, also known as ``risk scores,'' are
  widely used in high-stakes decision-making settings such as medicine
  and the criminal justice system. A risk score predicts the
  likelihood of an undesired outcome if no intervention is made.
  Thus, a sufficiently high score is often interpreted as a
  recommendation to intervene. However, risk scores fail to account
  for what would happen if a decision-maker does intervene.  This failure is problematic because effective decision making requires consideration of both (or multiple) potential outcomes.  
  We propose ``triage
  scores,'' which are based on additive counterfactual utilities and
  include risk scores as a special case.  Unlike risk scores, triage
  scores can incorporate counterfactual outcomes under alternative
  decisions, enabling decision makers to incorporate a wide range of
  ethical and practical factors. We illustrate the use of triage
  scores with an application to our own randomized controlled trial
  evaluating a pretrial risk score.  Our analysis
  demonstrates that triage scores are able to capture rich utility
  structures and yield substantively distinct results
  regarding policy evaluation and learning.
\end{abstract}

\section{Introduction}


Today, data-driven algorithms are deeply embedded in decision-making
systems. In high-stakes settings such as medicine and the criminal
justice system, human decision makers frequently rely on
recommendations produced by risk assessment instruments, commonly
referred to as ``risk scores'' \citep{stevenson2018assessing,
  chen2021probabilistic}. 
These scores typically classify the
predicted probability of an undesirable outcome (e.g., illness or
rearrest) under a baseline decision of no intervention. In practice,
individuals with sufficiently high scores are often flagged for
intervention, such as admission to intensive care or the imposition of
cash bail. This workflow implicitly treats prediction under no
intervention as a proxy for prescription. However, identifying
individuals who are ``high risk'' in the absence of intervention
provides no information about how those same individuals would respond
if an intervention were applied.

This one-sided focus of risk scores is fundamentally
misaligned with the objective of maximizing overall welfare in such
decision-making systems. Effective decision making requires evaluating
what would happen under alternative choices, rather than focusing
solely on outcomes under a single baseline of no intervention. In
pretrial settings, for example, a central question is not simply the
probability that a defendant will be rearrested if assigned no cash bail, but how that probability would change under cash versus no cash (perhaps with differing monitoring and support conditions). In addition, decision makers must weigh the societal
costs of recidivism together with the financial, ethical, and other
costs associated with cash bail and pretrial detention under each
possible course of action. Because risk scores are indexed to
a single baseline decision, they cannot distinguish between
individuals for whom an intervention would meaningfully change
outcomes and those for whom it would not. This failure to consider
counterfactual outcomes also limits our ability to reason about
ethical and practical tradeoffs, such as the regret associated with
unnecessarily detaining an individual who would not have been
rearrested if released.

In this paper, we propose a framework based on ``triage scores,'' a class of {\it
  counterfactual} risk assessment instruments designed to align more closely decision making with the objective of maximizing expected
utility. 
Although we refer in this paper to ``triage scores,'' our focus here is on the development of a counterfactual and evaluative framework for optimizing decision making that incorporates risk from alternative decisions; we do not here actually develop a triage score for any particular setting nor discuss how to do so.
Rather than summarizing risk under a single baseline
potential outcome, triage scores are constructed from counterfactual
utilities that depend on the joint distribution of potential outcomes
under alternative decisions. Our formulation builds on recent advances
in statistical decision theory with counterfactual utilities
\citep[e.g.,][]{coston2020counterfactual, benm:imai:jian:23,
  ben2024does, mueller2023personalized, christy2024starting,
  koch2025statistical}. By explicitly modeling both the realized
outcome under the chosen decision and the counterfactual outcomes
under alternative decisions, triage scores allow utilities to encode
considerations such as regret from unnecessary detention or failure to
prevent a crime. Under an additive counterfactual utility structure
and the standard assumption of unconfoundedness, we point identify the
expected utility of a decision-making system, enabling systematic
evaluation of existing policies as well as learning of new,
optimal decision rules.

Conceptually, triage scores generalize standard risk scores. In the
simplest binary setting with two decisions (e.g., no cash bail versus cash bail) and a binary outcome (e.g., at least one pretrial rearrest versus no arrest),
conventional risk scores depend solely on the baseline potential
outcome. In contrast, triage scores operate at the level of principal
strata defined by the joint potential outcomes \citep{fran:rubi:02}
and assign a utility to each possible decision within a stratum. This
framework enables decision makers to distinguish, for example, among
defendants who would not be rearrested under either no cash bail or
cash bail (\emph{safe}), those for whom cash bail would be
counterproductive (\emph{backlash}), those for whom cash bail would
prevent crime (\emph{preventable}), and those likely to reoffend
regardless of the decision (\emph{hopeless}).

We develop a general statistical framework for evaluating and learning from triage scores using data generated by human decision makers with or
without algorithmic recommendations. We apply this framework to assess
and improve the Public Safety Assessment (PSA), a pretrial risk assessment instrument designed to inform initial release/bail decisions, using data from our own randomized controlled trial
\citep[see][for a related RCT in Wisconsin]{imai2023experimental}.  In
this RCT, algorithmic recommendations were randomly assigned across arrested individuals under a single-blind design, such that defendants were unaware
of their assignment.  We also observe the full set of information
available to judges at the time of decision, rendering plausible the assumption that our model of judicial decisions is unconfounded.

Under these assumptions, we establish identification results for the
expected counterfactual utility of three decision-making systems:
human-alone, human-with-AI, and AI-alone. We construct
semiparametric estimators based on augmented inverse probability
weighting to estimate these utilities, adjusting for the full set of
information available to judges at the time of decision. In
particular, we incorporate GenAI-powered inference (GPI) to account
for unstructured confounding information contained in probable-cause
affidavits \citep{imai2025genai}. Finally, we show how to estimate an
optimal decision rule under the triage score utility framework by
solving an empirical utility maximization problem over a class of
feasible decision policies.

The remainder of the paper is organized as
follows. Section~\ref{sec:empirical} describes the Utah experiment and
the PSA and introduces a statistical
decision-theoretic framework for pretrial decision
making. Section~\ref{sec:triage} formally defines triage scores in settings with binary or multivalued decisional choices and clarifies their
relationship to standard risk scores. Section~\ref{sec:statistical}
presents our identification results and estimation strategy and
describes how to learn optimal decision rules from data under the
triage score utility framework.
Section~\ref{sec:analysis} reports empirical findings from the Utah
application under a range of utility specifications.  Finally,
Section~\ref{sec:conclusion} concludes by discussing broader
implications for the design and evaluation of algorithm-assisted
decision-making systems.

A short note regarding terminology: for simplicity and brevity, we sometimes refer to a judge's pretrial decision as `release' or `release own recognizance (ROR)'. More accurate phrasing would be `no cash bail required for release in this judicial proceeding.' We clarify this distinction because a judge's decision not to assign cash bail after a particular arrest does not always mean that the defendant will achieve release. Instead, a defendant might remain incarcerated because the present arrest violated terms of pretrial release on a different set of charges, or because the U.S. Immigration and Customs Enforcement may want to begin deportation proceedings, or because the defendant requires detoxing or a psychiatric evaluation. Nevertheless, `no cash bail required for release in this judicial proceeding' is cumbersome, so at the risk of some distortion we use the shorthand `release' for brevity.
The key point is that `release' in this setting refers to the judge's decision, not necessarily what the defendant experiences.

\subsection*{Related Literature}

Risk assessment tools are widely used across high-stakes decision-making domains. In the criminal
justice system, risk scores are routinely employed to inform pretrial and sentencing decisions
\citep[e.g.,][]{stevenson2018assessing, albright2019if}. In clinical medicine, risk prediction models
play a central role in diagnosis and treatment decisions \citep[e.g.,][]{dagostino2008general,
chen2021probabilistic}. Similar tools are also prevalent in consumer finance, where credit risk
models guide lending decisions \citep[e.g.,][]{einav2013impact, dobbie2021measuring}.

There has been a substantial body of research on the development and evaluation of risk scores across these
domains \citep[e.g.,][]{goel2016personalized, kleinberg2018human, coston2021characterizing,
angelova2025algorithmic}. A central methodological challenge in evaluating decision-making systems
is the selective labels problem \citep{lakkaraju2017selective}: outcomes are only observed for
individuals who receive a particular decision, complicating counterfactual evaluation under
alternative decisions. Existing approaches address this challenge using algorithmic thresholds and
staggered rollouts \citep[e.g.,][]{berk2021fairness, stevenson2022algorithmic, guerdan2023ground},
survey-based evaluations \citep[e.g.,][]{miller2013practitioner, skeem2020impact}, or quasi-random
assignment to decision-makers \citep[e.g.,][]{dobbie2018effects, arnold2022measuring}.

Most closely related to our work is \cite{ben2024does}, which
formulates the evaluation of decision-making systems using a
confusion-matrix representation grounded in the potential outcomes
framework. That approach considers experimental settings, in which the
provision of algorithmic recommendations is randomized, and can be
extended to observational settings under unconfoundedness between
recommendation provision and potential outcomes. Our work differs in
two key respects. First, while \cite{ben2024does} focuses on a
baseline potential outcome, consistent with standard risk assessment
frameworks, we consider joint potential outcomes under alternative
decisions, resulting in the construction of triage scores that
generalize standard risk scores.  Second, we consider an additive
counterfactual utility structure under the assumption of
unconfoundedness between human decisions and potential outcomes.

As noted earlier, the unconfoundedness assumption is plausible in our
application because we observe all information available to judges at
the time of decision making, including probable cause affidavits that
contain rich unstructured text describing the arrest. To adjust
flexibly for such high-dimensional text confounders, we draw on recent
advances in causal inference with texts \citep[see][for a
review]{feder2022causal}.  In particular, we apply GenAI-powered
inference methods
\citep{imai2024causalrepresentationlearninggenerative}, using internal
representations from open-source large language models \citep[e.g.,
Llama3 developed by][]{touvron2023llama} to estimate deconfounder
functions \citep{imai2025genai}.

Finally, our work builds on and contributes to a growing literature on
the evaluation and learning of decision-making systems using
counterfactual utilities \citep[e.g.,][]{coston2020counterfactual,
  rambachan2022robust, benm:imai:jian:23, mueller2023personalized,
  ben2024does, christy2024starting, koch2025statistical}. In
particular, we adapt the identification strategy based on additive
counterfactual utilities developed in \cite{koch2025statistical} to
algorithm-assisted decision-making settings, and further develop
semiparametric estimators for evaluation and optimal policy learning.

\section{Empirical Application}
\label{sec:empirical}

We now introduce an empirical application that motivates the proposed
methodology.  Our application is based on our own RCT in Utah, which was designed to evaluate the
value of a prominent risk score used in criminal justice system.
Below, we briefly explain the design of this RCT and present a basic
descriptive analysis of data.  Finally, we discuss questions based on
this RCT that motivate the development of the triage score utility framework.

\subsection{A Randomized Controlled Trial in Utah}

This field RCT was part of a series of experiments designed to assess
the effect of providing a risk score called the PSA to judges making bail and pretrial release condition
decisions shortly after arrest. 
All sites used similar RCT designs. In Utah, which provided the data for our application here, judges either did or did not receive the PSA when making the first post-arrest decision regarding release, bail, and monitoring conditions. Randomization was by defendant, meaning a defendant remained in either a judge-receives-PSA condition or a judge-does-not-receive-PSA condition for all their arrests.
Primary outcomes included new criminal activity (NCA), denoting a
$0$--$1$ variable for whether the defendant was arrested or cited for
an incarceration-eligible offense; new violent criminal activity
(NVCA), the same as NCA but for only violent offenses; and failure to
appear (FTA), denoting a $0$--$1$ variable for whether the court
issued a bench warrant stemming from a defendant's missing a required
court date. By definition, NCA, NVCA, and FTA could not occur during
time periods on which the defendant was incarcerated.

The Utah RCT took place in four counties, Davis County, Utah County,
Weber County, and Morgan County. As is true of all field operations,
the Utah sites had its own esoteric features. 
First, Utah's automated systems could produce the PSA only for arrestees who did not have records from other states that the Utah software could not machine read (and translate into PSA inputs), meaning that an unknown but not huge fraction of arrestees were not included in the study population. 
Second, Utah's
pretrial system required a judge to make the first consequential
decision regarding pretrial release and bail, not at a live hearing,
but rather upon a review of paper files only and concurrent with the
judge's determination of whether probable cause existed for the
arrest. Because there was no live hearing, there were no statements or
arguments from the defendant, from defense counsel, from the
prosecution, or from anyone else. The judge made all decisions based
on files that consisted of the charges, the law enforcement probable
cause (PC) affidavit, whatever online criminal history search the
judge decided to conduct, and the PSA (if the case was randomized to
the PSA-present condition).

The law enforcement PC affidavit was a sworn statement, almost always
from the arresting officer, describing in a paragraph or two the
circumstances of the arrest. The PC affidavit was supposed to
particularize the reasons why the officer believed that the defendant
had committed a criminal offense. Constitutional law (Gerstein
v. Pugh, 420 U.S. 103 (1975)) required a judicial officer to conduct
an independent review shortly after arrest of a law enforcement
officer's warrantless decision to arrest. Utah judges reviewed the
materials listed above by logging into an online system and indicating
their decisions electronically.

In cooperation with the Utah judiciary, we obtained the PC affidavits
as well as the other materials that judges observed when making their
decisions. Thus, we were able to observe everything that the judge
observed for each decision, rendering plausible the assumption,
described below, of unconfounded decisions conditional on observed
covariate information.

\subsection{Public Safety Assessment Instrument (PSA)}

The PSA is a set of three integer scores categorizing a defendant's
risk of FTA, NCA, and NVCA. The FTA and NCA scores run from $1$ to
$6$, while the NVCA metric takes the form of a $0$--$1$ flag. Higher
numbers corresponded to higher risk (according to the PSA). Eight
criminal history factors plus age serve as the PSA's inputs (see
\url{https://www.advancingpretrial.org/about-the-psa/}). The PSA
scores, which are the same for all jurisdictions, serve as inputs to a
jurisdiction-specific Decision Making Framework (DMF), which
incorporates local circumstances and values to transform the scores
into a recommendation for the judge regarding release, bail, and
monitoring conditions. The output of the PSA-DMF System (we refer in
this paper to the ``PSA'' for brevity) takes the form of a paper
printout or a computer file that reports the PSA scores, the values of
the nine inputs, the criminal history events that gave rise to those
values, and the DMF recommendation. A philanthropic foundation called
Arnold Ventures funded scientists to construct the PSA, and as of this
writing, dozens of jurisdictions across the United States provide it
to judges to guide initial release decisions.

\subsection{Data}

The dataset comprises a total of 9,855 cases, and we restrict the
sample to first-arrest cases (i.e., for arrestees who are arrested
multiple times in our dataset, we only consider their first arrest
during the study period). Among arrestees in this analytic sample,
$42\%$ are non-white males, $13\%$ are non-white females, $33\%$ are
white males, and $12\%$ are white females. The provision of the PSA
recommendation ($Z_i$) is randomized. The decision-maker in the
treated group is a judge who receives the PSA recommendation
($Z_i = 1$), whereas the decision-maker in the control group is the
same judge but without the PSA recommendation ($Z_i = 0$).  We
evaluate both the provision of the
PSA (which is what we randomized) as well as the DMF's recommendation,
defined as a dichotomized version of the PSA recommendation ($R_i$): whether it recommends release
on own recognizance (ROR) or not.

Table~\ref{table:decision} presents contingency tables comparing
decisions made under two different conditions. The left panel compares
decisions made by the judge without the PSA recommendation (control
group) to the PSA recommendation, and the right panel compares
decisions made by the same judge with the PSA recommendation (treated
group) to the PSA recommendation.  The judge is generally harsher than
the PSA. $52.7\%$ of cases in the control group and $46.1\%$ in the
treated group receive cash bail even when the PSA recommended ROR.  In contrast, when the PSA did not recommend ROR, the judge assigned ROR in about 5\% of cases in both treated and
control groups.

\begin{table}[t!] 
\centering\setlength{\tabcolsep}{3pt}
\begin{tabular}[t]{ll|ll}
\toprule
 & & \multicolumn{2}{c}{PSA} \\
 & & \multicolumn{1}{c}{ROR} & \multicolumn{1}{c}{Non-ROR}\\
\midrule
\multirow{2}{*}{\shortstack{Judge\\ without PSA}}&
ROR & 13.3\% (671) & \hspace{.025in} 5.0\% (253)\\
& Cash & 52.7\% (2655) & 29.0\% (1462)\\
\bottomrule
\end{tabular}
\begin{tabular}[t]{ll|ll}
\toprule
 & & \multicolumn{2}{c}{PSA} \\
 & & \multicolumn{1}{c}{ROR} & \multicolumn{1}{c}{Non-ROR}\\
\midrule
\multirow{2}{*}{\shortstack{Judge\\with PSA}}&
ROR & 18.0\% (867) & \hspace{.025in} 5.6\% (270)\\
& Cash & 46.1\% (2217) & 30.3\% (1460)\\
\bottomrule
\end{tabular}
\caption{Comparison of judge's decisions (rows) and PSA
  recommendations (columns).  ``ROR'' refers to release on own
  recognizance while ``Cash'' indicates the imposition of cash bail.
  The left panel is for the control group of cases where PSA was not
  provided, whereas the right panel is for the treated group where the
  judge was given PSA.}
\label{table:decision}
\end{table}

Table~\ref{table:outcome} reports the proportion of new criminal
activity (NCA) under each combination of the judge's decision and the
PSA recommendation.  For example, in the left panel, among cases in
which both the judge and the PSA agreed upon ROR, $9.8\%$ of
arrestees were rearrested for NCA within two years of
randomization. Overall, the NCA proportion is higher following cash
bail decisions or non-ROR recommendations than following ROR decisions or
recommendations. However, these raw associations do not account for
counterfactual outcomes and therefore cannot answer questions such as
how often judges make correct decisions, whether PSA recommendations
improve judicial decision making, or how accurate the PSA itself
is. In particular, we do not observe what would have happened in cash
bail cases had ROR decision been given instead. To address these
limitations, this paper proposes a statistical decision-theoretic
framework that enables researchers and policymakers to evaluate
decision quality and the value of algorithmic recommendations.

\begin{table}[t!]
\centering\setlength{\tabcolsep}{3pt}
\begin{tabular}[t]{ll|ll}
\toprule
 & & \multicolumn{2}{c}{PSA} \\
 & & \multicolumn{1}{c}{ROR} & \multicolumn{1}{c}{Non-ROR}\\
\midrule
\multirow{2}{*}{\shortstack{Judge\\ without PSA}}&
ROR & 9.8\% & 17.4\%\\
& Cash & 14.0\% & 23.1\%\\
\bottomrule
\end{tabular}
\begin{tabular}[t]{ll|ll}
\toprule
 & & \multicolumn{2}{c}{PSA} \\
 & & \multicolumn{1}{c}{ROR} & \multicolumn{1}{c}{Non-ROR}\\
\midrule
\multirow{2}{*}{\shortstack{Judge\\with PSA}}&
ROR & 11.8\% & 15.9\%\\
& Cash & 14.8\% & 23.9\%\\
\bottomrule
\end{tabular}
\caption{New criminal activity proportion by decision. Left: control group. Right: treated group.}
\label{table:outcome}
\end{table}

\subsection{Statistical Decision Theory with Counterfactual Utilities}

We formalize the evaluation and potential improvement of PSA by
applying the statistical decision theory based on counterfactual
utilities \citep{wald1950statistical,koch2025statistical}. We first
consider the {\it standard} utilities by specifying a utility for each
decision and its consequence without considering counterfactual
outcomes.  In our application, a judge chooses one among the following
four alternative decisions: release on own recognizance (\ROR), low
cash bail (\LCB), high cash bail (\HCB), and remand (\REM).
`Remand' here means that the judge decides that the defendant must remain incarcerated without the opportunity to post bail; in Utah, \REM decisions were uncommon but not so uncommon that we could ignore them. 
Our outcome variable of interest is binary, indicating whether or not a
defendant is rearrested for a new criminal activity upon release
(\NCA).  
The symbol $\neg$ denotes negation; for example, $\neg\NCA$
denotes no \NCA. 
Note that an arrestee may not be released immediately due to a
\REM{} decision or failure to pay cash bail. However, release may
subsequently occur if the initial decision is modified or overturned
at a later court hearing.

This setup implies that we have two possible outcomes under each of
four decisions, requiring the specification of eight utilities.  Here,
we provide an example of standard utilities.
\begin{example}[Additive standard utilities] \label{ex:standard} We
  may specify the utility for the \REM{} decision followed by the
  \NCA{} outcome as an additive function of two costs, which
  negatively contribute to the utility:
  \begin{equation*}
    \texttt{Utility}(\REM, \NCA) 
    \ = \ - ( \texttt{Cost}_{\REM} + \texttt{Cost}_{\NCA} ),
  \end{equation*}
  where $\texttt{Cost}_{\REM} \ge 0$ is the cost associated with the
  \REM{} decision, which may include the cost of detention on both the
  public and the defendant, and $\texttt{Cost}_{\NCA} \ge 0$ is the
  cost associated with the \NCA{} outcome, which may include the
  societal cost of new criminal activity. One can similarly define the
  other seven utilities.
\end{example}

In this paper, we apply the {\it counterfactual} statistical decision
theory \citep[e.g.,][]{coston2020counterfactual, benm:imai:jian:23,
  mueller2023personalized, ben2024does, christy2024starting,
  koch2025statistical}.  This framework generalizes the standard
statistical decision theory by incorporating counterfactual outcomes
under alternative decisions when specifying each utility.  Thus, the
counterfactual decision theory allows for the notion of ``regret'' by
comparing the outcome under each decision with counterfactual outcomes
under alternative decisions
\citep[e.g.,][]{bell1982regret,loomes1982regret}. For example, when
specifying a utility for the \REM{} decision, we consider the
counterfactual outcomes under different decisions (e.g., \ROR{}
decision) as well as the outcome that would be realized under the
\REM{} decision.

Thus, in our application, while there is only one standard utility
that can be specified for each of the eight decision-outcome pairs,
there are a total of 64 $(=2^6)$ counterfactual utilities that can
possibly be specified if one wishes to place no restriction.  However,
substantial simplification is required to facilitate interpretation
and practical use.  In this paper, we adopt the additive
counterfactual utility framework of \cite{koch2025statistical}, in
which the utility for each decision-outcome pair equals the sum of the
standard utility and separate counterfactual utilities under
alternative decisions.
\begin{example}[Additive counterfactual utilities] 
  Consider the utility for the \REM{} decision followed by the \NCA{}
  outcome.  Suppose that the counterfactual outcomes for all three
  alternative decisions are identical and are equal to the absence of
  \NCA{} event. Then, an additive counterfactual utility is given by,
  \begin{equation*}
    \texttt{Utility}(\REM, \NCA) + 
    \widetilde{\texttt{Utility}}(\ROR, \neg \NCA) + \widetilde{\texttt{Utility}}(\LCB, \neg \NCA) + \widetilde{\texttt{Utility}}(\HCB, \neg \NCA),
  \end{equation*}
  where the first term is the standard utility whose example is given
  in Example~\ref{ex:standard}, and the other three terms are
  counterfactual utilities associated with alternative decisions and
  counterfactual outcomes under those decisions.
\end{example}

To operationalize each counterfactual utility, we may use the notion
of regret, which weighs a cost of an alternative decision and another
cost of the counterfactual outcome that would have resulted under this
alternative decision.
\begin{align*}
  \widetilde{\texttt{Utility}}(\ROR, \neg \NCA)
  &= -\texttt{Regret}^\ROR_{\neg \NCA} \times (\texttt{Cost}_{\ROR} + \texttt{Cost}_{\neg \NCA}) \\
  \widetilde{\texttt{Utility}}(\LCB, \neg \NCA)
  &= -\textrm{Regret}^\LCB_{\neg \NCA} \times (\texttt{Cost}_{\LCB} + \texttt{Cost}_{\neg \NCA}) \\
  \widetilde{\texttt{Utility}}(\HCB, \neg \NCA)
  &= -\textrm{Regret}^\HCB_{\neg \NCA} \times (\texttt{Cost}_{\HCB} + \texttt{Cost}_{\neg \NCA})
\end{align*}
where
$\texttt{Regret}^\ROR_{\neg \NCA}, \textrm{Regret}^\LCB_{\neg \NCA},
\textrm{Regret}^\HCB_{\neg \NCA} \in [0, 1]$ are the weights (relative
to the standard utility) given to each counterfactual utility.  Here,
a greater regret implies a larger influence of counterfactual outcome
under an alternative decision and its associated counterfactual
outcome.  For example, if a judge could have released an arrestee and
achieved the best outcome (no \NCA), then the value of regret, i.e.,
$\texttt{Regret}^\ROR_{\neg \NCA}$, may be greater.

Once the utilities are fully specified, we can statistically evaluate
a different decision-making system by estimating its expected
utilities, which is the average utility across all arrestees in a
target population.  The expected utilities enable us to compare the
empirical performance of different decision-making systems. In our
application, such a system includes one in which a judge makes the
decision without help of PSA and the other in which a judge is
provided with PSA.  Beyond statistical evaluation, we can also learn
an optimal decision-making rule from the observed data by finding a
decision rule that maximizes the expected utility.  In the next two
sections, we will develop these methods.  In
Section~\ref{sec:analysis}, we will revisit this RCT and apply our methodology to evaluate the judge's decision with or without PSA. 
We will also derive an optimal decision rule under the triage score utility framework.

\section{Triage Score}
\label{sec:triage}

In this section, we formally develop the triage score framework.  Unlike risk
scores, this new counterfactual risk assessment instrument leverages
the full set of potential outcomes. To develop intuition, we begin by
introducing the simplest case with binary decisions and outcomes.  We
then generalize our formulation to handle multi-valued decisions and
outcomes.

\subsection{Binary Case}
\label{sec:binary}

We first consider settings with a binary decision $D \in \{0, 1\}$
and a binary outcome $Y \in \{0, 1\}$. In the context similar to that
of our application, $D = 0$ represents a decision to release an
arrestee on their own recognizance, while $D = 1$ represents a
decision to impose cash bail. The outcome $Y = 1$ indicates an
undesirable event, such as a rearrest for new criminal
activity. Lastly, let $Y(d)$ denote the potential outcome under
decision $D = d$, for $d \in \{0,1\}$.  For example, $Y(1)$ represents
the outcome that would be realized if the decision were $D = 1$. The
observed and potential outcomes are linked through the relation
$Y = D Y(1) + (1 - D) Y(0)$, which reflects the standard consistency
assumption \citep{rubi:90}. Relaxing this assumption, for instance by allowing for spillover effects, is beyond the scope of this paper.

We now formalize the proposed counterfactual risk assessment
instrument, which we call the ``triage score,'' within the framework
of statistical decision theory based on counterfactual utilities
\citep{koch2025statistical}. Our approach generalizes existing
counterfactual risk assessment instruments
\citep[e.g.,][]{coston2020counterfactual,ben2024does} by considering
the joint potential outcomes $(Y(0), Y(1))$ rather than focusing
solely on the baseline potential outcome $Y(0)$.

In the case of binary decisions and outcomes, the joint potential
outcomes define four principal strata, corresponding to all possible
combinations of $(Y(0), Y(1)) = (y_0, y_1)$ \citep{fran:rubi:02}. For
the purpose of exposition, we consider the context of our application
and refer to these four strata as follows:
\begin{itemize}
\item {\it Safe} \( (Y(0), Y(1)) = (0, 0) \): a defendant who would
  not be rearrested for a new crime under either decision
\item {\it Backlash} \( (Y(0), Y(1)) = (0, 1) \): a defendant who
  would be rearrested only if cash bail is imposed
\item {\it Preventable} \( (Y(0), Y(1)) = (1, 0) \): a defendant who
  would be rearrested only if released on their own recognizance
\item {\it Hopeless} \( (Y(0), Y(1)) = (1, 1) \): a defendant who would
  be rearrested regardless of the decision
\end{itemize}

To construct the triage score, we assign a utility to each possible
decision within every principal stratum, resulting in a total of eight
utility parameters to specify. However, in this fully general
formulation, the corresponding expected utility is not identifiable
because we do not observe two potential outcomes at the same time. To
address this, we impose an additivity assumption that restricts
counterfactual utilities to be additive in the potential outcomes
without interaction terms. \cite{koch2025statistical} show that this
additivity condition is both necessary and sufficient for the
identification of expected counterfactual utilities under the standard
assumption of unconfoundedness.

\begin{table}[t!]
\centering 
    \begin{tabular}{l|c|c|c|}
  \multicolumn{2}{c}{} & \multicolumn{2}{c}{\textbf{Decision}} \\\cline{3-4}
  \multicolumn{2}{c|}{} 
    & \multirow{2}{*}{Release $(D^\ast = 0)$} 
    & \multirow{2}{*}{Cash bail $(D^\ast = 1)$} \\
  \multicolumn{2}{c|}{} & & \\ \cline{2-4}

  & \multirow{2}{*}{\shortstack{Safe\\$(Y(0) = 0,\ Y(1) = 0)$}} 
    & \multirow{2}{*}{$u_{\nocrime}^{\ROR} + \tilde{u}_{\nocrime}^{\cash}$} 
    & \multirow{2}{*}{$u_{\nocrime}^{\cash} + \tilde{u}_{\nocrime}^{\ROR}$} \\
  & & & \\ \cline{2-4}

  \multirow{3}{*}{\textbf{Principal}}
  & \multirow{2}{*}{\shortstack{Backlash\\$(Y(0) = 0,\ Y(1) = 1)$}} 
    & \multirow{2}{*}{$u_{\nocrime}^{\ROR} + \tilde{u}_{\crime}^{\cash}$} 
    & \multirow{2}{*}{$u_{\crime}^{\cash} + \tilde{u}_{\nocrime}^{\ROR}$} \\
  & & & \\ \cline{2-4}

  \textbf{Strata}
  & \multirow{2}{*}{\shortstack{Preventable\\$(Y(0) = 1,\ Y(1) = 0)$}} 
    & \multirow{2}{*}{$u_{\crime}^{\ROR} + \tilde{u}_{\nocrime}^{\cash}$} 
    & \multirow{2}{*}{$u_{\nocrime}^{\cash} + \tilde{u}_{\crime}^{\ROR}$} \\
  & & & \\ \cline{2-4}

    & \multirow{2}{*}{\shortstack{Hopeless\\$(Y(0) = 1,\ Y(1) = 1)$}} 
    & \multirow{2}{*}{$u_{\crime}^{\ROR} + \tilde{u}_{\crime}^{\cash}$} 
    & \multirow{2}{*}{$u_{\crime}^{\cash} + \tilde{u}_{\crime}^{\ROR}$} \\
  & & & \\ \cline{2-4}
\end{tabular}
\caption{Additive counterfactual utilities in the case of binary
  decision and outcome. For each principal stratum, we specify additive
  utilities for the realized outcome under a given decision and the
  counterfactual outcome under the alternative decision. For example,
  in the safe stratum, the utility under the release decision is the
  sum of utility for the realized outcome (no crime)
  $u_{\nocrime}^{\ROR}$ and the utility for the counterfactual outcome
  (no crime) under the alternative decision (cash bail)
  $\tilde{u}_{\nocrime}^{\cash}$.}
\label{tbl:add_utility}
\end{table}

Table~\ref{tbl:add_utility} presents the additive counterfactual
utilities for the case of binary decisions and binary outcomes. These
utilities consist of two
components---$u_{y}^{d}$ and
$\tilde{u}_{y}^{d^\prime}$, which represent the standard and
counterfactual utilities, respectively, for $d, d^\prime \in \{\ROR,
\cash\}$, $d \ne d^\prime$, and $y \in \{\nocrime,
\crime\}$ where $\crime$ denotes a rearrest and
$\nocrime$ indicates no rearrest. We use
$D^\ast$ to denote a generic decision, to distinguish it from the
observed decision $D$ in the data. The term
$u_{y}^{d}$ represents the standard utility for decision $D^\ast =
d$ when the corresponding outcome is $Y(d) =
y$. In contrast,
$\tilde{u}_{y}^{d^\prime}$ represents the counterfactual utility that
would result under the same decision $D^\ast =
d$, but assuming the counterfactual outcome $Y(d^\prime) =
y$ would have occurred under the alternative decision $D^\ast =
d^\prime$ with $d \ne
d^\prime$. The additivity assumption implies that these two utility
components do not interact. A formal definition of additive
counterfactual utilities is presented in the next subsection, where we
further generalize this framework to accommodate non-binary decisions
and outcomes \citep[see also][for further
details]{koch2025statistical}.

As mentioned earlier, this counterfactual utility component can be
interpreted as capturing the notion of ``regret'' in decision making
\citep[e.g.,][]{bell1982regret,loomes1982regret}. Consider, for
example, the \emph{Safe} and \emph{Preventable} cases when a judge
imposes cash bail, $D^\ast = 1$. In both cases, the observed outcome
under cash bail is no rearrest, $Y(1) = 0$. However, in the
\emph{Safe} case, the counterfactual outcome under release would also
have been no rearrest $Y(0)=0$, whereas in the \emph{Preventable}
case, release would have resulted in a rearrest $Y(0)=1$. Standard
statistical decision theory does not distinguish between these two
cases, since the observed outcome is the same under the chosen
decision. In contrast, the proposed counterfactual risk assessment
framework may assign a lower utility to the \emph{Safe} case than to
the \emph{Preventable} case, because the counterfactual outcomes
differ. In this sense, imposing cash bail in the \emph{Safe} case
represents an unnecessarily harsh decision to avoid an outcome that
would not have occurred anyway.

Based on the above additive counterfactual utility formulation, we can
formally define the expected utility of any decision rule $D^\ast$ by
marginalizing over the joint distribution of principal strata and
decisions:
\begin{equation*}
    \overline{U}(u; D^\ast)
    := 
    \sum_{d=0}^{1} \sum_{y_{d}=0}^{1} \sum_{y_{1-d}=0}^{1}
    \E[(u_{y_{d}}^{d} + \tilde{u}_{y_{1-d}}^{1-d})
    \Pr(D^\ast = d, Y(d) = y_{d}, Y(1-d) = y_{1-d} \mid \bX)] 
\end{equation*}
where
$\bu = \{u_{y}^{d}, \tilde{u}_{y}^{1-d}\}_{y \in \{0,1\}, d \in
  \{0,1\}}$ denotes the utility parameters that define $u$, $\bX \in \cX$
represents pre-treatment covariates whose support is $\cX$,
and the expectation is taken over the distribution of $\bX$.

Unfortunately, even under the additivity assumption, the joint
distribution of principal strata and decisions is not identified due
to the {\it selective labels problem}: for the cases in which the
decision maker issues a cash bail decision, we do not observe the
counterfactual outcome under a release decision, and vice versa.
Nevertheless, we show that the expected utility can still be
identified---without the knowledge of the joint distribution---under
the unconfoundedness assumption with additive counterfactual
utilities.

Once the expected utility is identified, we can evaluate any decision
making system using the utility parameters $\bu$ specified by the
researcher.  As detailed in Section~\ref{sec:statistical}, our
empirical evaluation is based on a doubly robust estimator of the
expected utility with a rich set of pretreatment covariates $\bX$,
including prior criminal history, demographic information, and
PC affidavits.  Furthermore, we can learn the optimal
decision rule that maximizes the expected utility within this
framework. Under additive utilities, this can be done without
identifying the full conditional distribution of principal strata.
Specifically, we estimate the decision rule that maximizes the
expected utility under the decision it recommends within a specified
class of decision rules.

Before presenting the proposed statistical evaluation and learning
methodology, we briefly explain how this triage score differs from the
existing counterfactual risk assessment instruments.  We also
generalize the binary case presented above to the general categorical
case.

\subsection{Comparison with the Existing Counterfactual Risk
  Assessment Framework}
\label{sec:comparison}

\begin{table}[t!]
\centering 
    \begin{tabular}{l|c|c|c|c|}
  \multicolumn{3}{c}{} & \multicolumn{2}{c}{\textbf{Decision}} \\\cline{4-5}
  
  \multicolumn{3}{c|}{} 
    & \multirow{2}{*}{Release $(D^\ast = 0)$} 
    & \multirow{2}{*}{Cash bail $(D^\ast = 1)$} \\
  \multicolumn{3}{c|}{} & & \\ \cline{2-5}

  & \multirow{4}{*}{$Y(0) = 0$} 
    & \multirow{2}{*}{\shortstack{Safe\\$(Y(0) = 0,\ Y(1) = 0)$}} 
    & \multirow{4}{*}{\shortstack[l]{$u_{\ROR, \nocrime}$ 
    \\ $=  u_{\nocrime}^{\ROR} + \tilde{u}_{\nocrime}^{\cash}$ 
    \\ $=  u_{\nocrime}^{\ROR} + \tilde{u}_{\crime}^{\cash}$}} 
    & \multirow{4}{*}{\shortstack[l]{$u_{\cash, \nocrime}$ 
    \\ $= u_{\nocrime}^{\cash} + \tilde{u}_{\nocrime}^{\ROR}$ 
    \\ $= u_{\crime}^{\cash} + \tilde{u}_{\nocrime}^{\ROR}$}} \\
  & & & & \\ \cline{3-3}

  \multirow{3}{*}{\textbf{Baseline}}
  && \multirow{2}{*}{\shortstack{Backlash\\$(Y(0) = 0,\ Y(1) = 1)$}} 
    &
    & \\
  & & & & \\ \cline{2-5}

  \textbf{Outcome}
  & \multirow{4}{*}{$Y(0) = 1$} 
  & \multirow{2}{*}{\shortstack{Preventable\\$(Y(0) = 1,\ Y(1) = 0)$}} 
    & \multirow{4}{*}{\shortstack[l]{$u_{\ROR, \crime}$ 
    \\ $= u_{\crime}^{\ROR} + \tilde{u}_{\nocrime}^{\cash}$ 
    \\ $= u_{\crime}^{\ROR} + \tilde{u}_{\crime}^{\cash}$}} 
    & \multirow{4}{*}{\shortstack[l]{$u_{\cash, \crime}$ 
    \\ $= u_{\nocrime}^{\cash} + \tilde{u}_{\crime}^{\ROR}$
    \\ $= u_{\crime}^{\cash} + \tilde{u}_{\crime}^{\ROR}$}} \\
  & & & & \\ \cline{3-3}

  && \multirow{2}{*}{\shortstack{Hopeless\\$(Y(0) = 1,\ Y(1) = 1)$}} 
    & 
    & \\
  & & & & \\ \cline{2-5}
\end{tabular}
\caption{The existing counterfactual risk assessment framework based
  on the baseline potential outcome alone.  When compared to the
  proposed framework shown in Table~\ref{tbl:add_utility}, this
  framework assumes the equality of utilities between (1) the {\it
    Safe} and {\it Backlash} strata, and (2) the {\it Preventable} and
  {\it Hopeless} strata, i.e.,
  $\tilde{u}_{\nocrime}^{\cash} = \tilde{u}_{\crime}^{\cash}$ and
  $u_{\nocrime}^{\cash} = u_{\crime}^{\cash}$.}
\label{tbl:add_utility_special}
\end{table}

As mentioned earlier, the proposed formulation generalizes the
existing risk score framework by considering the joint potential
outcomes rather than the baseline potential outcome alone
\citep{coston2020counterfactual,ben2024does}.  The existing risk score
framework imposes additional constraints, implying that the utilities
must be equal between the {\it Safe} and {\it Backlash} strata and
between the {\it Preventable} and {\it Hopeless} strata because each pair
shares the same baseline potential outcome value, i.e.,
$\tilde{u}_{\nocrime}^{\cash} =
\tilde{u}_{\crime}^{\cash}$ and $u_{\nocrime}^{\cash} =
u_{\crime}^{\cash}$.  In other words, the assumption is that both
counterfactual and standard utilities under cash bail decision remain
identical regardless of their corresponding potential outcomes.  Under
these conditions, we can marginalize over the potential outcomes
$Y(1)$ and obtain a simplified version of the confusion matrix as
shown in Table~\ref{tbl:add_utility_special}.  

By considering the joint potential outcomes, our framework can
distinguish between two scenarios; one in which a cash bail decision
would have prevented a rearrest ($\tilde{u}_{\nocrime}^{\cash}$), and the other
in which a rearrest would have happened regardless of decision
($\tilde{u}_{\crime}^{\cash}$).  In addition, the proposed framework can
differentiate the utility that would result by imposing a cash bail
when the outcome under the decision is a rearrest ($u_{\crime}^{\cash}$) from
the utility of the cash bail decision when a rearrest does not occur
($u_{\nocrime}^{\cash}$).  We note that different parameterization of an additive
counterfactual utility is possible and can lead to different
restrictions.  Nevertheless, this comparison underscores the
importance of incorporating joint potential outcomes when developing a
counterfactual risk assessment instrument.

\subsection{General Case}

We now extend the additive counterfactual utility framework to a more
general setting that allows for multi-valued (i.e., non-binary)
decisions and outcomes.  Consider a setting in which the observed
decision $D$ and generic decision $D^\ast$ take on $K_{D}$ categories,
$D, D^\ast \in \cD := \{0, 1, \ldots, K_{D}-1\}$, and the
outcome variable takes on $K_{Y}$ categories,
$Y \in \cY := \{0, 1, \ldots, K_{Y}-1\}$.

This generalization is important both in theory and practice because,
in the binary case, any additive counterfactual utility specification
can be equivalently represented by a standard utility formulation
(though the interpretation may differ). However, in the non-binary
case, the standard utility framework is not sufficiently expressive to
capture additive counterfactual utilities \citep{koch2025statistical}.
In our empirical application, one could consider a multi-valued decision with
$K_{D}=4$; \ROR{} ($D^\ast=0$), \LCB{} ($D^\ast=1$), \HCB{}
($D^\ast=2$), and \REM{} ($D^\ast=3$).  Similarly, if we distinguish
two types of rearrest, one for new criminal activity and the other for
new {\it violent} criminal activity (\NVCA), we have a multi-valued
outcome with $K_{Y}=3$; no rearrest ($Y=0$), \NCA{} but not \NVCA{}
($Y=1$), and \NVCA{} ($Y=2$).

We use $Y(d)$ to denote the potential outcome under generic decision
$D^\ast = d$, for $d \in \{0, 1, \ldots, K_{D}-1\}$, and the observed
outcome is equal to $Y=Y(D)$ where $D$ is the actual decision.  Under
this generalized setup, we can define the principal strata as the set
of all possible combinations of potential outcomes,
\(\bY = (Y(0), Y(1), \ldots, Y(K_{D}-1))\).  We write
\(\by = (y_{0}, y_{1}, \ldots, y_{K_{D}-1})\), where
$y_{d} \in \cY$ for all $d \in \cD$, to denote a realization of this
principal stratum.  For
example, $(Y(0), Y(1), \ldots, Y(K_{D}-1)) = (0, 0, \ldots, 0)$
represents a case in which, regardless of the decision, the potential
outcome is always no new criminal activity.

Depending on the restrictions imposed on the utility function, the
framework may use either the full principal strata defined by the
joint potential outcomes or a coarsening of those strata based on the
baseline potential outcome alone.  In this paper, we call the former
the triage score framework, whereas the latter is referred to as the risk score
framework.  Under the triage score framework, utilities are assigned
based on both the full vector of potential outcomes and the decision.

\begin{definition}[Triage and Risk Score Frameworks] \label{def:triage_risk_framework}
The triage score framework is a statistical decision-theoretic framework based on counterfactual utilities in which the utility of a decision may depend on the full vector of potential outcomes, 
$\bY = (Y(0), Y(1), \ldots, Y(K_{D} - 1))$, or equivalently on the principal stratum defined by this vector. The risk score framework is a special case, in which utilities depend on the principal stratum only through the baseline potential outcome $Y(d_0)$, where $d_0 \in \cD$ denotes a baseline decision (e.g., release). That is, for each fixed decision $d$, all principal strata sharing the same value of $Y(d_0)$ receive the same utility.

A triage score is a function that summarizes the conditional distribution of the joint potential outcomes given covariates,
\begin{equation*}
  s^{\textsc{triage}}: \cX \to \Delta(\cY^{K_{D}}),
\end{equation*}
where $s^{\textsc{triage}}_{\by}(\bx)$ denotes the predicted probability of principal stratum $\by \in \cY^{K_{D}}$ for units with covariates $\bx$. The utility function assigns a utility to each possible decision within every principal stratum,
\begin{equation*}
    u:\cD \times \cY^{K_{D}} \to \mathbb{R}.
\end{equation*}
Under the risk score framework, this utility function is restricted so that $u(d,\by)=u(d,\by^\prime)$ whenever $y_{d_0}=y_{d_0}^\prime$.
\end{definition}

A construction of triage scores requires a
total of $K_{Y}^{K_{D}}\times K_{D}$ utility parameters as we need to specify a utility for each combination of principal stratum and decision.  To identify
the expected utility under the unconfoundedness assumption, which we
formally introduce in the next section, we again consider the additive
counterfactual utilities defined below.

\begin{definition}[Additive Counterfactual
  Utility] \label{def:add_utility} Additive counterfactual utility
  function is defined as
  $u:\cD \times \cY^{K_{D}} \to \mathbb{R}$ where for $d
  \in \cD$ and $\by \in \cY^{K_{D}}$, 
    \begin{equation*}
        u(d,\by)
        = u_{y_d}^{d}
        + \sum_{\substack{d^\prime=0 \\ d^\prime \ne d}}^{K_{D}-1}\tilde{u}_{y_{d^\prime}}^{d^\prime}.
    \end{equation*}
\end{definition}
In words, an additive counterfactual utility function
$u \in \mathcal{U}^{\textsc{Add}}$ assigns a utility to each decision
$d \in \cD$ within every principal stratum
$\by=(y_{0}, y_{1}, \ldots, y_{K_{D}-1})$, where $y_{d} \in \cY$, and this
utility consists of two components: standard utility $u_{y_d}^{d}$ and
counterfactual utility $\tilde{u}_{y_{d^\prime}}^{d^\prime}$. The term
$u_{y_d}^{d}$ is the portion of utility that is realized when the
decision $D^\ast = d$ is made and the corresponding observed outcome
is $Y = y_d$.  In contrast, $\tilde{u}_{y_{d^\prime}}^{d^\prime}$
represents the remaining portion of utility that is realized when the
decision $D^\ast = d$ is made but the counterfactual outcome is
$Y(d^\prime) = y_{d^\prime}$ under a given alternative decision
$d^\prime \in \cD$ and $d^\prime \ne d$.

This generalizes the additive counterfactual utility introduced in the
previous section for binary decision and binary outcome to settings
with multi-valued decisions and outcomes.  It is possible to further
generalize this by defining the additive counterfactual utilities
conditional on pre-treatment covariates $\bX = \bx$ as it is possible
to assign different utility values, depending on individual
characteristics, i.e., $u:\cD \times \cY^{K_{D}} \times \cX \to \mathbb{R}$.  
To simplify notation and focus on the core idea, we
do not condition on covariates in the utility function throughout this
paper.  Nevertheless, all the results presented below can be readily
extended to the cases in which utilities depend on the pre-treatment
covariates.

The additive counterfactual utility in
Definition~\ref{def:add_utility} is a special case of a more general
formulation in \cite{koch2025statistical}, in which we assume a zero
intercept for the utility function within each principal stratum. That
is, the utility for all possible interactions of potential outcomes
that do not depend on the realized decision is assumed to be zero.  By
applying the same proof as in Corollary 2 of
\cite{koch2025statistical}, it can be shown that the additivity
assumption imposed on the counterfactual utility in
Definition~\ref{def:add_utility} is both necessary and sufficient for
the point identification of the expected utility under the
unconfoundedness assumption.  For completeness, we provide the
definition of expected additive counterfactual utility,
\begin{equation*}
    \overline{U}(u; D^\ast) = \E[U(u; D^\ast\mid \bX)],
\end{equation*}
where the conditional expected utility of a decision $D^\ast$ given
the pre-treatment covariates $\bX$ is defined as,
\begin{align*}
    & U(u; D^\ast\mid \bX) \\
    := \ & 
    \E\left[
    \sum_{d=0}^{K_{D}-1} \sum_{y_{0}=0}^{K_{Y}-1} \ldots \sum_{y_{K_{D}-1}=0}^{K_{Y}-1}
    u(d,\by) \bbone \{D^\ast = d, Y(0) = y_{0}, \ldots Y(K_{D}-1) = y_{K_{D}-1}\}
    \ \Bigl | \ \bX
    \right] \\
    = \ &  
    \sum_{d=0}^{K_{D}-1} \sum_{y_{0}=0}^{K_{Y}-1} \ldots \sum_{y_{K_{D}-1}=0}^{K_{Y}-1}
    u(d,\by) \Pr(D^\ast = d, Y(0) = y_{0}, \ldots Y(K_{D}-1) = y_{K_{D}-1} \mid \bX).
\end{align*}

Under this additive utility structure, we can define the expected utility of a decision rule $D^\ast$
under the triage score framework as $\overline{U}(u; D^\ast)$ where $u(d,\by)
    =
    u_{y_d}^{d} + \sum_{d^\prime \ne d} \tilde{u}_{y_{d^\prime}}^{d^\prime}$ is allowed to vary across $\by,\by^\prime$
    even when $y_{d_0}=y_{d_0}^\prime$ with $\by=(y_0,\ldots,y_{K_{D}-1})$ and $d_0$ denoting the baseline decision.  In contrast, under the risk score framework, we have the restriction
    $u_{y_d}^{d} + \sum_{d^\prime \ne d} \tilde{u}_{y_{d^\prime}}^{d^\prime}
    =
    u_{y_d^\prime}^{d} + \sum_{d^\prime \ne d} \tilde{u}_{y_{d^\prime}^\prime}^{d^\prime}$ for all
    $d \in \cD$ and all  
    $\by,\by^\prime$ with
    $y_{d_0}=y_{d_0}^\prime$.

Accordingly, an optimal decision rule under the triage score framework is defined as
\begin{equation*}
    \pi^{\textsc{triage}} 
    \in 
    \argmax_{\pi \in \Pi} \overline{U}(u; D^\pi),
\end{equation*}
where $D^\pi=\pi(\bX)$ denotes the decision induced by a policy $\pi$ in the policy class $\Pi$. When the policy class is unrestricted so that decisions can be optimized pointwise in $\bx$, the corresponding optimal decision rule under the triage score framework is
\begin{equation*}
    \pi^{\textsc{triage}}(\bx)
    \in
    \argmax_{d \in \cD}
    \sum_{\by \in \cY^{K_{D}}}
    u(d,\by) \,
    s^{\textsc{triage}}_{\by}(\bx),
\end{equation*}
where $s^{\textsc{triage}}_{\by}(\bx)$ denotes the predicted probability of principal stratum $\by$ given covariates $\bx$. In the additive utility structure, this can be written as
\begin{equation*}
    \pi^{\textsc{triage}}(\bx)
    \in
    \argmax_{d \in \cD}
    \sum_{\by \in \cY^{K_{D}}}
    \left(
        u_{y_d}^{d} 
        + 
        \sum_{\substack{d^\prime=0 \\ d^\prime \ne d}}^{K_{D}-1}
        \tilde{u}_{y_{d^\prime}}^{d^\prime}
    \right)
    s^{\textsc{triage}}_{\by}(\bx).
\end{equation*}

\section{Statistical Evaluation and Learning}
\label{sec:statistical}

Given the additive counterfactual utility introduced above, we can now
statistically evaluate the expected utility of any decision rule
$D^\ast$, including the ones that have not been used in practice, and
learn the optimal decision rule under the triage score utility
framework.  In addition, we can
compare the expected utility of a given decision across different
additive utility specifications, allowing a decision-making system to
be evaluated under various sets of utility parameters.  In this
section, we develop such statistical evaluation and learning
methodology.

\subsection{Assumptions and Nonparametric Identification}

We first present the nonparametric identification result for the
expected utility of a decision under the counterfactual risk
assessment system described above. Consistent with the motivating
application, we consider a setting in which a human decision maker
receives an algorithmic recommendation for a randomly selected subset
of cases. For simplicity, we refer to these as ``AI recommendations,''
though our methodology is agnostic to how such recommendations are
generated. We also emphasize that the proposed methodology remains
valid whether or not such recommendations are provided.

As in the previous section, we use $D_{i} \in \cD$ to denote
the observed decision for case $i$.  Let $Z_{i} \in \{0, 1\}$ be an
indicator for the {\it provision} of AI recommendation, and
$R_{i} \in \cR := \{0, 1, \ldots, K_{R}-1\}$ be the {\it content} of
the recommendation.  We let $a:\cR \to \cD$ map each recommendation
category to the decision option it recommends.  In our
application, $Z_{i} = 1$ indicates that the judge has access to the
PSA, while $Z_{i} = 0$ means that the judge does not.

The potential decision, denoted by $D_{i}(z)$, represents the decision
the judge would make under the treatment assignment $Z_{i} = z$.
Thus, the observed decision is given by $D_{i} = D_{i}(Z_{i})$ under
the standard consistency assumption.  For simplicity, we assume that
the provision of AI recommendation to other cases does not influence
the judge's decision in the current case (see
\citealp{jiang2024longitudinal} who relaxes this assumption).  To
further increase the credibility of this assumption, we focus on the
first arrest cases and drop rearrest cases \citep[see][for a
justification of this approach]{imai2023experimental}.

Under this setup, our goal is to estimate the expected utilities of
three decision-making systems: the human-alone $D(0)$, the
human-with-AI $D(1)$, and the AI-alone $R$ decisions.  We follow
\cite{ben2024does} and assume a single-blinded treatment assignment
design, which is satisfied in our application.
\begin{assumption}[Single-blinded and unconfounded treatment
  assignment] \label{assum:single_blinded} 
  The
  treatment assignment $Z_i$, potential decisions $D_i(z)$,
  pre-treatment covariates $\bX_i$, and potential outcomes
  $Y_i(z, D_i(z))$ satisfy:
  \begin{enumerate}[label=(\alph*)]
  \item Single-blinded treatment assignment: $Y_i(z,D_i(z)) = Y_i(z',D_i(z'))$ for all $z,z'$ such that $D_i(z) = D_i(z')$
  \item Unconfounded treatment assignment:
    $Z_i \ \indep \ \{R_i, \{D_i(z), Y_i(d)\}_{z\in \{0,1\}, d\in\cD} \mid
    \bX_i$
  \item Overlap: 
  There exists a constant $\lambda_Z \in (0, \frac{1}{2})$ satisfying
  $\lambda_Z \le e(\bX_i):=\Pr(Z_i=1\mid \bX_i) \le 1-\lambda_Z$ almost surely.
  \end{enumerate}
\end{assumption}

Assumption~\ref{assum:single_blinded}(a) implies
$Y_{i}(0, d) = Y_{i}(1, d) = Y_{i}(d)$ for all $d \in \cD$.  In our
application, this means that an arrestee does not know whether the
judge has received the PSA, which is an especially plausible assumption in
Utah because there was no live hearing, so the arrestee had no way of
knowing what the judge considered.  In other words, we assume that the
provision or absence of an PSA can affect the outcome only through the
judge's decision. This assumption would be violated if the judge were
to inform the arrestee about the PSA, thereby directly influencing the
arrestee’s behavior independent of the judge’s decision. Here,
however, given the absence of a live hearing, such violation is
exceedingly unlikely.

In our study, Assumption~\ref{assum:single_blinded}(b) is satisfied by
the experimental design, in which the provision of the PSA to the
judge is randomized.  To allow for an extension to more general
settings, we present the unconfounded treatment assignment conditional
on the observed pre-treatment covariates $\bX_i$.  Lastly, through
Assumption~\ref{assum:single_blinded}(c), which also holds in our
application by design, we impose the overlap condition, assuming that
the treatment probability $e(\bx)$ is bounded away from both zero and
one.

In addition to the single-blinded treatment assignment, we assume the
unconfoundedness of the decision, implying that the potential outcomes
are independent of the decision, conditional on the observed
covariates, treatment assignment, and the AI recommendation.
\begin{assumption}[Unconfoundedness of decision]
    \label{assum:indep}
    \begin{equation*}
        \{Y_i(d)\}_{d\in \cD} \ \indep \ D_i \mid \bX_i, Z_{i}, R_{i} 
    \end{equation*}
\end{assumption}

\begin{figure}[t!]
    \centering
        \begin{tikzpicture}[node distance=2cm, >={Stealth}]
        \node[circle, draw] (ai) at (0, 0) {$R$};
        \node[circle, draw] (x) [above=1cm of ai] {$\bX$};
        \node[circle, draw, dashed] (u_d) [right=2cm of x] {$\bU_D$};
        \node[circle, draw, dashed] (u_y) [below=2cm of u_d] {$\bU_Y$};
        \node[circle, draw] (decision) [right=2cm of ai] {$D$};
        \node[circle, draw] (outcome) [right=2cm of decision] {$Y$};

        \draw[->] (ai) -- (decision);
        \draw[->] (decision) -- (outcome);
        \draw[->] (x) -- (ai);
        \draw[->] (x) -- (decision);
        \draw[->] (x) -- (outcome);
        \draw[->] (u_d) -- (ai);
        \draw[->] (u_d) -- (decision);
        \draw[->] (u_y) -- (ai);
        \draw[->] (u_y) -- (outcome);

        \end{tikzpicture}
    \caption{An example of a causal diagram under which Assumptions~\ref{assum:single_blinded} and \ref{assum:indep} hold. When $Z = 0$, there is no edge $R \rightarrow D$.}
    \label{fig:dag}
\end{figure}

Assumption~\ref{assum:indep} implies the absence of confounders that
affect decision $D_i$ and the outcome $Y_i$.  However, as illustrated
in the directed acyclic graph (DAG) of Figure~\ref{fig:dag}, the
assumption allows for the presence of $\bU_D$, which confounds the
relationship between the recommendation and the decision, and $\bU_Y$,
which confounds the relationship between the recommendation and the
outcome.  Appendix~\ref{app:assum_indep} further discusses
Assumption~\ref{assum:indep} and present an alternative, stronger
assumption that is more aligned with the data-generating process in
our application.  In Appendix~\ref{app:assum_indep_validity}, we
explain that Assumption~\ref{assum:indep} is credible in our
application because we observe all the information a judge has when
making the cash bail decision.

\begin{assumption}[Decision positivity on
  support] \label{assum:decision_pos}
For each $(z,r)$ such that $\Pr(Z_i=z,R_i=r)>0$, there exists a constant
$\lambda_D \in (0,1)$ satisfying
$\min_{d\in\cD} m_d^{D}(z,r,\bX_i) \ge \lambda_D$ almost surely within
the $(Z_i=z,R_i=r)$ stratum.
\end{assumption}

Assumption~\ref{assum:decision_pos} is a positivity condition for the
decision model within each $(Z_i=z,R_i=r)$ stratum. It ensures that the conditional outcome probabilities
$\Pr(Y_i=y\mid D_i=d,R_i=r,Z_i=z,\bX_i=\bx)$ that appear in the identification formulas
are identified on the relevant covariate support: each decision
level $d\in\cD$ must occur with positive probability at covariate values that can
arise under $(Z=z,R=r)$. 

We now present the main identification result, which shows that the
expected utility of the counterfactual risk assessment system can be
identified under the stated assumptions.

\begin{theorem}[Identification of the expected additive counterfactual
  utility]
\label{thm:risk_additive}
Consider an additive counterfactual utility
$u \in \mathcal{U}^{\textsc{Add}}$ and a decision rule $D^\ast$ to be evaluated.  
Suppose that $D^\ast$ is either measurable with respect to $(R,\bX)$ (e.g., AI decision $a(R)$)
or satisfies $\{Y(k)\}_{k\in \cD} \indep D^\ast \mid \bX, Z, R$ (e.g., observed decision $D$).
Under
Assumptions~\ref{assum:single_blinded}, \ref{assum:indep}, and \ref{assum:decision_pos}, we can
identify the expected utility of the counterfactual risk assessment
system under the decision $D^\ast$ as follows: 
    \begin{align*}
    &\overline{U}(u; D^\ast)\\
    &=\E\Bigg[\sum_{y=0}^{K_{Y}-1} \sum_{d=0}^{K_{D}-1} 
    \sum_{r=0}^{K_{R}-1}  
    u_{y}^{d} 
    \Pr(Y = y \mid D = d, R = r, \bX = \bx)
    \Pr(D^\ast = d, R = r \mid \bX = \bx)
    \\
    &+
    \sum_{y=0}^{K_{Y}-1}
    \sum_{d=0}^{K_{D}-1} 
    \sum_{\substack{d^\prime=0 \\ d^\prime \ne d}}^{K_{D}-1}
    \sum_{r=0}^{K_{R}-1}
    \tilde{u}_{y}^{d^\prime}
    \Pr(Y = y \mid D = d^\prime, R = r, \bX = \bx) 
    \Pr(D^\ast = d, R = r \mid \bX = \bx)
    \Bigg]
    \end{align*}
\end{theorem}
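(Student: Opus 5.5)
The plan is to expand the definition of $U(u;D^\ast\mid\bX)$ using the additive form in Definition~\ref{def:add_utility}. The sum over full principal strata then collapses into marginal quantities that involve only one potential outcome at a time. Plugging $u(d,\by)=u^d_{y_d}+\sum_{d'\ne d}\tilde u^{d'}_{y_{d'}}$ into the defining sum and summing out all coordinates of $\by$ except the relevant one gives
\begin{equation*}
U(u;D^\ast\mid\bX)=\sum_{d}\sum_{y}u^d_y\Pr(D^\ast=d,Y(d)=y\mid\bX)+\sum_{d}\sum_{d'\ne d}\sum_{y}\tilde u^{d'}_y\Pr(D^\ast=d,Y(d')=y\mid\bX).
\end{equation*}
This step is pure bookkeeping and is where additivity is used: no joint law of $(Y(d),Y(d'))$ survives. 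The remaining task is to identify each term $\Pr(D^\ast=d,Y(k)=y\mid\bX)$ for $k\in\cD$ (with $k=d$ or $k=d'$).

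For each such term I would condition on $R$ and, where needed, on $Z$, writing
\begin{equation*}
\Pr(D^\ast=d,Y(k)=y\mid\bX)=\sum_r\Pr(Y(k)=y\mid D^\ast=d,R=r,\bX)\Pr(D^\ast=d,R=r\mid\bX).
\end{equation*}
The key claim is then $\Pr(Y(k)=y\mid D^\ast=d,R=r,\bX)=\Pr(Y=y\mid D=k,R=r,\bX)$, and I would argue it in three moves.

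\emph{Removing $D^\ast$.} In the first case ($D^\ast$ measurable in $(R,\bX)$), conditioning on $D^\ast=d$ adds nothing beyond $(R,\bX)$. In the second case, the stated independence $\{Y(k)\}\indep D^\ast\mid\bX,Z,R$ removes $D^\ast$ after conditioning additionally on $Z$. Either way we get $\Pr(Y(k)=y\mid R=r,\bX)$, or its $Z$-conditional version.

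\emph{Bringing in $Z$ and then $D$.} By Assumption~\ref{assum:single_blinded}(a), $Y(z,k)=Y(k)$. By part (b), $Z\indep\{R,Y(k)\}\mid\bX$, which gives $\Pr(Y(k)=y\mid R=r,\bX)=\Pr(Y(k)=y\mid Z=z,R=r,\bX)$ for any $z$ in the support; part (c) guarantees every $z$ is in the support. Assumption~\ref{assum:indep} then lets me add $D=k$ to the conditioning set, and consistency converts $Y(k)$ to $Y$. This yields $\Pr(Y=y\mid D=k,Z=z,R=r,\bX)$, which is well defined by Assumption~\ref{assum:decision_pos}. Invariance in $z$ from (b) lets the $Z$ argument be dropped, so the result matches the displayed formula with $\Pr(Y=y\mid D=k,R=r,\bX)$.

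Finally, I would substitute these identities back, relabel $k=d$ for the standard terms and $k=d'$ for the counterfactual terms, and take the outer expectation over $\bX$ to obtain the stated expression.

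The main obstacle is the second case of $D^\ast$, where $D^\ast$ may be the observed $D=D(Z)$. The independence then holds only given $Z$, and the factor $\Pr(D^\ast=d,R=r\mid\bX)$ in the statement does not condition on $Z$. I would handle this by first writing everything conditional on $(Z,R,\bX)$ and then showing that the outcome factor is the same function for $z=0,1$, so that summing over $z$ recombines into $\Pr(D^\ast=d,R=r\mid\bX)$. The delicate point is this $z$-invariance. It needs Assumption~\ref{assum:single_blinded}(b) applied to $\Pr(Y(k)\mid R,\bX,Z)$ before conditioning on $D$, because after conditioning on $D$ the invariance is not automatic; I will carefully check that this ordering of conditioning steps is legitimate.
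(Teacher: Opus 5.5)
Your proposal is correct and follows essentially the same route as the paper. Both arguments expand the additive utility into one-potential-outcome marginals $\Pr(D^\ast=d,Y(k)=y\mid\bX)$ and decompose over $(R,Z)$. Both remove $D^\ast$ via measurability or the stated conditional independence. Both use Assumption~\ref{assum:single_blinded}(b) to make the outcome factor $z$-free before invoking Assumption~\ref{assum:indep} and consistency, and then recombine over $z$ into $\Pr(D^\ast=d,R=r\mid\bX)$. Your explicit check that $\Pr(Y=y\mid D=k,Z=z,R=r,\bX)$ is invariant in $z$, which justifies dropping $Z$, spells out a step the paper compresses into its final ``consistency and Assumption~\ref{assum:indep}'' equality.
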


The proof of Theorem~\ref{thm:risk_additive} is given in the
Appendix~\ref{proof:risk_additive}.  Theorem~\ref{thm:risk_additive}
shows that the expected utility of the counterfactual risk assessment
system can be point-identified under the stated assumptions.  This
identification strategy mirrors Corollary~2 of
\cite{koch2025statistical}, which shows that the additivity condition
is both necessary and sufficient for identifying expected
counterfactual utilities under the standard unconfoundedness
assumption (see Appendix~\ref{app:assum_indep} for a discussion of how
we adapt this assumption to our setting).  In certain cases,
Assumption~\ref{assum:indep} is not required for the identification of
the difference in expected utility between human decisions made with
and without recommendations, i.e., $D(1)$ and $D(0)$.  Specifically,
Remark~\ref{remark:utility_difference} in the Appendix shows that this
is possible either when the decision is binary or when a further
restriction is placed on additive utilities in the case of
multi-valued decisions.  This generalizes the findings of
\cite{ben2024does}.  In the following, we show how to estimate the
expected utility and construct the optimal decision rule based on this
identification result.

\subsection{Evaluating Decisions}

We now discuss the estimation strategy using an augmented
inverse propensity weighting (AIPW) estimator.  Throughout the rest of
the section, we assume that Assumptions~\ref{assum:single_blinded} and
\ref{assum:indep} hold in our study. We begin by defining two nuisance
components:
\begin{align*}
    &\text{(i) Decision model } m_{d}^{D}(z,r,\bx):= \Pr(D = d \mid Z = z, R = r, \bX = \bx), \\
    &\text{(ii) Outcome model } m_{y}^{Y}(z,r,d,\bx):= \Pr(Y = y \mid Z = z, R = r, D = d, \bX = \bx).
\end{align*}
We also  define the propensity score under the treatment assignment $z$ as:
\begin{equation*}
    e(z,\bx) := ze(\bx) + (1-z)(1-e(\bx))
\end{equation*}
where $e(\bx) := \Pr(Z = 1 \mid \bX = \bx)$. 

We evaluate the expected utility of a generic decision $D^\ast$.
By Theorem~\ref{thm:risk_additive}, the expected utility under a generic
decision rule $D^\ast$ can be written as
\begin{align*}
    &\overline{U}(u; D^\ast)\\
    &=\E\Bigg[\sum_{y=0}^{K_{Y}-1} \sum_{d=0}^{K_{D}-1} 
    \sum_{r=0}^{K_{R}-1}  
    \sum_{z=0}^{1}
    u_{y}^{d} 
    \Pr(Y = y \mid D = d, R = r, Z = z, \bX = \bx)
    \Pr(D^\ast = d, R = r, Z = z\mid \bX = \bx)
    \\
    &+
    \sum_{y=0}^{K_{Y}-1}
    \sum_{d=0}^{K_{D}-1} 
    \sum_{\substack{d^\prime=0 \\ d^\prime \ne d}}^{K_{D}-1}
    \sum_{r=0}^{K_{R}-1}
    \tilde{u}_{y}^{d^\prime}
    \Pr(Y = y \mid D = d^\prime, R = r, Z = z, \bX = \bx) 
    \Pr(D^\ast = d, R = r, Z = z \mid \bX = \bx)
    \Bigg]
\end{align*}
where the summation over $Z$ follows from Assumption~\ref{assum:single_blinded}(b) and consistency.

We propose an AIPW estimator using a two-decision-index
uncentered influence-function term.  The first decision index
identifies the potential outcome to be learned from the observed
human decision, while the second decision index identifies the
decision made by the evaluated rule $D^\ast$.  This distinction is
needed for the counterfactual utility terms, which involve
$\Pr(Y = y\mid D = d^\prime, R = r, Z = z, \bX = \bx)$ multiplied by
$\Pr(D^\ast = d, R = r, Z = z\mid \bX = \bx)$.
In this section, we assume that $D^\ast = f(r,\bx)$ is a deterministic decision rule that is known to the researcher.
The proposed AIPW estimator can be easily extended to accommodate a stochastic decision rule
by introducing an additional nuisance model
$m^{D^\ast}_{d}(r,\bx)
:= \Pr(D^\ast = d \mid R = r, \bX = \bx)$, which may need to be
estimated if it is unknown to the researcher. 

Let $O_i := (Y_i,D_i,R_i,Z_i,\bX_i)$.

\begin{equation*}
    \widehat{\overline{U}(u; D^\ast)} = 
    \frac{1}{n} \sum_{i=1}^{n}\sum_{y=0}^{K_{Y}-1}
    \sum_{d=0}^{K_{D}-1}
    \sum_{r=0}^{K_{R}-1}
    \sum_{z=0}^{1}
    \Bigg(
    u_{y}^{d} \widehat{\eta}_{yddrz}(O_i)
    + \sum_{\substack{d^\prime=0 \\ d^\prime \ne d}}^{K_{D}-1}
    \tilde{u}_{y}^{d^\prime}
    \widehat{\eta}_{yd^\prime drz}(O_i)
    \Bigg),
  \end{equation*}
where
\begin{align*}
    &\widehat{\eta}_{ykdrz}(Y,D,R,Z,\bX)\\
    &= \bbone\{D^\ast = d, R = r\} \Bigg\{
      \hat{m}_y^{Y}(z,r,k,\bX)\hat{e}(z, \bX)+ \frac{\bbone\{Z = z, D = k\}}{\hat{m}_{k}^{D}(z,r,\bX)}
      (\bbone\{Y = y\} - \hat{m}_y^{Y}(z,r,k,\bX)) \\
      &\hspace{1.5in}+ \hat{m}_y^{Y}(z,r,k,\bX)(\bbone\{Z = z\} - \hat{e}(z,\bX))\Biggl\}.
  \end{align*}

  The following theorem establishes the asymptotic normality of this
  AIPW estimator under the rate conditions presented and discussed in
  Appendix~\ref{app:rate_conditions}.
  \begin{theorem}
  [Asymptotic normality of the AIPW estimator for decision]\label{thm:utility_generic}
  Suppose that $D^\ast=f(R,\bX)$ is a fixed, known deterministic rule, or that
  the stochastic-rule extension is used with a consistently estimated
  $m_{d}^{D^\ast}(r,\bx)$. Under
  Assumptions~\ref{assum:single_blinded}, \ref{assum:indep},
  \ref{assum:decision_pos}, and \ref{assum:rate_condition}, 
  we have
  \begin{equation*}
    \sqrt{n}(\widehat{\overline{U}(u; D^\ast)} - \overline{U}(u; D^\ast)) \xrightarrow{d} N(0, V)
  \end{equation*}
  where
  \begin{equation*}
    V = \E\Bigg[\Bigg\{\sum_{y=0}^{K_{Y}-1}
    \sum_{d=0}^{K_{D}-1}
    \sum_{r=0}^{K_{R}-1}
    \sum_{z=0}^{1}
    u_{y}^{d} \eta_{yddrz}(O_i)
    + \sum_{\substack{d^\prime=0 \\ d^\prime \ne d}}^{K_{D}-1}
    \tilde{u}_{y}^{d^\prime}
    \eta_{yd^\prime drz}(O_i)
    - \overline{U}(u; D^\ast)\Bigg\}^2\Bigg].
  \end{equation*}
\end{theorem}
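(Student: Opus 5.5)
The plan is the standard AIPW argument: establish that each uncentered influence-function term $\eta_{ykdrz}$ has mean equal to the corresponding identified functional, decompose the estimator's error into a sample average plus a remainder, and show the remainder is $o_P(n^{-1/2})$ under Assumption~\ref{assum:rate_condition}. Write $\psi(O_i;\eta)$ for the bracketed linear combination in $V$, with $\eta$ the true nuisances $(e, m^D, m^Y)$ and $\hat\eta$ the estimates. Assume cross-fitting, or a Donsker condition if that is what Assumption~\ref{assum:rate_condition} specifies.

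\textbf{Step 1 (unbiasedness at the truth).} For fixed $(y,k,d,r,z)$, condition on $\bX$ and use that $\bbone\{D^\ast=d,R=r\}$ is $(R,\bX)$-measurable. The term $m_y^Y(z,r,k,\bX)(\bbone\{Z=z\}-e(z,\bX))$ should then be handled by conditioning on $(R,\bX)$: by Assumption~\ref{assum:single_blinded}(b), $Z\indep R\mid\bX$, so $\E[\bbone\{Z=z\}\mid R,\bX]=e(z,\bX)$ and this term has mean zero. The IPW residual term, conditioned on $(Z,R,\bX)$, has mean $\bbone\{Z=z\}\,\bbone\{D^\ast=d,R=r\}$ times $\E[\bbone\{D=k\}(\bbone\{Y=y\}-m_y^Y)\mid Z=z,R,\bX]/m^D_k$. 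This vanishes by the definition of $m^Y$, with Assumption~\ref{assum:decision_pos} guaranteeing well-definedness. The remaining term gives $\E[\bbone\{D^\ast=d,R=r\}\,m_y^Y(z,r,k,\bX)\,e(z,\bX)]$. Using $Z\indep R\mid \bX$, this equals $\E[m_y^Y(z,r,k,\bX)\Pr(D^\ast=d,R=r,Z=z\mid\bX)]$. Summing with the utility weights then reproduces the display following Theorem~\ref{thm:risk_additive}, so $\E[\psi(O;\eta)]=\overline{U}(u;D^\ast)$.

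\textbf{Step 2 (decomposition).} Write $\widehat{\overline U}-\overline U=(\mathbb{P}_n-P)\psi(\cdot;\eta)+(\mathbb{P}_n-P)\{\psi(\cdot;\hat\eta)-\psi(\cdot;\eta)\}+P\{\psi(\cdot;\hat\eta)-\psi(\cdot;\eta)\}$. The first term obeys the CLT: the $\psi$ is bounded because utilities are finite, outcomes are indicators, and $m^D\ge\lambda_D$, $e\in[\lambda_Z,1-\lambda_Z]$. This yields $N(0,V)$ with $V$ as stated. The second term is $o_P(n^{-1/2})$ by cross-fitting and Chebyshev, given $L_2$ consistency of $\hat\eta$ and boundedness of $\hat m^D$ away from zero.

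\textbf{Step 3 (second-order bias), the main obstacle.} Compute $P\{\psi(\hat\eta)-\psi(\eta)\}$ term by term. Conditioning as in Step 1, the $\hat e$ pieces combine to $\E[\bbone\{D^\ast=d,R=r\}(\hat m^Y-m^Y)(\hat e-e)\cdot(\pm1)]$. Specifically, $\hat m^Y\hat e+\hat m^Y(e-\hat e)=\hat m^Y e$, so the $e$-error actually cancels to first order and is exactly zero after conditioning on $(R,\bX)$. The IPW piece gives $\E[\bbone\{D^\ast=d,R=r\}\,\bbone\{Z=z\}\,(m^D_k/\hat m^D_k-1)(m^Y-\hat m^Y)]$. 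Adding $\E[\bbone\{D^\ast=d,R=r\}\,e(z,\bX)(\hat m^Y-m^Y)]$ and again using $Z\indep R\mid\bX$, the first-order terms cancel. This leaves a product $\E[\,\cdot\,(\hat m^D_k-m^D_k)(\hat m^Y-m^Y)/\hat m^D_k]$, bounded by Cauchy--Schwarz by $\lambda_D^{-1}\|\hat m^D-m^D\|_2\|\hat m^Y-m^Y\|_2=o_P(n^{-1/2})$ under the product-rate condition. The delicate point is justifying these conditioning steps when $\hat m$ involves $R$ that is not independent of $Z$ given $\bX$ only through randomization. I would verify them carefully using Assumption~\ref{assum:single_blinded}(b). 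For the stochastic-rule extension, an extra error term in $\hat m^{D^\ast}$ appears; it enters linearly and would need its own rate, which I would fold into Assumption~\ref{assum:rate_condition}. Slutsky's theorem then concludes the proof.
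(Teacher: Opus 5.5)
Your proposal is correct and follows essentially the same route as the paper. The paper also shows $\E[\eta_{ykdrz}]=\theta_{ykdrz}$ from the mean-zero outcome residual and $Z\indep R\mid\bX$. It computes the remainder exactly as $\E[\bbone\{D^\ast=d,R=r\}(m_k^D/\hat m_k^D-1)e(z,\bX)(m_y^Y-\hat m_y^Y)]$, with the $\hat e$ errors cancelling identically as you note, bounds it by the $m^D$--$m^Y$ product rate, and then invokes Kennedy's Proposition~2, which is precisely your Step~2 decomposition under sample splitting. The paper's proof handles only the deterministic rule, so your caveat that an estimated $m^{D^\ast}$ contributes a first-order error term needing its own rate condition is a point the paper does not address.
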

The  proof of Theorem~\ref{thm:utility_generic} is given in Appendix~\ref{proof:utility_generic}.
Appendices~\ref{utility_human} and \ref{utility_ai} provide additional discussion of AIPW estimators for human decisions and AI decisions.

\subsection{Optimizing Decisions}

Based on this framework, we now discuss how to derive optimal decision
rules that maximize expected utility given a set of utility
parameters.  Specifically, consider an additive counterfactual utility
$u \in \mathcal{U}^{\textsc{Add}}$. Analogous to
Theorem~\ref{thm:risk_additive}, the expected utility under a
covariate-dependent policy $\pi:\cX\to\cD$ is identified as follows
under Assumptions~\ref{assum:single_blinded} and \ref{assum:indep}.

\begin{align*}
    &\overline{U}(u; \pi) \\
    &:= \E\Bigg[\sum_{d=0}^{K_{D}-1} 
    \sum_{y_{0}=0}^{K_{Y}-1} 
    \ldots 
    \sum_{y_{K_{D}-1}=0}^{K_{Y}-1}
    u(d,\by) 
    \Pr(\pi(\bX) = d, Y(0) = y_{0}, \ldots Y(K_{D}-1) = y_{K_{D}-1} \mid \bX)\Bigg] \\
    &= \E\Bigg[\sum_{d=0}^{K_{D}-1} \sum_{y=0}^{K_{Y}-1} \sum_{z=0}^{1}\sum_{r=0}^{K_{R}-1}
    u_{y}^{d} \bbone\{\pi(\bX) = d\} 
    \Pr(Y = y \mid D = d, R = r, Z = z, \bX = \bx)\Pr(R = r, Z = z\mid \bX = \bx)\\
    &
    + 
    \sum_{d=0}^{K_{D}-1} \sum_{y=0}^{K_{Y}-1}
    \sum_{\substack{d^\prime=0 \\ d^\prime \ne d}}^{K_{D}-1} \sum_{z=0}^{1}\sum_{r=0}^{K_{R}-1}
    \widetilde{u}_{y}^{d^\prime}
    \bbone\{\pi(\bX) = d\}
    \Pr(Y = y \mid D = d^\prime, R = r, Z = z, \bX = \bx)\Pr(R = r, Z = z\mid \bX = \bx)
    \Bigg].
\end{align*}
Thus, we can estimate this optimal policy by solving the following empirical utility maximization problem with AIPW estimator:
\begin{align*}
    \hat{\pi}_{\bu}
    \in \argmax_{\pi \in \Pi} 
    \frac{1}{n}
    \sum_{i=1}^{n}
    \sum_{d=0}^{K_{D}-1} \sum_{y=0}^{K_{Y}-1}
    \sum_{z=0}^{1}\sum_{r=0}^{K_{R}-1}
    \bbone\{\pi(\bX_{i}) = d\}
    &\Big(
    u_{y}^{d}
    \widehat{\widetilde{\psi}}_{yrdz}(Y_{i},D_{i},R_{i},Z_{i},\bX_{i})\\
    &\hspace{2em}
    + \sum_{\substack{d^\prime=0 \\ d^\prime \ne d}}^{K_{D}-1}
    \tilde{u}_{y}^{d^\prime}
    \widehat{\widetilde{\psi}}_{yrd^\prime z}(Y_{i},D_{i},R_{i},Z_{i},\bX_{i})
    \Big)
\end{align*}
where $\Pi$ is a class of policies specified by the researcher, and $\widehat{\widetilde{\psi}}_{yrdz}$ is defined in Appendix~\ref{utility_ai}.

For binary decisions ($\cD = \{0,1\}$), this optimization problem can be reformulated as a weighted classification problem.
Define the arrestee-level utility contribution under decision $d$ as
\begin{align*}
    \widetilde{U}_{i}(d) := \sum_{y=0}^{K_{Y}-1} \sum_{z=0}^{1} \sum_{r=0}^{K_{R}-1}
    \Big(
    u_{y}^{d} \widehat{\widetilde{\psi}}_{yrdz,i}
    + \sum_{\substack{d^\prime=0 \\ d^\prime \ne d}}^{K_{D}-1}
    \tilde{u}_{y}^{d^\prime} \widehat{\widetilde{\psi}}_{yrd^\prime z,i}
    \Big).
\end{align*}
The empirical utility maximization problem becomes
\begin{align*}
    \hat{\pi}_{\bu} &\in \argmax_{\pi \in \Pi} \frac{1}{n} \sum_{i=1}^{n} \Big[ \bbone\{\pi(\bX_{i}) = 1\} \widetilde{U}_{i}(1) + \bbone\{\pi(\bX_{i}) = 0\} \widetilde{U}_{i}(0) \Big] \\
    &= \argmax_{\pi \in \Pi} \frac{1}{n} \sum_{i=1}^{n} \Big[ \pi(\bX_{i}) U^{\ast}_{i} + \widetilde{U}_{i}(0) \Big],
\end{align*}
where $U^{\ast}_{i} := \widetilde{U}_{i}(1) - \widetilde{U}_{i}(0)$ represents the utility gain from assigning cash bail ($d=1$) versus release ($d=0$) for arrestee $i$.
Since $\widetilde{U}_{i}(0)$ does not depend on $\pi$, the optimal policy maximizes $\sum_{i} \pi(\bX_{i}) U^{\ast}_{i}$, which is equivalent to weighted binary classification where the class label is $\bbone\{U^{\ast}_{i} > 0\}$ and the arrestee weight is $|U^{\ast}_{i}|$.
This reformulation enables the use of standard classification algorithms to learn optimal policies within a given policy class $\Pi$.
For multiclass decisions ($K_{D} > 2$), the problem generalizes naturally: finding the optimal policy requires computing $\argmax_{d \in \cD} \widetilde{U}_{i}(d)$ for each case $i$, assigning each arrestee to the decision with highest estimated utility.

\section{Empirical Analysis}
\label{sec:analysis}

We now illustrate the use of triage score by applying the proposed
methodology to the empirical application presented in
Section~\ref{sec:empirical}.  We begin by explaining how we specify
the utilities for our analysis and then present the empirical results
obtained through the proposed methodology.

\subsection{Utility Specification}

As illustrated in Section~\ref{sec:binary}, additive counterfactual
utilities in the case of binary decision and outcome consist of eight
utility parameters, $u_{y}^{d}$ and $\tilde{u}_{y}^{1-d}$ for
$d \in \{0, 1\}$ and $y \in \{0, 1\}$, which represent the standard
and counterfactual utilities, respectively.  In our application study,
$D = 1$ ($D = 0$) represents a decision to impose cash bail (release
an arrestee on their own recognizance), and $Y = 1$ ($Y = 0$)
indicates an undesirable event (absence of an undesirable event), such
as a rearrest for new criminal activity.  For ease of interpretation,
we impose the following restrictions on the additive counterfactual
utility parameters:
\begin{itemize}
\item cost of decision: the cost of a cash bail decision is
  $c^\cash \ge 0$.  Without loss of generality, we absorb this cost
  into $u_{\crime}^\cash$ and $u_{\nocrime}^\cash$.
    \item cost of outcome: the cost of an undesirable event under ROR is $c_{\crime}^\ROR$ (standardized to $1$), while that under cash bail is $c_{\crime}^\cash \ge 0$.
    We further assume a zero baseline cost for the absence of an undesirable event.
    \item regret of counterfactual outcome: regret for the absence of an
      undesirable event under ROR is $r_{\nocrime}^\ROR \ge 0$, while
      the same regret under cash bail is $r_{\nocrime}^\cash \ge 0$. No regret is
      incurred for an undesirable event in the counterfactual outcome
\end{itemize}
Together, these assumptions yield the specifications, illustrated below using rearrest as the outcome:
\begin{align*}
\begin{alignedat}{3}
    &u_{\nocrime}^{\ROR}  = 0, &\qquad &u_{\crime}^{\ROR}  = -c_{\crime}^\ROR = -1, \\
    &u_{\nocrime}^{\cash}  = -c^\cash, &\qquad &u_{\crime}^{\cash}  = -c^\cash - c_{\crime}^\cash, \\
    &\tilde{u}_{\nocrime}^{\ROR}  = -r_{\nocrime}^\ROR, &\qquad &\tilde{u}_{\crime}^{\ROR}  = 0, \\
    &\tilde{u}_{\nocrime}^{\cash}  = -r_{\nocrime}^\cash, &\qquad &\tilde{u}_{\crime}^{\cash}  = 0.
\end{alignedat}
\end{align*}

Based on this utility specification,
Table~\ref{tab:decision_parameters} presents four decision-theoretic
frameworks and their corresponding free parameters along with
associated constraints, where the standardized parameter is
$c_{\crime}^\ROR = 1$.  For example, as discussed in
Section~\ref{sec:comparison},
$u_{\nocrime}^{\cash} = u_{\crime}^{\cash}$ and
$\tilde{u}_{\nocrime}^{\cash} = \tilde{u}_{\crime}^{\cash}$ should
hold in the risk score framework with an additive utility in binary
decision case.  Under our utility specification, this corresponds to
the constraints $c_{\crime}^\cash = 0$ and $r_{\nocrime}^\cash = 0$,
leaving three free parameters, $c^\cash$, $c_{\crime}^\ROR$ and
$r_{\nocrime}^\ROR$.

\begin{table}[t!]
  \centering
  \begin{tabular}[t]{llll}
    \toprule
    \textbf{Utility} & \textbf{Score} & \textbf{Constraint} & \textbf{Free Parameters} \\ 
    \midrule \vspace{0.5em}
    Standard & Risk & 
    $\begin{cases}
        u_{\nocrime}^{\cash} = u_{\crime}^{\cash}\\
        \tilde{u}_{\nocrime}^{\cash} = \tilde{u}_{\crime}^{\cash} = 0\\
        \tilde{u}_{\nocrime}^{\ROR} = \tilde{u}_{\crime}^{\ROR} = 0
    \end{cases}$
    & $c^\cash$, $c_{\crime}^\ROR$ \\ \vspace{0.5em}
    Standard & Triage & 
    $\begin{cases}
        \tilde{u}_{\nocrime}^{\cash} = \tilde{u}_{\crime}^{\cash} = 0\\
        \tilde{u}_{\nocrime}^{\ROR} = \tilde{u}_{\crime}^{\ROR} = 0
    \end{cases}$
    & $c^\cash$, $c_{\crime}^\ROR$, $c_{\crime}^\cash$ \\ \vspace{0.5em}
    Counterfactual & Risk & 
      $\begin{cases}
        u_{\nocrime}^{\cash} = u_{\crime}^{\cash}\\
        \tilde{u}_{\nocrime}^{\cash} = \tilde{u}_{\crime}^{\cash}
      \end{cases}$
    & $c^\cash$, $c_{\crime}^\ROR$, $r_{\nocrime}^\ROR$ \\ \vspace{0.5em}
    Counterfactual & Triage & \textit{None} & $c^\cash$, $c_{\crime}^\ROR$, $c_{\crime}^\cash$, $r_{\nocrime}^\ROR$, $r_{\nocrime}^\cash$ \\
    \bottomrule
  \end{tabular}
  \caption{Four decision-theoretic frameworks with corresponding free parameters. We standardize $c_{\crime}^\ROR = 1$.}
  \label{tab:decision_parameters}
\end{table}

For the purpose of our illustration, we further simplify the utility
specification by reducing the number of parameters.  First, we assume
that the cost of undesirable outcome is also no less under ROR than
cash bail, i.e., $c_{\crime}^\ROR\ge c_{\crime}^\cash$.  This
represents the possibility that a negative outcome followed by the ROR
decision appears worse than the same outcome under the cash bail
decision.  Second, we assume that regret is greater for ROR than for
cash bail, i.e., $r_{\nocrime}^\ROR \ge r_{\nocrime}^\cash$.  The idea
is that a judge experiences greater regret when no arrest would have
occurred under the alternative decision of ROR than when no arrest
would have occurred under the alternative decision of cash bail.  The
final assumption is that when determining costs and regrets, a judge
applies the same relative weights to ROR vs. cash bail decisions if
the outcome under consideration is the same, i.e.,
\begin{equation} \label{eq:ratio}
  \frac{c_{\crime}^\cash}{c_{\crime}^\ROR}\\
= \frac{r_{\nocrime}^\cash}{r_{\nocrime}^\ROR} \le 1
\end{equation}
With our standardization scheme $c_{\crime}^\ROR = 1$, this implies
$r_{\nocrime}^\cash = r_{\nocrime}^\ROR c_{\crime}^\cash$, reducing
the number of parameters by one.  All together, we have the utility
structure summarized in Table~\ref{tab:binary_empirical} of
Appendix~\ref{additive_loss}.

Of course, many other utility specifications are possible.  For
example, Appendix~\ref{alternative_loss} provides an alternative
specification of the utility function that uses three parameters---the
cost of cash bail, the cost of an undesirable outcome, and a discount
factor for counterfactual utility.  In practice, decision-makers
should determine utilities to reflect their own value system and
incorporate ethical and other constraints.

\subsection{Results}

We now apply the proposed methodology to evaluate the expected utility
of different decision-making systems under alternative utility
structures.  Specifically, we present two main analysis results: the
first examines the difference in expected utilities between human
decisions with and without the PSA recommendation, and the second
examines how the proportion of cash bail would change under the
optimal decision tree policy that maximizes expected utility under
different utility parameters.  A primary goal of our analysis is to
demonstrate how one's conclusions depend on (i) whether the standard
or counterfactual decision-theoretic framework is applied, and (ii)
whether a risk score or triage score is used.

We use the AIPW estimator with the true propensity score,
$e(z, \bx) = 0.5$.  The decision model and outcome model are fitted
using case-level covariates $\bX_i$: gender (male or female), race
(white or non-white), age, PSA inputs including current and past
charges and prior convictions, three PSA risk scores, the overall PSA
recommendation, and probable-cause affidavits text.  Since
probable-cause affidavits text contains a high-dimensional and
unstructured textual information, we employ the GPI methodology
\citep{imai2025genai} using DragonNet \citep{shi2019adapting}, which
is a deep neural network architecture for causal inference.  See
Appendix~\ref{app:assum_indep_validity} for more details.  We use the
same covariates and nuisance components to learn the optimal decision
tree policy in the second part of our analysis.  We restrict the tree
to a maximum depth of two and a minimum leaf size of $10$ observations
to ensure interpretability and avoid overfitting.

Figure~\ref{fig:utility_nca} shows how the expected utility of different decision-making systems varies across utility parameters, using NCA as the main outcome. The figure compares decision-making systems (across rows) and decision-theoretic frameworks defined by utility parameters (across cells). 
Within each decision-making system, the figure varies three parameters: the
regret under release ($r_{\nocrime}^{\ROR}\in\{0,0.1,0.5,1\}$;
columns), the cost of an undesirable outcome under cash bail
($c_{\crime}^{\cash}\in[0,1]$; $x$-axis), and the cost of cash bail
itself ($c^{\cash}\in[0,1]$; $y$-axis). 
The risk
score system corresponds to the vertical slices at
$c_{\crime}^{\cash}=0$ (highlighted by the blue boxes), while the
leftmost column ($r_{\nocrime}^{\ROR}=0$) represents the standard
decision framework without counterfactual regret.

The result demonstrates that the expected utilities of these systems can vary substantially as the utility parameters change. 
For example, as the regret parameter increases across columns and the cost of cash bail increases along the $y$-axis, the expected utilities of the human-alone and human+PSA systems (first two rows) decline much more sharply than that of the optimal decision tree policy (last row).
A similar pattern is shown when comparing the first two rows: when both regret and the cost of cash bail are large, the difference between the human and human+PSA systems becomes more pronounced.

\begin{figure}[ht!]
    \centering
    \includegraphics[width=\linewidth, trim=0 0 0 0.4in, clip]{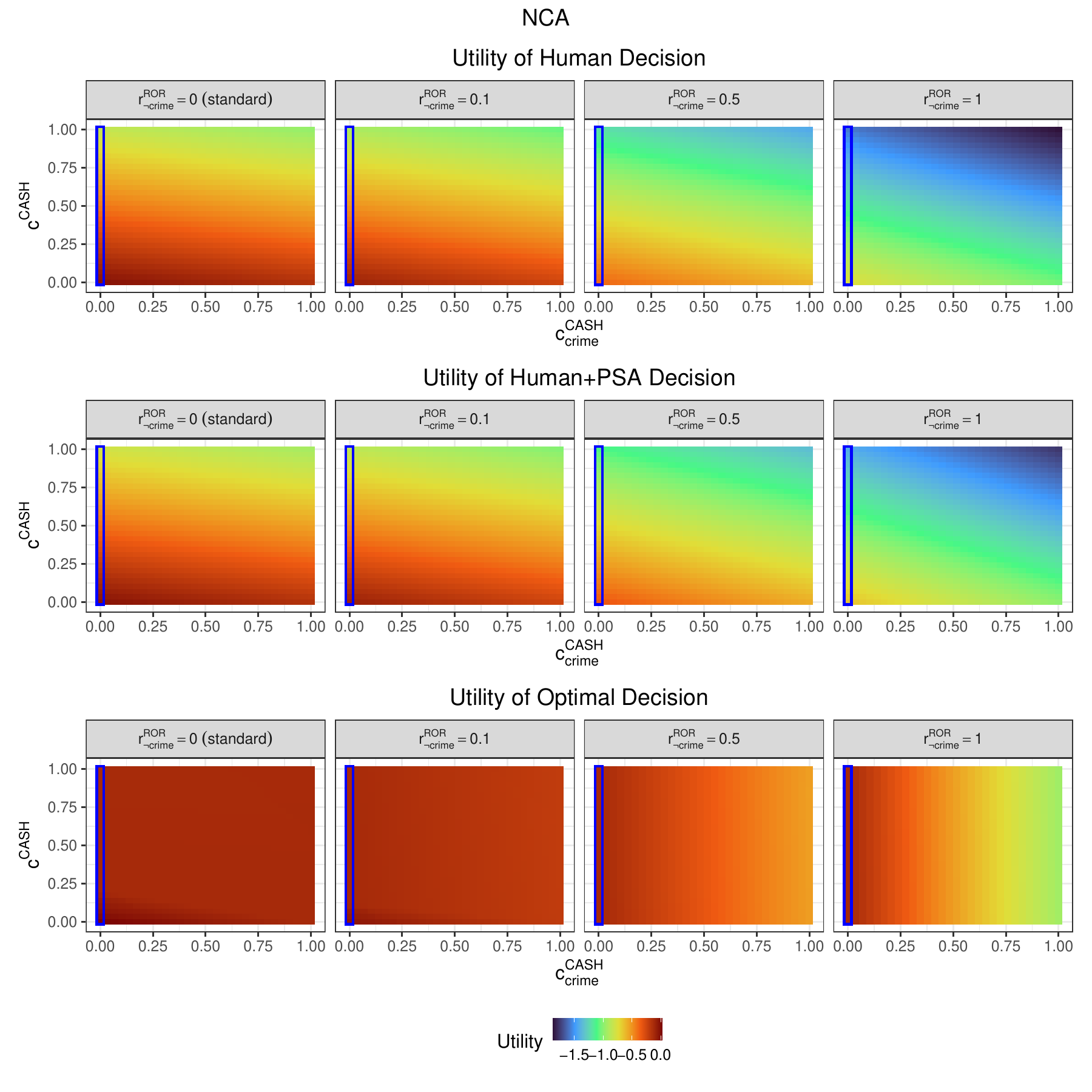}
    \caption{Estimated utility of different decision-making regimes (NCA outcome). 
    The optimal decision tree policy has maximum depth 2 and minimum leaf size 10.
    Each panel corresponds to a different regret parameter $r_{\nocrime}^{\ROR}$; $x$-axis = cost of crime under cash bail $c_{\crime}^{\cash}$; $y$-axis = cost of cash bail $c_d$; $c_{\crime}^{\ROR} = 1$ and $r_{\nocrime}^{\cash} = r_{\nocrime}^{\ROR} c_{\crime}^{\cash}$. Blue region at $c_{\crime}^{\cash} = 0$: risk score framework. First panel ($r_{\nocrime}^{\ROR} = 0$): standard decision framework.}
    \label{fig:utility_nca}
\end{figure}

We now further investigate the difference in expected utilities between human
decisions with and without PSA recommendations.  Specifically, we invert the following hypothesis test,
which gives us the region of utility parameters ($u$) where we can be
confident that decisions made with PSA recommendations would yield a
higher expected utility than those made without them:
\begin{equation*}
    H_{0}: \overline{U}(u; D(1)) \le \overline{U}(u; D(0)) \quad \text{vs} \quad H_{1}: \overline{U}(u; D(1)) > \overline{U}(u; D(0)).
\end{equation*}
Similarly, if we switch the roles of null and alternative hypotheses,
we can determine whether or not decisions made without PSA
recommendations outperform those made with them. The failure to reject
both hypotheses implies that the results are ambiguous. 

\begin{figure}[ht!]
    \centering
    \includegraphics[width=\linewidth]{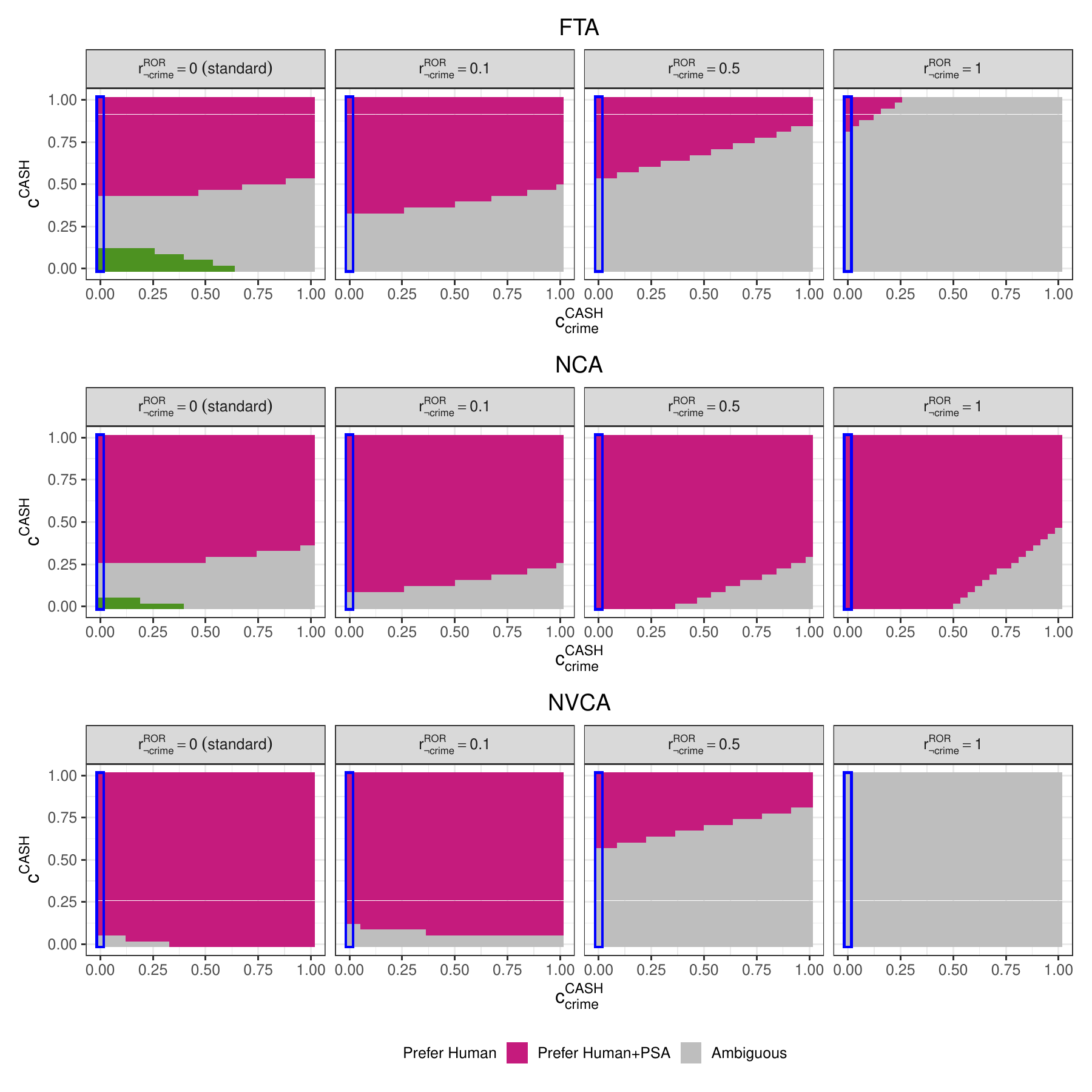}
    \caption{Estimated preference for human decisions over human+PSA recommendations. Each column = regret under release; $x$-axis = cost of new crime under cash; $y$-axis = cost of cash bail; $c_{\crime}^{\ROR} = 1$ and $r_{\nocrime}^{\cash} = r_{\nocrime}^{\ROR} c_{\crime}^{\cash}$. Blue boxes: risk score system (when $c_{\crime}^{\cash} = 0)$. First column: standard decision framework (when $r_{\nocrime}^{\ROR} = r_{\nocrime}^{\cash} =0)$.}
    \label{fig:pref_human_psa}
\end{figure}

Figure~\ref{fig:pref_human_psa} presents, under the specified utility
structure, when human-alone decisions are preferred to decisions made
with PSA recommendations. Results are shown separately
for FTA (top row), NCA (middle row), and NVCA (bottom row).  
By construction of the utility
specification, holding other parameters fixed, increases in the cost
of cash bail make release decisions more preferable.  Recall that
human decisions are, on average, harsher than those informed by PSA
recommendations (see Table~\ref{table:decision}).  Consistent with
these results, the figure shows that higher values of the cost of cash bail (the
$y$-axis) expand the region in which decisions made with PSA
recommendations are preferred (pink).

In contrast, whether cash bail is preferred is theoretically
indeterminate as either the regret under cash bail
($r_{\nocrime}^{\cash}$) or the cost of an undesirable outcome under
cash bail ($c_{\crime}^{\cash}$) increases.  Both parameters increase
the implied regret under cash bail due to the relation
$r_{\nocrime}^{\cash}=r_{\nocrime}^{\ROR}c_{\crime}^{\cash}$, which
can shift preferences toward either harsher or more lenient decisions.
Consequently, the preferred decision rule depends on the distribution
of principal strata in these cases (see
Table~\ref{tab:binary_empirical}).  Empirically,
Figure~\ref{fig:pref_human_psa} shows that the region favoring
decisions with PSA recommendations generally expands as
$c_{\crime}^{\cash}$ (the $x$-axis) increases, though the standard
framework with NVCA (leftmost bottom panel) exhibits the opposite
pattern.

Next, Figure~\ref{fig:pref_human_psa} shows that conclusions about the preferred decision rule can differ depending on both the decision-theoretic framework and the type of score used. 
Comparing the standard framework with the counterfactual framework (the leftmost column \textit{versus} the others), we find that the region favoring decisions made with PSA recommendations tends to expand under the counterfactual framework up to a certain level as both regret parameters increase (e.g., $r_{\nocrime}^{\ROR}$ from $0$ to $0.1$ in the second column), though this effect attenuates at higher levels.
Similarly, comparing the risk score and triage score frameworks
(vertical slices at $c_{\crime}^{\cash}=0$ \textit{versus} the
remaining panels), we observe that the preferred decision rule can
differ across different frameworks. 

Intuitively, moving from the standard to the counterfactual framework
amounts to turning on the regret terms, so the analysis penalizes
unnecessary detention. Because human-alone decisions are harsher than
decisions made with PSA recommendations on average, this initially
expands the region favoring human+PSA recommendations.  However, the
effect attenuates at higher regret levels because our parameterization
also raises the penalty for release in cases where cash bail would
have prevented the undesirable outcome. By contrast, moving from risk
scores to triage scores relaxes the restrictions that collapse cases
with the same baseline risk ($Y(0)$), allowing the utility of cash
bail to depend on whether it was unnecessary or genuinely
preventive. In other words, cases with the same baseline risk can be
assigned different utilities according to their joint potential
outcomes. Thus, the welfare ranking of human-alone versus human+PSA
depends on their mix of case types rather than on overall harshness
alone, making the preferred decision rule more ambiguous.


\begin{figure}[p]
    \centering
    \includegraphics[width=\linewidth]{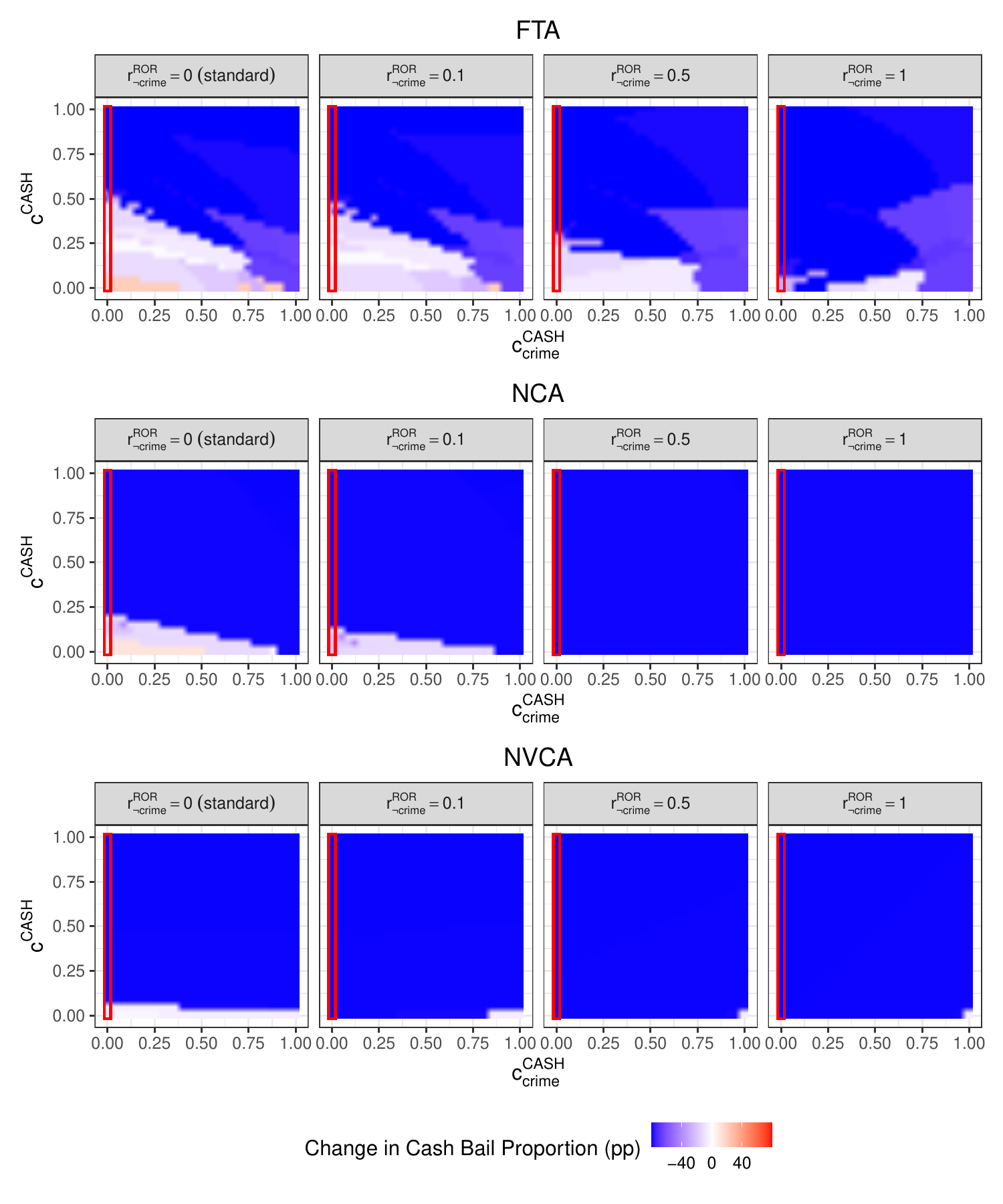}
    \caption{Estimated change in cash bail proportion under optimal decision tree. Each column = regret under release; $x$-axis = cost of new crime under cash; $y$-axis = cost of cash bail; $c_{\crime}^{\ROR} = 1$ and $r_{\nocrime}^{\cash} = r_{\nocrime}^{\ROR} c_{\crime}^{\cash}$. Red boxes: risk score system (when $c_{\crime}^{\cash} = 0)$. First column: standard decision framework (when $r_{\nocrime}^{\ROR} = r_{\nocrime}^{\cash} =0)$.}
    \label{fig:cash_prop}
\end{figure}

Figure~\ref{fig:cash_prop} summarizes how the proportion of cash bail
decisions would change under the estimated optimal policy relative to
the current decision. Each panel varies the same three utility
parameters as before. Darker blue regions correspond to parameter
combinations for which the optimal policy recommends a lower
proportion of cash bail relative to the current decision, while darker
red regions indicate that the optimal policy would increase the use of
cash bail. Whiter regions correspond to parameter combinations for
which the optimal policy does not differ from the current practice in
terms of cash bail proportion.

The main takeaways from this figure are similar to those in
Figure~\ref{fig:pref_human_psa}.  Here, we highlight the findings that
illustrate the practical difference between the conventional
\emph{risk score} framework and the proposed \emph{triage score}
framework.  The red boxes correspond to the risk score system, which
evaluates decisions solely based on the baseline potential outcome.
In contrast, the triage score framework explicitly incorporates
counterfactual outcomes under alternative decisions, allowing the
policy to account for cases where cash bail may be unnecessary or
counterproductive.  As a result, the optimal policy can differ
substantially between the two frameworks even under the same cash bail
cost parameters (i.e., the $y$-axis).  For example, as we increase the
cost of FTA under cash bail (i.e., the $x$-axis) in the second column
of the top row, the optimal policy recommends a substantially lower
proportion of cash bail under the triage score framework (darker
blue).  Overall, the counterfactual triage score framework provides a
more flexible policy evaluation by accommodating richer and more
tailored utility specifications.  The empirical results show that
policy recommendations derived from this framework can differ
substantially from those implied by conventional risk score
approaches.

\section{Concluding Remarks}
\label{sec:conclusion}

This paper proposes a new framework of triage score that can
accommodate counterfactual utilities while incorporating standard
utilities.  Unlike popular risk score approaches that focus on the
baseline potential outcome alone, triage score considers all potential
outcomes, allowing decision makers to choose the best decision among
alternatives. While we applied the proposed methodology to an RCT in
criminal justice, triage score represents a general decision-support
framework that can be applied to a variety of settings.

Future work should consider the real-world application of triage score
framework by directly collaborating with policy makers.  Doing so
requires careful elicitation of utility parameters.  Moreover, the
proposed framework can be generalized to dynamic settings, where
decisions are made sequentially over time, generating carryover
effects and other temporal dependencies.  An ambitious goal is to
develop a dynamic triage score that are dynamically updated as more
decisions are made and additional data become available over time.

\bigskip
\bibliographystyle{chicago}
\bibliography{bibliography}

@article{imai2025genai,
  title={{GenAI}-powered inference},
  author={Imai, Kosuke and Nakamura, Kentaro},
  journal={arXiv preprint arXiv:2507.03897},
  year={2025}
}

@article{feder2022causal,
  title={Causal inference in natural language processing: Estimation, prediction, interpretation and beyond},
  author={Feder, Amir and Keith, Katherine A and Manzoor, Emaad and Pryzant, Reid and Sridhar, Dhanya and Wood-Doughty, Zach and Eisenstein, Jacob and Grimmer, Justin and Reichart, Roi and Roberts, Margaret E and others},
  journal={Transactions of the Association for Computational Linguistics},
  volume={10},
  pages={1138--1158},
  year={2022},
  publisher={MIT Press One Broadway, 12th Floor, Cambridge, Massachusetts 02142, USA~…}
}

@article{stevenson2018assessing,
  title={Assessing risk assessment in action},
  author={Stevenson, Megan},
  journal={Minnesota Law Review},
  volume={103},
  pages={303--384},
  year={2018}
}

@article{dagostino2008general,
  title={General cardiovascular risk profile for use in primary care},
  author={D'Agostino, Ralph B and Vasan, Ramachandran S and Pencina, Michael J and Wolf, Philip A and Cobain, Mark and Massaro, Joseph M and Kannel, William B},
  journal={Circulation},
  volume={117},
  number={6},
  pages={743--753},
  year={2008}
}

@article{chen2021probabilistic,
  title={Probabilistic machine learning for healthcare},
  author={Chen, Irene Y and Joshi, Shalmali and Ghassemi, Marzyeh and Ranganath, Rajesh},
  journal={Annual review of biomedical data science},
  volume={4},
  number={1},
  pages={393--415},
  year={2021},
  publisher={Annual Reviews}
}

@article{einav2013impact,
  title={The impact of credit scoring on consumer lending},
  author={Einav, Liran and Jenkins, Mark and Levin, Jonathan},
  journal={The RAND Journal of Economics},
  volume={44},
  number={2},
  pages={249--274},
  year={2013},
  publisher={Wiley Online Library}
}

@article{dobbie2021measuring,
  title={Measuring bias in consumer lending},
  author={Dobbie, Will and Liberman, Andres and Paravisini, Daniel and Pathania, Vikram},
  journal={The Review of Economic Studies},
  volume={88},
  number={6},
  pages={2799--2832},
  year={2021},
  publisher={Oxford University Press}
}

@article{kleinberg2018human,
  title={Human decisions and machine predictions},
  author={Kleinberg, Jon and Lakkaraju, Himabindu and Leskovec, Jure and Ludwig, Jens and Mullainathan, Sendhil},
  journal={The quarterly journal of economics},
  volume={133},
  number={1},
  pages={237--293},
  year={2018},
  publisher={Oxford University Press}
}

@inproceedings{coston2021characterizing,
  title={Characterizing fairness over the set of good models under selective labels},
  author={Coston, Amanda and Rambachan, Ashesh and Chouldechova, Alexandra},
  booktitle={International Conference on Machine Learning},
  pages={2144--2155},
  year={2021},
  organization={PMLR}
}

@article{angelova2025algorithmic,
  title={Algorithmic recommendations and human discretion},
  author={Angelova, Victoria and Dobbie, Will and Yang, Crystal S},
  journal={Review of Economic Studies},
  pages={rdaf084},
  year={2025},
  publisher={Oxford University Press UK}
}

@incollection{wald1950statistical,
  title={Statistical decision functions},
  author={Wald, Abraham},
  booktitle={Breakthroughs in Statistics: Foundations and Basic Theory},
  pages={342--357},
  year={1950},
  publisher={Springer}
}

@article{jiang2024longitudinal,
  title={Longitudinal Causal Inference with Selective Eligibility},
  author={Jiang, Zhichao and Ben-Michael, Eli and Greiner, D James and Halen, Ryan and Imai, Kosuke},
  journal={arXiv preprint arXiv:2410.17864},
  year={2024}
}

@article{mueller2023personalized,
  title={Personalized decision making--a conceptual introduction},
  author={Mueller, Scott and Pearl, Judea},
  journal={Journal of Causal Inference},
  volume={11},
  number={1},
  pages={20220050},
  year={2023},
  publisher={De Gruyter}
}

@article{christy2024starting,
  title={Starting small: Prioritizing safety over efficacy in randomized experiments using the exact finite sample likelihood},
  author={Christy, Neil and Kowalski, Amanda Ellen},
  journal={arXiv preprint arXiv:2407.18206},
  year={2024}
}

@article{bell1982regret,
  author    = {David E. Bell},
  title     = {Regret in Decision Making under Uncertainty},
  journal   = {Operations Research},
  year      = {1982},
  volume    = {30},
  number    = {5},
  pages     = {961--981},
  doi       = {10.1287/opre.30.5.961},
}

@article{loomes1982regret,
  author    = {Graham Loomes and Robert Sugden},
  title     = {Regret Theory: An Alternative Theory of Rational Choice Under Uncertainty},
  journal   = {The Economic Journal},
  volume    = {92},
  number    = {368},
  pages     = {805--824},
  year      = {1982},
  publisher = {Oxford University Press},
  jstor     = {2232669},
  url       = {https://www.jstor.org/stable/2232669}
}

@article{fran:rubi:02,
	Author = {Frangakis, Constantine E. and Rubin, Donald B.},
	Journal = {Biometrics},
	Number = 1,
	Pages = {21--29},
	Title = {Principal Stratification in Causal Inference},
	Volume = 58,
	Year = 2002}

@article{rubi:90,
	Author = {Rubin, Donald B.},
	Journal = {Statistical Science},
	Pages = {472--480},
	Title = {Comments on ``{O}n the Application of Probability Theory to Agricultural Experiments. {E}ssay on Principles. {S}ection 9'' by {J.} {S}plawa-{N}eyman translated from the {P}olish and edited by {D.} {M.} {D}abrowska and {T.} {P.} {S}peed},
	Volume = 5,
	Year = 1990}

@article{koch2025statistical,
  title={Statistical Decision Theory with Counterfactual Loss},
  author={Koch, Benedikt and Imai, Kosuke},
  journal={arXiv preprint arXiv:2505.08908},
  year={2025}
}

@article{ben2024does,
  title={Does {AI} help humans make better decisions? A statistical evaluation framework for experimental and observational studies},
  author={Ben-Michael, Eli and Greiner, D James and Huang, Melody and Imai, Kosuke and Jiang, Zhichao and Shin, Sooahn},
  journal={Proceedings of the National Academy of Sciences},
  volume={122},
  number={38},
  pages={e2505106122},
  year={2025}
}

@article{imai2024causalrepresentationlearninggenerative,
      title={Causal Representation Learning with Generative Artificial Intelligence: Application to Texts as Treatments},
  author={Imai, Kosuke and Nakamura, Kentaro},
  journal={arXiv preprint arXiv:2410.00903},
  year={2024}
}

@article{touvron2023llama,
  title={Llama: Open and efficient foundation language models},
  author={Touvron, Hugo and Lavril, Thibaut and Izacard, Gautier and Martinet, Xavier and Lachaux, Marie-Anne and Lacroix, Timoth{\'e}e and Rozi{\`e}re, Baptiste and Goyal, Naman and Hambro, Eric and Azhar, Faisal and others},
  journal={arXiv preprint arXiv:2302.13971},
  year={2023}
}

@article{benm:imai:jian:23,
  author = 	 {Ben-Michael, Eli and Imai, Kosuke and Jiang, Zhichao},
  title = 	 {Policy learning with asymmetric counterfactual utilities},
  journal =      {Journal of the American Statistical Association},
  year = 	 {2024},
  volume = 	 {119},
  OPTtype = 	 {},
  number = 	 {548},
  pages = 	 {3045--3058},
  OPTmonth = 	 {},
  OPTnote = 	 {},
  OPTannote = 	 {}
}

@article{rambachan2022robust,
  title={Robust design and evaluation of predictive algorithms under unobserved confounding},
  author={Rambachan, Ashesh and Coston, Amanda and Kennedy, Edward},
  journal={arXiv preprint arXiv:2212.09844},
  year={2022}
}

@inproceedings{lakkaraju2017selective,
  title={The selective labels problem: Evaluating algorithmic predictions in the presence of unobservables},
  author={Lakkaraju, Himabindu and Kleinberg, Jon and Leskovec, Jure and Ludwig, Jens and Mullainathan, Sendhil},
  booktitle={Proceedings of the 23rd ACM SIGKDD International Conference on Knowledge Discovery and Data Mining},
  pages={275--284},
  year={2017}
}

@article{dobbie2018effects,
  title={The effects of pre-trial detention on conviction, future crime, and employment: Evidence from randomly assigned judges},
  author={Dobbie, Will and Goldin, Jacob and Yang, Crystal S},
  journal={American Economic Review},
  volume={108},
  number={2},
  pages={201--240},
  year={2018},
  publisher={American Economic Association 2014 Broadway, Suite 305, Nashville, TN 37203}
}

@article{imai2023experimental,
  title={Experimental evaluation of algorithm-assisted human decision-making: Application to pretrial public safety assessment},
  author={Imai, Kosuke and Jiang, Zhichao and Greiner, D James and Halen, Ryan and Shin, Sooahn},
  journal={Journal of the Royal Statistical Society Series A: Statistics in Society},
  volume={186},
  number={2},
  pages={167--189},
  year={2023},
  publisher={Oxford University Press US}
}

@article{stevenson2022algorithmic,
  title={Algorithmic risk assessment in the hands of humans},
  author={Stevenson, Megan T and Doleac, Jennifer L},
  journal={Available at SSRN 3489440},
  year={2022}
}

@article{goel2016personalized,
  title={Personalized risk assessments in the criminal justice system},
  author={Goel, Sharad and Rao, Justin M and Shroff, Ravi},
  journal={American Economic Review},
  volume={106},
  number={5},
  pages={119--123},
  year={2016},
  publisher={American Economic Association 2014 Broadway, Suite 305, Nashville, TN 37203}
}

@article{berk2021fairness,
  title={Fairness in criminal justice risk assessments: The state of the art},
  author={Berk, Richard and Heidari, Hoda and Jabbari, Shahin and Kearns, Michael and Roth, Aaron},
  journal={Sociological Methods \& Research},
  volume={50},
  number={1},
  pages={3--44},
  year={2021},
  publisher={Sage Publications Sage CA: Los Angeles, CA}
}

@article{miller2013practitioner,
  title={Practitioner compliance with risk/needs assessment tools: A theoretical and empirical assessment},
  author={Miller, Joel and Maloney, Carrie},
  journal={Criminal Justice and Behavior},
  volume={40},
  number={7},
  pages={716--736},
  year={2013},
  publisher={Sage Publications Sage CA: Los Angeles, CA}
}

@article{albright2019if,
  title={If you give a judge a risk score: evidence from Kentucky bail decisions},
  author={Albright, Alex},
  journal={Law, Economics, and Business Fellows’ Discussion Paper Series},
  volume={85},
  year={2019}
}

@article{skeem2020impact,
  title={Impact of risk assessment on judges’ fairness in sentencing relatively poor defendants.},
  author={Skeem, Jennifer and Scurich, Nicholas and Monahan, John},
  journal={Law and human behavior},
  volume={44},
  number={1},
  pages={51},
  year={2020},
  publisher={Educational Publishing Foundation}
}

@inproceedings{guerdan2023ground,
  title={Ground (less) Truth: A Causal Framework for Proxy Labels in Human-Algorithm Decision-Making},
  author={Guerdan, Luke and Coston, Amanda and Wu, Zhiwei Steven and Holstein, Kenneth},
  booktitle={Proceedings of the 2023 ACM Conference on Fairness, Accountability, and Transparency},
  pages={688--704},
  year={2023}
}

@article{arnold2022measuring,
  title={Measuring racial discrimination in bail decisions},
  author={Arnold, David and Dobbie, Will and Hull, Peter},
  journal={American Economic Review},
  volume={112},
  number={9},
  pages={2992--3038},
  year={2022},
  publisher={American Economic Association 2014 Broadway, Suite 305, Nashville, TN 37203}
}

@inproceedings{coston2020counterfactual,
  title={Counterfactual risk assessments, evaluation, and fairness},
  author={Coston, Amanda and Mishler, Alan and Kennedy, Edward H and Chouldechova, Alexandra},
  booktitle={Proceedings of the 2020 conference on fairness, accountability, and transparency},
  pages={582--593},
  year={2020}
}

@article{kennedy_semiparametric_2023,
	title={Semiparametric doubly robust targeted double machine learning: a review},
  author={Kennedy, Edward H},
  journal={Handbook of Statistical Methods for Precision Medicine},
  pages={207--236},
  year={2024},
  publisher={Chapman and Hall/CRC}
}

@article{shi2019adapting,
  title={Adapting neural networks for the estimation of treatment effects},
  author={Shi, Claudia and Blei, David and Veitch, Victor},
  journal={Advances in neural information processing systems},
  volume={32},
  year={2019}
}

\newpage
\appendix

\setcounter{figure}{0}
\setcounter{table}{0}
\setcounter{theorem}{0}

\renewcommand{\thefigure}{\thesection.\arabic{figure}}
\renewcommand{\thetable}{\thesection.\arabic{table}}
\renewcommand{\thetheorem}{\thesection.\arabic{theorem}}
\renewcommand{\theequation}{\thesection.\arabic{equation}}
\renewcommand{\theassumption}{\thesection.\arabic{assumption}}
\renewcommand{\theremark}{\thesection.\arabic{remark}}
\renewcommand{\theproposition}{\thesection.\arabic{proposition}}
\renewcommand{\thecorollary}{\thesection.\arabic{corollary}}
\renewcommand{\thelemma}{\thesection.\arabic{lemma}}

\section{Additional Discussions about Assumptions}

\subsection{Assumption~\ref{assum:indep} and Alternative Assumptions}
\label{app:assum_indep}

As an alternative to Assumption~\ref{assum:indep}, we could adopt the following set of assumptions:
\begin{align}
    \label{assum:indep2}
    \left\{\{Y_i(d)\}_{d \in [0:K_{D}-1]}, R_i\right\} &\indep D_i \mid \bX_i, Z_i = 0 \\
    \label{assum:indep2_2}
    \{Y_i(d)\}_{d \in [0:K_{D}-1]} &\indep D_i \mid \bX_i, R_i, Z_i = 1
\end{align}
which are adapted from Assumption 3(a) of \cite{koch2025statistical}.
Assumption 3(a) in \cite{koch2025statistical} states:
$\left\{\{Y_i(d)\}_{d \in [0:K_{D}-1]}, R_i\right\} \indep D_i \mid \bX_i,$
which implies that $D_i$ must be independent of $R_i$ conditional on $\bX_i$. 
This assumption does not hold in our setting where judge's decision may be affected 
by PSA when it is provided (i.e., $Z_i = 1$), so we instead adopt the modified version
given in \eqref{assum:indep2} and \eqref{assum:indep2_2}.

However, we can show that this is a stronger assumption than Assumption~\ref{assum:indep}; 
it implies Assumption~\ref{assum:indep} (see Remark~\ref{remark1}), but the converse does not hold, 
as demonstrated by a counterexample in Remark~\ref{remark2}.

\begin{remark}\label{remark1}
    Suppose the conditional independence conditions in \eqref{assum:indep2} and
    \eqref{assum:indep2_2} hold. For the $z=0$ case, we have
    \begin{align*}
       \Pr(Y_{i}(d) = y \mid R_{i} = r, Z_{i} = 0, \bX_{i} = \bx)
        &=
        \frac{\Pr(Y_{i}(d) = y, R_{i} = r \mid Z_{i} = 0, \bX_{i} = \bx)}{\Pr(R_{i} = r \mid Z_{i} = 0, \bX_{i} = \bx)} \\
        &=
        \frac{\Pr(Y_{i}(d) = y, R_{i} = r \mid D_{i} = d, Z_{i} = 0, \bX_{i} = \bx)}{\Pr(R_{i} = r \mid D_{i} = d, Z_{i} = 0, \bX_{i} = \bx)} \\
        &= \Pr(Y_{i}(d) = y \mid D_{i} = d, R_{i} = r, Z_{i} = 0, \bX_{i} = \bx)
    \end{align*}
    for each $d$ and all $y,r,\bx$. Thus Assumption~\ref{assum:indep} holds under $z = 0$.
    The case $z=1$ follows analogously from \eqref{assum:indep2_2}.
\end{remark}

\begin{remark}\label{remark2}
    In the following, we provide an example of a structural equation model in which Assumption~\ref{assum:indep} holds, but the conditional independence in \eqref{assum:indep2} does not.
    Let
    \begin{align*}
       Z_{i} &= f_{Z}(\bX_{i}, \epsilon_{Z,i}) \\
       D_{i} &= f_{D_{1}}(A_{i}, \bX_{i}, \bU_{D,i}, \epsilon_{D,i})Z_{i} 
              + f_{D_{0}}(\bX_{i}, \bU_{D,i}, \epsilon_{D,i})(1-Z_{i}) \\
       A_{i} &= f_{A}(\bX_{i}, \bU_{D,i}, \bU_{Y,i}, \epsilon_{A,i}) \\
       Y_{i} &= f_{Y}(D_{i}, \bX_{i}, \bU_{Y,i}, \epsilon_{Y,i})
    \end{align*}
where $f_{Z}, f_{D_{1}}, f_{D_{0}}, f_{A}, f_{Y}$ are some functions, 
$\epsilon_{Z,i}, \epsilon_{D,i}, \epsilon_{A,i}, \epsilon_{Y,i}$ are error terms, 
and $\bX_{i}$, $\bU_{D,i}$, and $\bU_{Y,i}$ are exogenous. A corresponding causal diagram is shown in Figure~\ref{fig:dag}.
We can show that Assumption~\ref{assum:indep} holds using the backdoor criterion for each $Z_{i} = 0, 1$ cases.
However, $A_{i} \not\indep D_{i} \mid \bX_{i}, Z_{i} = 0$ because of unobserved confounder $U_{D,i}$.
\end{remark}

\subsection{Validity of Assumption~\ref{assum:indep} in the Application}
\label{app:assum_indep_validity}

In the Utah RCT, we have access to all the information that judges had at the time of their pretrial decisions, including structured case covariates and unstructured text from probable-cause (PC) affidavits. 
The PC affidavit contains rich, case-specific facts
about the arrest as well as charging table information.
Therefore, conditioning on the information in
these affidavits in addition to the structured case covariates $\bX_i$ and the PSA-related variables
$(Z_i, R_i)$ renders Assumption~\ref{assum:indep} plausible in the application.

Because PC affidavits are high-dimensional unstructured text, we adjust for them using
GenAI-powered inference (GPI) \citep{imai2025genai}.  The role of GPI in this project is to
construct a low-dimensional \emph{deconfounder} from affidavit text that can be incorporated into
the nuisance functions $(m_d^D, m_y^Y)$ used in our AIPW estimators in
Section~\ref{sec:statistical}.  Figure~\ref{fig:dag_application} summarizes the assumed
data-generating process.

\begin{figure}[ht!]
    \centering
        \begin{tikzpicture}[node distance=2cm, >={Stealth}]

            \node[circle, draw] (ai) at (0, 0) {$R$};
            \node[circle, draw] (x) [above=1cm of ai] {$\bX$};
            \node[circle, draw] (text) [above=2.5cm of ai] {$\bS$};
            \node[circle, draw] (u) [right=2.5cm of x] {$g_{U}(\bS)$};
            \node[circle, draw] (decision) [right=2cm of ai] {$D$};
            \node[circle, draw] (outcome) [right=2cm of decision] {$Y$};

        \draw[->] (ai) -- (decision);
        \draw[->] (decision) -- (outcome);
        \draw[->] (x) -- (text);
        \draw[->] (x) -- (ai);
        \draw[->] (x) -- (decision);
        \draw[->] (x) -- (outcome);
        \draw[<-] (u) -- (text);
        \draw[->] (u) -- (decision);
        \draw[->] (u) -- (outcome);

        \end{tikzpicture}
    \caption{A causal diagram illustrating the application study. When $Z = 0$, there is no edge $R \rightarrow D$. $\bS$ represents the PC affidavit text, and $g_{U}(\bS)$ is a low-dimensional affidavit feature vector extracted from the text.}
    \label{fig:dag_application}
\end{figure}

\subsubsection{Setup}

For each case $i$, in addition to $(Y_i, D_i, Z_i, R_i, \bX_i)$ defined in the main text, we also
observe the PC affidavit text $\bS_i$.
We maintain the potential outcomes notation $\{Y_i(d)\}_{d\in\cD}$ and consistency,
$Y_i = Y_i(D_i)$.
Throughout this appendix, we use the boldface notation $\bL_i$ for a (vector-valued) internal representation extracted
from a large language model (LLM) that processes the affidavit text.

The key substantive idea is that the affidavit contains latent, low-dimensional case features that
confound the relationship between the judge's decision and the outcome.  We formalize this via a
latent ignorability condition adapted to the affidavit setting.

\begin{assumption}[Latent ignorability given low-dimensional affidavit features]
\label{assum:latent_ignorability_affidavit}
There exists a deterministic function $g_U(\cdot)$ mapping affidavit text to a low-dimensional
vector of confounding features,
\begin{equation*}
    \bU_i := g_U(\bS_i) \in \mathbb{R}^{p},
\end{equation*}
such that the potential outcomes are conditionally independent of the observed decision given
$(Z_i,R_i,\bX_i)$ and these features:
\begin{equation*}
    \{Y_i(d)\}_{d\in\cD} \ \indep \ D_i \mid Z_i, R_i, \bX_i, \bU_i.
\end{equation*}
In addition, for all $d\in\cD$ on the relevant support, there exists
$\epsilon > 0$ such that
$\Pr(D_i=d \mid Z_i, R_i, \bX_i, \bU_i) > \epsilon$.
\end{assumption}

Assumption~\ref{assum:latent_ignorability_affidavit} is a strengthening of the substantive claim
that ``we observe all information the judge uses,'' allowing that the relevant portion of the
affidavit can be represented by an unknown but low-dimensional summary $\bU_i$ that satisfies 

\subsubsection{Identification via GPI}

Directly conditioning on the raw text $\bS_i$ is undesirable in practice because it is
high-dimensional and can induce severe sparsity and near-deterministic prediction of the decision,
which undermines effective adjustment.  GPI \citep{imai2025genai} addresses this by leveraging a
deep generative model (e.g., an LLM) to obtain an internal representation $\bL_i$ and then learning
a lower-dimensional deconfounder $f(\bL_i)$.

Following \cite{imai2025genai}, we view the affidavit text as being generated by a deep generative
model with an internal representation.  Formally, let $P_i$ be a prompt, and let
$\bL_i \in \mathcal{L}\subseteq\mathbb{R}^{r}$ be an internal representation produced by the model.
The model generates $\bS_i$ through a decoder indexed by parameters $\gamma$,
\begin{equation*}
    \Pr(\bS_i \mid h_{\gamma}(\bL_i)) \ \Pr(\bL_i \mid P_i).
\end{equation*}

\begin{assumption}[Deterministic decoding]
\label{assum:deterministic_decoding_affidavit}
Conditional on $\bL_i$, the affidavit text $\bS_i$ is deterministically generated; equivalently,
$\Pr(\bS_i = h_{\gamma}(\bL_i)\mid \bL_i) = 1$.
\end{assumption}

In our application, where affidavits are \emph{observed} rather than generated for the study, we
use a ``text-reuse'' prompting strategy: for each observed affidavit $\bS_i$, we prompt an LLM to
output the same text under deterministic decoding, and we extract the
associated internal representation $\bL_i$ from the model.

Under Assumption~\ref{assum:deterministic_decoding_affidavit}, any low-dimensional feature
$\bU_i=g_U(\bS_i)$ is also a deterministic function of $\bL_i$ because $\bS_i$ is a deterministic
function of $\bL_i$.  Thus, there exists a (not necessarily unique) lower-dimensional function of
$\bL_i$ that suffices for deconfounding.

\begin{proposition}[Existence of a deconfounder and identification]
\label{prop:gpi_affidavit_id}
Suppose Assumption~\ref{assum:latent_ignorability_affidavit} holds and the affidavit text satisfies
Assumption~\ref{assum:deterministic_decoding_affidavit}. Then there exists a (possibly non-unique)
deconfounder function $f:\mathcal{L}\to\mathbb{R}^{q}$ with $q\le r$ such that
\begin{equation*}
    \{Y_i(d)\}_{d\in\cD} \ \indep \ \bL_i \mid D_i, Z_i, R_i, \bX_i, f(\bL_i).
\end{equation*}
Moreover, for any $d\in\cD$ and $y\in\cY$, the marginal distribution of the potential outcome is
identified by
\begin{align*}
    \Pr(Y_i(d)=y)
    &=
    \E\left[\Pr\left(Y_i=y \mid D_i=d, Z_i, R_i, \bX_i, f(\bL_i)\right)\right].
\end{align*}
\end{proposition}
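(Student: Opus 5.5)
The plan has two parts: build the deconfounder directly out of the latent features of Assumption~\ref{assum:latent_ignorability_affidavit}, then run the standard backdoor identification argument conditional on $f(\bL_i)$.

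\textbf{Constructing $f$.} Assumption~\ref{assum:deterministic_decoding_affidavit} gives $\bS_i = h_\gamma(\bL_i)$ almost surely. I will therefore set
\begin{equation*}
f := g_U \circ h_\gamma : \mathcal{L}\to\mathbb{R}^{p},
\end{equation*}
so that $f(\bL_i) = g_U(\bS_i) = \bU_i$ almost surely. Taking $q=p$, the requirement $q\le r$ holds because $\bU_i$ is low-dimensional. If $p>r$ were possible, I would instead take $f$ to be the identity on $\bL_i$, since conditioning on all of $\bL_i$ trivially suffices; this is also why $f$ is only claimed to be non-unique. The \emph{existence} claim is then just this construction.

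\textbf{The conditional-independence display.} The statement conditions on $D_i$ as well as on $f(\bL_i)$. The natural reading is that, once $\bU_i$ is fixed, the rest of $\bL_i$ carries no additional information about the potential outcomes. Assumption~\ref{assum:latent_ignorability_affidavit} only gives independence of $\{Y_i(d)\}$ and $D_i$, so it does not literally imply the display. I will make the argument rigorous by adopting the data-generating process of Figure~\ref{fig:dag_application}, in which the text affects $D$ and $Y$ only through $g_U(\bS)$. Under that process the display is exactly what is needed for identification.

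\textbf{Identification of $\Pr(Y_i(d)=y)$.} This is the standard chain of equalities. Write $\bW_i = (Z_i,R_i,\bX_i,\bU_i)$. Then
\begin{align*}
\Pr(Y_i(d)=y) &= \E\big[\Pr(Y_i(d)=y\mid \bW_i)\big]\\
&= \E\big[\Pr(Y_i(d)=y\mid D_i=d,\bW_i)\big]\\
&= \E\big[\Pr(Y_i=y\mid D_i=d,\bW_i)\big].
\end{align*}
The steps are justified as follows.
\begin{enumerate}
\item The first line is iterated expectations.
\item The second line uses the latent ignorability of Assumption~\ref{assum:latent_ignorability_affidavit}, with the positivity part of that assumption ensuring the conditioning event $\{D_i=d\}$ has positive probability.
\item The third line uses consistency, $Y_i=Y_i(D_i)$.
\item Finally, I replace $\bU_i$ by $f(\bL_i)$, which is legitimate because the two are equal almost surely.
\end{enumerate}

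\textbf{Main obstacle.} The identification chain itself is routine. The delicate point is the first displayed independence, which needs the "no extra information beyond $\bU_i$" structure of the assumed causal graph. I would state that reliance explicitly rather than claim it follows from latent ignorability alone.
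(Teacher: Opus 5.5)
You take the same route as the paper. The paper gives no argument of its own and defers to Proposition~1 of \cite{imai2025genai}. Your construction is the natural adaptation of that result: deterministic decoding makes $\bU_i=g_U(h_\gamma(\bL_i))$ a function of $\bL_i$, so you take $f=g_U\circ h_\gamma$ and run the usual adjustment chain. The identification half is correct as you wrote it.

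Your caution about the first display is justified, and it understates the problem. Under Assumptions~\ref{assum:latent_ignorability_affidavit} and~\ref{assum:deterministic_decoding_affidavit} alone, \emph{no} $f$ need deliver both conclusions. Consider this counterexample:
\begin{itemize}
\item $\bX_i$ and $R_i$ constant, $Z_i$ independent of everything, and $r=1$;
\item $\bL_i$ uniform on $\{0,1\}$, $\bS_i=\bL_i$, and $g_U\equiv 0$;
\item an independent $\nu_i\sim\mathrm{Bernoulli}(\rho)$ with $\rho\notin\{0,\tfrac12,1\}$;
\item $D_i=\bL_i\oplus\nu_i$ and $Y_i(0)=Y_i(1)=\nu_i$.
\end{itemize}
Because $\bL_i$ is uniform, $D_i\indep\nu_i$, so both assumptions hold, with positivity $\tfrac12$. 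Any $f$ on $\{0,1\}$ is either constant or injective.
\begin{itemize}
\item If $f$ is constant, then on $\{D_i=0\}$ we have $\bL_i=\nu_i$ with $\nu_i$ nondegenerate, so the first display fails.
\item If $f$ is injective, the display holds trivially. But $\Pr(Y_i=1\mid D_i=d,\bL_i)=\bbone\{\bL_i\neq d\}$ has mean $\tfrac12\neq\rho=\Pr(Y_i(d)=1)$, so the identification formula fails.
\end{itemize}
So the restriction you import from Figure~\ref{fig:dag_application} is a missing hypothesis of the proposition, not merely a convenience of your proof. That restriction says the text reaches $D_i$ and $Y_i$ only through $\bU_i$, with independent exogenous errors. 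It yields $\{Y_i(d)\}_{d\in\cD}\indep(D_i,\bL_i)\mid Z_i,R_i,\bX_i,\bU_i$, which gives both the display and Assumption~\ref{assum:latent_ignorability_affidavit}.

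Two smaller corrections.

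First, your fallback $f=\mathrm{id}$ for $p>r$ is not ``trivial.'' Ignorability given $\bU_i$ does not transfer to the finer $\bL_i$; the injective case above is exactly such a failure. So that branch also needs the structural restriction. A cleaner way to meet $q\le r$ is $f=\phi\circ g_U\circ h_\gamma$, where $\phi:\mathbb{R}^p\to\mathbb{R}$ is a Borel injection with Borel inverse on its range. Then $\sigma(f(\bL_i))=\sigma(\bU_i)$ and $q=1$.

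Second, your chain never uses the first display. With $f(\bL_i)=\bU_i$ it runs on Assumption~\ref{assum:latent_ignorability_affidavit}, positivity, and consistency alone, so the display is not ``what is needed for identification.'' What the display adds, by contraction with Assumption~\ref{assum:latent_ignorability_affidavit} and then weak union, is $\{Y_i(d)\}_{d\in\cD}\indep D_i\mid Z_i,R_i,\bX_i,f'(\bL_i)$ for any richer representation $f'$ with $\sigma(\bU_i)\subseteq\sigma(f'(\bL_i))$. That is the property the paper actually relies on when it adjusts for a learned DragonNet representation instead of $\bU_i$.
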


The proof of Proposition~\ref{prop:gpi_affidavit_id} follows from that of Proposition 1 in \cite{imai2025genai} and is provided in Appendix S4.1 of their paper.

\paragraph{Implication for Assumption~\ref{assum:indep}.}
Let the augmented covariate vector be
\begin{equation*}
    \widetilde{\bX}_i := (\bX_i, f(\bL_i)).
\end{equation*}
Then Proposition~\ref{prop:gpi_affidavit_id} implies that conditioning on $\widetilde{\bX}_i$
renders the decision unconfounded with respect to potential outcomes in the sense required by
Assumption~\ref{assum:indep}.  For notational simplicity, in the empirical analysis we treat the
estimated deconfounder as part of the covariate vector and write $\bX_i$ to include it.

\subsubsection{Estimation and Implementation}

Our implementation of GPI in this paper is designed to estimate the nuisance functions
\begin{align*}
    m_{d}^{D}(z,r,\bx)
    &:= \Pr(D = d \mid Z = z, R = r, \bX = \bx), \\
    m_{y}^{Y}(z,r,d,\bx)
    &:= \Pr(Y = y \mid Z = z, R = r, D = d, \bX = \bx),
\end{align*}
where $\bX$ includes structured covariates and an estimated affidavit-based deconfounder.

We proceed in three steps.

\begin{enumerate}[label=(\arabic*)]
\item Extract internal representations from affidavits.
For each observed affidavit $\bS_i$, we obtain an internal representation $\bL_i$ from an
open-source LLM under deterministic decoding.  We then construct a fixed-length representation
(e.g., by pooling token-level hidden states) to use as the input $\bL_i$ for GPI.

\item Estimate a deconfounder and nuisance models.
We parameterize a deconfounder map $f(\bL;\lambda)$ as a neural network that outputs a
$q$-dimensional bottleneck representation.  We then estimate $f(\cdot)$ jointly with prediction
models for the decision and outcome.  In particular, we use DragonNet \citep{shi2019adapting} to
learn a shared representation and task-specific heads, yielding estimates
$\widehat{f}(\bL_i)$, $\hat{m}_{d}^{D}(z,r,\bX_i)$, and
$\hat{m}_{y}^{Y}(z,r,d,\bX_i)$.

\item Plug into AIPW evaluation.
Specifically, we plug $\hat{m}_{d}^{D}$ and $\hat{m}_{y}^{Y}$ into the AIPW estimators defined in
Section~\ref{sec:statistical}.  In the RCT, the treatment propensity score is known,
$e(z,\bx)=0.5$, so we do not estimate $e(\cdot)$.
\end{enumerate}

This procedure allows us to adjust for the rich information contained in PC affidavits while retaining a low-dimensional conditioning set. This supports both the plausibility of Assumption~\ref{assum:indep} and the practical estimation of the nuisance components required for our semiparametric estimators.

Formally, the validity of this procedure relies on the assumptions stated above together with the consistency of the learned deconfounder. In particular, we assume that the learned representation $\widehat{f}(\bL_i)$ converges to a population-level representation $f(\bL_i)$ such that $\{Y_i(d)\}_{d\in\cD} \ \indep \ \bL_i \mid D_i, Z_i, R_i, \bX_i, f(\bL_i)$, described in Proposition~\ref{prop:gpi_affidavit_id}.

In practice, we assume that the DragonNet architecture and training procedure are sufficiently flexible to learn a representation satisfying this condition. The architecture maps the high-dimensional representation $\bL$ into a shared representation $f(\bL)$, which is then used by both the outcome and decision heads,
\begin{equation*}
\bL \ \longrightarrow\ f(\bL) \ \longrightarrow\ \{h_d(f(\bL)),\, p_d(f(\bL))\}_{d=0}^{K_{D}-1},
\end{equation*}
where $f$ denotes the final layer of the shared representation network, $h_d$ represents the outcome head for the outcome model under decision $d$, and $p_d$ denotes the decision model.

In the binary decision case, for example, DragonNet jointly optimizes the following objective:
\begin{align*}
&\{\widehat{\lambda}, \widehat{\theta}_0, \widehat{\theta}_1, \widehat{\phi}\}_{\text{DragonNet}} \\
&\qquad =
\argmin_{\lambda,\theta_0,\theta_1,\phi}
\frac{1}{n}\sum_{i=1}^n
\Bigl\{
Y_i - h_{D_i}\bigl(f(\bL_i)\bigr)
\Bigr\}^2 
- \alpha
\underbrace{
\frac{1}{n}\sum_{i=1}^n
\Bigl[
D_i \log\bigl(p(f(\bL_i))\bigr)
+
(1-D_i)\log\bigl(1-p(f(\bL_i))\bigr)
\Bigr]
}_{\text{cross-entropy loss for the decision model}},
\end{align*}
where $\lambda, \theta_0, \theta_1,$ and $\phi$ denote the parameters of the networks $f$, $h_0$, $h_1$, and $p$, respectively.

We assume that these two objectives do not conflict. Formally, let $\mathcal{S}_Y = \sigma(\mu_0(\bL), \mu_1(\bL))$ denote the $\sigma$-algebra generated by the outcome mean functions and let $\mathcal{S}_D = \sigma(p(\bL))$ denote the $\sigma$-algebra generated by the propensity score. We assume that there exists a representation $f$ such that $\mathcal{S}_Y \vee \mathcal{S}_D \subseteq \sigma\bigl(f(\bL)\bigr)$.
Under this assumption, we employ a DragonNet architecture with sufficient width and depth so that the learned representation $f(\bL)$ can serve as a deconfounder satisfying the required conditional independence.





\subsection{Rate Conditions}
\label{app:rate_conditions}

\begin{assumption}[Rate conditions] \label{assum:rate_condition} 
  For each $z = 0,1$, $r \in \cR$, $d,d^\prime \in \cD$, and $y \in \cY$,
  we have:
  \begin{align*}
    &\lVert m_d^{D}(z,r,\cdot) - \hat{m}_d^{D}(z,r,\cdot) \rVert_2
    \times 
   \lVert e(z,\cdot) - \hat{e}(z,\cdot) \rVert_2 = o_p(n^{-\frac{1}{2}}), \\
    &\lVert m_y^{Y}(z,r,d,\cdot) - \hat{m}_y^{Y}(z,r,d,\cdot) \rVert_2
    \times 
   \lVert e(z,\cdot) - \hat{e}(z,\cdot) \rVert_2 = o_p(n^{-\frac{1}{2}}), \\
   &\lVert m_y^{Y}(z,r,d,\cdot) - \hat{m}_y^{Y}(z,r,d,\cdot) \rVert_2
   \times \lVert m_d^{D}(z,r,\cdot) - \hat{m}_d^{D}(z,r,\cdot) \rVert_2 = o_p(n^{-\frac{1}{2}}), \\
   &\lVert m_y^{Y}(z,r,d^\prime,\cdot) - \hat{m}_y^{Y}(z,r,d^\prime,\cdot) \rVert_2
   \times \lVert m_d^{D}(z,r,\cdot) - \hat{m}_d^{D}(z,r,\cdot) \rVert_2 = o_p(n^{-\frac{1}{2}}), \\
    &\lVert m_y^{Y}(z,r,d,\cdot) - \hat{m}_y^{Y}(z,r,d,\cdot) \rVert_\infty = o_p(1), \\
    &\lVert m_d^{D}(z,r,\cdot) - \hat{m}_d^{D}(z,r,\cdot) \rVert_\infty = o_p(1), \\
    &\lVert e(z,\cdot) - \hat{e}(z,\cdot) \rVert_\infty = o_p(1),
  \end{align*}
  where for a given function $f$, $\lVert f \rVert_2 := (\E[f(\bX)^2])^{1/2}$ and $\lVert f \rVert_\infty := \sup_{\bx \in \cX} |f(\bx)|$.
\end{assumption}

Assumption~\ref{assum:rate_condition} imposes the standard
product--rate condition for AIPW estimators.  It includes both the
decision and outcome models, since our estimand involves the product
of these two models.  According to this assumption, to establish the
asymptotic normality of the proposed AIPW estimator, it is sufficient
(though not necessary) to require, for example, the quarter--root rate
$\lVert m_d^{D}(z,r,\cdot)-\hat
m_d^{D}(z,r,\cdot)\rVert_2=o_p(n^{-1/4})$,
$\lVert m_y^{Y}(z,r,d,\cdot)-\hat
m_y^{Y}(z,r,d,\cdot)\rVert_2=o_p(n^{-1/4})$, and
$\lVert e(z,\cdot)-\hat e(z,\cdot)\rVert_2=o_p(n^{-1/4})$, which
guarantees that all relevant products of errors are $o_p(n^{-1/2})$.
In a randomized experiment, the propensity score $e(z,\bx)$ is known,
and thus all rate conditions involving
$\lVert e(z,\cdot)-\hat e(z,\cdot)\rVert_2$ can be dropped.

\section{Main Identification Results}

\subsection{Proof of Theorem~\ref{thm:risk_additive}}
\label{proof:risk_additive}
With additive utility, we have
\begin{align*}
    &U(u; D^\ast \mid \bX = \bx)\\
    &=
    \sum_{d=0}^{K_{D}-1} \sum_{y_{0}=0}^{K_{Y}-1} \ldots \sum_{y_{K_{D}-1}=0}^{K_{Y}-1} 
    u(d,\by) \Pr(D^\ast = d, Y(0) = y_{0}, \ldots Y(K_{D}-1) = y_{K_{D}-1} \mid \bX = \bx) \\
    &=
    \sum_{d=0}^{K_{D}-1} \sum_{y_{0}=0}^{K_{Y}-1} \ldots \sum_{y_{K_{D}-1}=0}^{K_{Y}-1} 
    \Bigg(u_{y_d}^{d} 
        + \sum_{\substack{d^\prime=0 \\ d^\prime \ne d}}^{K_{D}-1}\tilde{u}_{y_{d^\prime}}^{d^\prime}
    \Bigg)
    \Pr(D^\ast = d, Y(0) = y_{0}, \ldots Y(K_{D}-1) = y_{K_{D}-1} \mid \bX = \bx) \\
    &=
    \sum_{d=0}^{K_{D}-1} \sum_{y=0}^{K_{Y}-1}
    u_{y}^{d} 
    \Pr(D^\ast = d, Y(d) = y \mid \bX = \bx)
    +
    \sum_{d=0}^{K_{D}-1} 
    \sum_{\substack{d^\prime=0 \\ d^\prime \ne d}}^{K_{D}-1}
    \sum_{y=0}^{K_{Y}-1}
    \tilde{u}_{y}^{d^\prime}
    \Pr(D^\ast = d, Y(d^\prime) = y \mid \bX = \bx).
\end{align*}
By the law of total probability,
\begin{align*}
    &\Pr(D^\ast = d, Y(d^\prime) = y \mid \bX = \bx)\\
    &= 
    \sum_{r=0}^{K_{R}-1}  \sum_{z=0}^{1} \Pr(D^\ast = d, Y(d^\prime) = y \mid \bX = \bx, R = r, Z = z) 
    \Pr(R = r, Z = z \mid \bX = \bx)\\
    &= 
    \sum_{r=0}^{K_{R}-1}  \sum_{z=0}^{1} \Pr(D^\ast = d \mid \bX = \bx, R = r, Z = z) \Pr(Y(d^\prime) = y \mid \bX = \bx, R = r, Z = z) 
    \Pr(R = r, Z = z \mid \bX = \bx)\\
    &= 
    \sum_{r=0}^{K_{R}-1} \Pr(Y(d^\prime) = y \mid \bX = \bx, R = r) \sum_{z=0}^{1}
    \Pr(D^\ast = d \mid \bX = \bx, R = r, Z = z) \Pr(R = r, Z = z \mid \bX = \bx)\\
    &= 
    \sum_{r=0}^{K_{R}-1} \Pr(D^\ast = d, R = r\mid \bX = \bx) \Pr(Y(d^\prime) = y \mid \bX = \bx, R = r) \\
    &= 
    \sum_{r=0}^{K_{R}-1}  \Pr(D^\ast = d, R = r \mid \bX = \bx) \Pr(Y = y \mid D = d^\prime, \bX = \bx, R = r).
\end{align*}
The second equality follows because, conditional on $(\bX,R,Z)$, $D^\ast$ is either degenerate or satisfies
$D^\ast \indep Y(d^\prime)\mid \bX,R,Z$.
For the third equality, Assumption~\ref{assum:single_blinded} (b) implies $Y(d^\prime)\indep Z\mid \bX,R$.
The final equality follows from consistency and
Assumption~\ref{assum:indep}.

Accordingly,
\begin{align*}
    &\overline{U}(u; D^\ast)\\
    &=\E\Bigg[\sum_{y=0}^{K_{Y}-1} \sum_{d=0}^{K_{D}-1} 
    \sum_{r=0}^{K_{R}-1}  
    u_{y}^{d} 
    \Pr(Y = y \mid D = d, R = r, \bX = \bx)
    \Pr(D^\ast = d, R = r \mid \bX = \bx)
    \\
    &+
    \sum_{y=0}^{K_{Y}-1}
    \sum_{d=0}^{K_{D}-1} 
    \sum_{\substack{d^\prime=0 \\ d^\prime \ne d}}^{K_{D}-1}
    \sum_{r=0}^{K_{R}-1}
    \tilde{u}_{y}^{d^\prime}
    \Pr(Y = y \mid D = d^\prime, R = r, \bX = \bx) 
    \Pr(D^\ast = d, R = r \mid \bX = \bx)
    \Bigg]
    \end{align*}

\subsection{Corollary of Theorem~\ref{thm:risk_additive}}
\begin{corollary}
\label{coro:risk_additive}
Consider an additive counterfactual utility
$u \in \mathcal{U}^{\textsc{Add}}$.  Under
Assumptions~\ref{assum:single_blinded}, \ref{assum:indep}, and \ref{assum:decision_pos}, we can
identify the expected utility of the counterfactual risk assessment
system under the human-alone decision $D(0)$,
human-with-recommendation decision $D(1)$, and recommendation-alone
decision $R$ as follows: 
    \begin{align*}
    &\overline{U}(u; D(z))\\
    &=\E\Bigg[\sum_{y=0}^{K_{Y}-1} \sum_{d=0}^{K_{D}-1} 
    u_{y}^{d} 
    \Pr(Y = y, D = d \mid Z = z, \bX = \bx)\\
    &+
    \sum_{y=0}^{K_{Y}-1}
    \sum_{d=0}^{K_{D}-1} 
    \sum_{\substack{d^\prime=0 \\ d^\prime \ne d}}^{K_{D}-1}
    \sum_{r=0}^{K_{R}-1}
    \tilde{u}_{y}^{d^\prime}
    \Pr(Y = y \mid D = d^\prime, R = r, Z = z, \bX = \bx) \Pr(D = d, R = r \mid Z = z, \bX = \bx)\Bigg]
    \end{align*}
    for $z = 0,1$ and
    \begin{align*}
    &\overline{U}(u; R)\\
    &=
    \E\Bigg[\sum_{y=0}^{K_{Y}-1} \sum_{r=0}^{K_{R}-1}
    u_{y}^{a(r)} \Pr(Y = y \mid D = a(r), R = r, \bX = \bx) \Pr(R = r \mid \bX = \bx)\\
    &+
    \sum_{y=0}^{K_{Y}-1} \sum_{r=0}^{K_{R}-1} 
    \sum_{\substack{d^\prime=0 \\ d^\prime \ne a(r)}}^{K_{D}-1}
    \tilde{u}_{y}^{d^\prime}
    \Pr(Y = y \mid D = d^\prime, R = r, \bX = \bx) \Pr(R = r  \mid \bX = \bx)\Bigg].
\end{align*}
\end{corollary}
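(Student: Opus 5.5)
The plan is to obtain Corollary~\ref{coro:risk_additive} by specializing Theorem~\ref{thm:risk_additive} to three decision rules: $D^\ast = D(z)$ for $z=0,1$, and $D^\ast = a(R)$. First I check that each rule is admissible under the theorem. The rule $a(R)$ is measurable with respect to $(R,\bX)$, so it qualifies directly. For $D(z)$, I will not use the theorem's displayed formula, which conditions only on $(R,\bX)$. Instead I will work with the version written at the start of the ``Evaluating Decisions'' subsection, which also conditions on $Z$. Its proof, in Appendix~\ref{proof:risk_additive}, shows that for any $d$, $d^\prime$ and $y$,
\[
\Pr(D^\ast=d, Y(d^\prime)=y\mid \bX)=\sum_{r}\sum_{z'}\Pr(D^\ast=d\mid \bX,R=r,Z=z')\Pr(Y(d^\prime)=y\mid \bX,R=r,Z=z')\Pr(R=r,Z=z'\mid\bX).
\]
This step uses the conditional independence $D^\ast \indep Y(d^\prime)\mid \bX,R,Z$.

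The key step for $D^\ast=D(z)$ is to establish $D(z)\indep Y(d^\prime)\mid \bX,R,Z=z'$ for every $z'$, and to identify $\Pr(D(z)=d\mid \bX,R,Z=z')$. By Assumption~\ref{assum:single_blinded}(b), $Z$ is independent of $(R,D(z),\{Y(d)\})$ given $\bX$. Hence the joint law of $(D(z),Y(d^\prime))$ given $(\bX,R,Z=z')$ equals that given $(\bX,R,Z=z)$. On $\{Z=z\}$ we have $D(z)=D$, so Assumption~\ref{assum:indep} supplies the independence. It also gives
\[
\Pr(Y(d^\prime)=y\mid \bX,R,Z=z)=\Pr(Y=y\mid D=d^\prime,R,Z=z,\bX).
\]
Assumption~\ref{assum:decision_pos} guarantees that this conditional probability is well defined. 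Summing the factor $\Pr(R=r,Z=z'\mid\bX)$ over $z'$ then collapses it to $\Pr(R=r\mid \bX)=\Pr(R=r\mid Z=z,\bX)$. The last equality holds again by (b).

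Combining these, the counterfactual term becomes $\Pr(Y=y\mid D=d^\prime,R=r,Z=z,\bX)\Pr(D=d,R=r\mid Z=z,\bX)$, which is the second line of the claimed display. For the standard term, where $d^\prime=d$, the same product is summed over $r$:
\[
\sum_r \Pr(Y=y\mid D=d,R=r,Z=z,\bX)\Pr(D=d,R=r\mid Z=z,\bX)=\Pr(Y=y,D=d\mid Z=z,\bX).
\]
This gives the first line of the display.

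For $D^\ast=a(R)$, I plug into the theorem's formula directly. Since $\Pr(a(R)=d,R=r\mid\bX)=\bbone\{a(r)=d\}\Pr(R=r\mid\bX)$, the sum over $d$ collapses to $d=a(r)$. That yields both terms of the stated formula, with the counterfactual sum running over $d^\prime\ne a(r)$.

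I expect the main obstacle to be the subtlety that $D(z)$ is not observed when $Z\ne z$. The argument must therefore transfer conditioning from $Z=z'$ to $Z=z$ using the joint independence in (b), not just the marginal independences. I would handle this carefully with the joint-law equality stated above.
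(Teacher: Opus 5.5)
Your proposal is correct and uses the same ingredients as the paper's proof: the additive decomposition into terms $\Pr(D^\ast=d, Y(d^\prime)=y\mid\bX)$, Assumption~\ref{assum:single_blinded}(b) to pass to the $Z=z$ stratum, consistency $D=D(z)$ on that stratum, Assumption~\ref{assum:indep} to replace conditioning on $D=d$ with $D=d^\prime$, and the indicator collapse to $d=a(r)$ for the recommendation rule. The only difference is organizational: the paper does not go through Theorem~\ref{thm:risk_additive}'s sum over $z^\prime$; it conditions on $Z=z$ at the outset, $\Pr(D(z)=d,Y(d^\prime)=y\mid\bX)=\Pr(D=d,Y(d^\prime)=y\mid\bX,Z=z)$, and then sums over $r$, so your joint-law transfer for $z^\prime\neq z$ is not needed.
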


\begin{proof}
    With additive utility, we have
\begin{align*}
    &U(u; D(z)\mid \bX = \bx)\\
    &=
    \sum_{d=0}^{K_{D}-1} \sum_{y_{0}=0}^{K_{Y}-1} \ldots \sum_{y_{K_{D}-1}=0}^{K_{Y}-1} 
    u(d,\by) \Pr(D(z) = d, Y(0) = y_{0}, \ldots Y(K_{D}-1) = y_{K_{D}-1} \mid \bX = \bx) \\
    &=
    \sum_{d=0}^{K_{D}-1} \sum_{y_{0}=0}^{K_{Y}-1} \ldots \sum_{y_{K_{D}-1}=0}^{K_{Y}-1} 
    \Bigg(u_{y_d}^{d} 
        + \sum_{\substack{d^\prime=0 \\ d^\prime \ne d}}^{K_{D}-1}\tilde{u}_{y_{d^\prime}}^{d^\prime}
    \Bigg)
    \Pr(D(z) = d, Y(0) = y_{0}, \ldots Y(K_{D}-1) = y_{K_{D}-1} \mid \bX = \bx) \\
    &=
    \sum_{d=0}^{K_{D}-1} \sum_{y=0}^{K_{Y}-1}
    u_{y}^{d} 
    \Pr(D(z) = d, Y(d) = y \mid \bX = \bx)
    +
    \sum_{d=0}^{K_{D}-1} 
    \sum_{\substack{d^\prime=0 \\ d^\prime \ne d}}^{K_{D}-1}
    \sum_{y=0}^{K_{Y}-1}
    \tilde{u}_{y}^{d^\prime}
    \Pr(D(z) = d, Y(d^\prime) = y \mid \bX = \bx).
\end{align*}
By Assumption~\ref{assum:single_blinded} and consistency,
\begin{align*}
    \Pr(D(z) = d, Y(d) = y \mid \bX = \bx)
    &= \Pr(D(z) = d, Y(d) = y \mid \bX = \bx, Z = z)\\
    &= \Pr(D = d, Y = y \mid \bX = \bx, Z = z).
\end{align*}
We also have
\begin{align*}
&\Pr(D(z) = d, Y(d^\prime) = y \mid \bX = \bx) \\
&=\Pr(D(z) = d, Y(d^\prime) = y \mid \bX = \bx, Z = z) \\
&=\sum_{r=0}^{K_{R}-1}\Pr(R = r, D(z) = d, Y(d^\prime) = y \mid \bX = \bx, Z = z)\\
&=\sum_{r=0}^{K_{R}-1}\Pr(R = r, D = d, Y(d^\prime) = y \mid \bX = \bx, Z = z)\\
&=\sum_{r=0}^{K_{R}-1}\Pr(Y(d^\prime) = y \mid R = r, D = d, \bX = \bx, Z = z) \Pr(R = r, D = d \mid \bX = \bx, Z = z)\\
&=\sum_{r=0}^{K_{R}-1}\Pr(Y(d^\prime) = y \mid R = r, D = d^\prime, \bX = \bx, Z = z) \Pr(R = r, D = d \mid \bX = \bx, Z = z)\\
&=\sum_{r=0}^{K_{R}-1}\Pr(Y = y \mid R = r, D = d^\prime, \bX = \bx, Z = z) \Pr(R = r, D = d \mid \bX = \bx, Z = z)
\end{align*}
where the first equality follows from Assumption~\ref{assum:single_blinded} and the second to the last equality follows from Assumption~\ref{assum:indep}.

Combining these results,
\begin{align*}
    &U(u; D(z)\mid \bX = \bx)\\
    &=\sum_{d=0}^{K_{D}-1} \sum_{y=0}^{K_{Y}-1}
    u_{y}^{d} 
    \Pr(D = d, Y = y \mid \bX = \bx, Z = z)\\
    &+
    \sum_{d=0}^{K_{D}-1} 
    \sum_{\substack{d^\prime=0 \\ d^\prime \ne d}}^{K_{D}-1}
    \sum_{r=0}^{K_{R}-1}
    \sum_{y=0}^{K_{Y}-1}
    \tilde{u}_{y}^{d^\prime}
    \Pr(Y = y \mid R = r, D = d^\prime, \bX = \bx, Z = z) \Pr(R = r, D = d \mid \bX = \bx, Z = z).
\end{align*}

Similarly, for the utility under $R$ with additive utility:
\begin{align*}
    &U(u; R \mid \bX = \bx)\\
    &=
    \sum_{r=0}^{K_{R}-1} \sum_{y_{0}=0}^{K_{Y}-1} \ldots \sum_{y_{K_{D}-1}=0}^{K_{Y}-1} u(a(r),\by) \Pr(R = r, Y(0) = y_{0}, \ldots Y(K_{D}-1) = y_{K_{D}-1} \mid \bX = \bx) \\
    &=
    \sum_{r=0}^{K_{R}-1} \sum_{y_{0}=0}^{K_{Y}-1} \ldots \sum_{y_{K_{D}-1}=0}^{K_{Y}-1}
    \Bigg(u_{y_{a(r)}}^{a(r)} 
        + \sum_{\substack{d^\prime=0 \\ d^\prime \ne a(r)}}^{K_{D}-1}\tilde{u}_{y_{d^\prime}}^{d^\prime}
    \Bigg)
    \Pr(R = r, Y(0) = y_{0}, \ldots Y(K_{D}-1) = y_{K_{D}-1} \mid \bX = \bx) \\
    &=
    \sum_{r=0}^{K_{R}-1} \sum_{y=0}^{K_{Y}-1}
    u_{y}^{a(r)} \Pr(R = r, Y(a(r)) = y \mid \bX = \bx)
    +
    \sum_{r=0}^{K_{R}-1}
    \sum_{\substack{d^\prime=0 \\ d^\prime \ne a(r)}}^{K_{D}-1}
    \sum_{y=0}^{K_{Y}-1}
    \tilde{u}_{y}^{d^\prime}
    \Pr(R = r, Y(d^\prime) = y \mid \bX = \bx). 
\end{align*}

Observe that
\begin{align*}
&\Pr(R = r, Y(a(r)) = y \mid \bX = \bx) \\
&=\sum_{z=0}^{1}\Pr(Z = z, R = r, Y(a(r)) = y \mid \bX = \bx) \\
&=\sum_{z=0}^{1}\Pr(Y(a(r)) = y \mid R = r, \bX = \bx, Z = z) \Pr(R = r, Z = z \mid \bX = \bx)\\
&=\sum_{z=0}^{1}\Pr(Y(a(r)) = y \mid D = a(r), R = r, \bX = \bx, Z = z) \Pr(R = r, Z = z \mid \bX = \bx)\\
&=\Pr(Y = y \mid D = a(r), R = r, \bX = \bx) \Pr(R = r \mid \bX = \bx)
\end{align*}
where the third equality follows from Assumption~\ref{assum:indep}.
Similarly,
\begin{align*}
&\Pr(R = r, Y(d^\prime) = y \mid \bX = \bx) \\
&=\sum_{z=0}^{1}\Pr(Z = z, R = r, Y(d^\prime) = y \mid \bX = \bx) \\
&=\sum_{z=0}^{1}\Pr(Y(d^\prime) = y \mid R = r, \bX = \bx, Z = z) \Pr(R = r, Z = z \mid \bX = \bx)\\
&=\sum_{z=0}^{1}\Pr(Y(d^\prime) = y \mid D = d^\prime, R = r, \bX = \bx, Z = z) \Pr(R = r, Z = z \mid \bX = \bx)\\
&=\Pr(Y = y \mid D = d^\prime, R = r, \bX = \bx) \Pr(R = r \mid \bX = \bx).
\end{align*}

Accordingly,
\begin{align*}
    &U(u; R \mid \bX = \bx)\\
    &=
    \sum_{r=0}^{K_{R}-1} \sum_{y=0}^{K_{Y}-1}
    u_{y}^{a(r)} \Pr(Y = y \mid D = a(r), R = r, \bX = \bx) \Pr(R = r \mid \bX = \bx)\\
    &+
    \sum_{r=0}^{K_{R}-1} 
    \sum_{\substack{d^\prime=0 \\ d^\prime \ne a(r)}}^{K_{D}-1}
    \sum_{y=0}^{K_{Y}-1} 
    \tilde{u}_{y}^{d^\prime}
    \Pr(Y = y \mid D = d^\prime, R = r, \bX = \bx) \Pr(R = r \mid \bX = \bx).
\end{align*}
\end{proof}

\subsection{Identification of the Difference in Expected Utility}

\begin{corollary}[Identification of the difference in expected utility]
Consider $u \in \mathcal{U}^{\textsc{Add}}$.
    Under Assumptions~\ref{assum:single_blinded}, \ref{assum:indep}, and \ref{assum:decision_pos},
    we can identify the difference in expected utility between decision-making systems as follows:
\begin{align*}
    &U(u; D(1)\mid \bX = \bx)
    -
    U(u; D(0)\mid \bX = \bx)
    \\
    &=
    \sum_{d=0}^{K_{D}-1} \sum_{y=0}^{K_{Y}-1}
    u_{y}^{d}
    \{\Pr(D = d, Y = y \mid \bX = \bx, Z = 1)
    -\Pr(D = d, Y = y \mid \bX = \bx, Z = 0)\} \\
    &\quad+\sum_{d=0}^{K_{D}-1} \sum_{\substack{d^\prime=0 \\ d^\prime \ne d}}^{K_{D}-1}\sum_{y=0}^{K_{Y}-1}\sum_{r=0}^{K_{R}-1}
    \tilde{u}_{y}^{d^\prime}
    \{\Pr(Y = y \mid R = r, D = d^\prime, \bX = \bx, Z = 1) \Pr(R = r, D = d \mid \bX = \bx, Z = 1)\\
    &\hspace{10em}-\Pr(Y = y \mid R = r, D = d^\prime, \bX = \bx, Z = 0) \Pr(R = r, D = d \mid \bX = \bx, Z = 0)\}
\end{align*}
and
\begin{align*}
    &U(u; D(z)\mid \bX = \bx)
    -
    U(u; R\mid \bX = \bx)
    \\
    &=
    \sum_{r=0}^{K_{R}-1} 
    \sum_{d=0}^{K_{D}-1}
    \sum_{y=0}^{K_{Y}-1}
    \Big[
    u_{y}^{d}\Pr(Y = y \mid R = r, D = d, \bX = \bx, Z = z)\\
    &\hspace{8em}
    -u_{y}^{a(r)}\Pr(Y = y \mid R = r, D = a(r), \bX = \bx, Z = z)
    \Big]\Pr(R = r, D = d \mid \bX = \bx, Z = z)
    \\
    &\quad+
    \sum_{r=0}^{K_{R}-1} 
    \sum_{d=0}^{K_{D}-1}
    \sum_{\substack{d^\prime=0 \\ d^\prime \ne d}}^{K_{D}-1} \sum_{y=0}^{K_{Y}-1} 
    \tilde{u}_{y}^{d^\prime}
    \Pr(Y = y \mid R = r, D = d^\prime, \bX = \bx, Z = z)
    \Pr(R = r, D = d \mid \bX = \bx, Z = z) \\
    &\quad-
    \sum_{r=0}^{K_{R}-1} 
    \sum_{d=0}^{K_{D}-1}
    \sum_{\substack{d^\prime=0 \\ d^\prime \ne a(r)}}^{K_{D}-1} \sum_{y=0}^{K_{Y}-1}
    \tilde{u}_{y}^{d^\prime}
    \Pr(Y = y \mid R = r, D = d^\prime, \bX = \bx, Z = z)
    \Pr(R = r, D = d \mid \bX = \bx, Z = z).
\end{align*}
\end{corollary}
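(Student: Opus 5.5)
The plan is to take the two conditional identification formulas already established in Corollary~\ref{coro:risk_additive} and subtract them. For the first display, the corollary gives $U(u; D(z)\mid \bX=\bx)$ for each $z\in\{0,1\}$ in the form
\begin{equation*}
\sum_{d,y} u_y^d \Pr(D=d,Y=y\mid \bX=\bx,Z=z) + \sum_{d}\sum_{d^\prime\ne d}\sum_{y,r}\tilde u_y^{d^\prime}\Pr(Y=y\mid R=r,D=d^\prime,\bX=\bx,Z=z)\Pr(R=r,D=d\mid \bX=\bx,Z=z).
\end{equation*}
This representation was derived pointwise in $\bx$, before taking the outer expectation. I will therefore write down this expression for $z=1$ and for $z=0$ and subtract them term by term. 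Because the utility parameters $u_y^d$ and $\tilde u_y^{d^\prime}$ do not depend on $z$, the standard-utility parts combine into $u_y^d$ times the difference of the joint probabilities. The counterfactual parts combine into $\tilde u_y^{d^\prime}$ times the difference of the products. This is exactly the first claimed display. Assumption~\ref{assum:decision_pos} is needed only to ensure that the conditional outcome probabilities given $D=d^\prime$ are well defined within each $(z,r)$ stratum.

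For the second display, I first need a version of $U(u;R\mid\bX=\bx)$ that is conditioned on $Z=z$, so that both quantities are written over the same measure $\Pr(\cdot\mid \bX=\bx,Z=z)$. By Assumption~\ref{assum:single_blinded}(b), $(R,\{Y(d)\})\indep Z\mid\bX$, so the corollary's derivation can be repeated conditional on $Z=z$. This gives
\begin{equation*}
U(u;R\mid\bX=\bx)=\sum_{r}\sum_y\Big[u_y^{a(r)}p_{y}(a(r))+\sum_{d^\prime\ne a(r)}\tilde u_y^{d^\prime}p_y(d^\prime)\Big]\Pr(R=r\mid\bX=\bx,Z=z),
\end{equation*}
where $p_y(k):=\Pr(Y=y\mid R=r,D=k,\bX=\bx,Z=z)$. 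The step that uses Assumption~\ref{assum:indep} is the replacement of $\Pr(Y(k)=y\mid R=r,\bX,Z=z)$ by $p_y(k)$.

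Next I expand $\Pr(R=r\mid\bX=\bx,Z=z)=\sum_d\Pr(R=r,D=d\mid\bX=\bx,Z=z)$. The AI-alone utility then carries the same weights $\Pr(R=r,D=d\mid\cdot)$ as the human-decision utility, and subtracting it from the $D(z)$ formula groups the terms as claimed. The standard parts pair as $u_y^d p_y(d)-u_y^{a(r)}p_y(a(r))$ inside the common weight. The counterfactual parts stay as two separate sums, one over $d^\prime\ne d$ and one over $d^\prime\ne a(r)$, both weighted by $\Pr(R=r,D=d\mid\cdot)$.

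The main obstacle is the $Z$-conditioning step for $R$. The corollary's formula for $U(u;R)$ marginalizes over $Z$, so I must check that conditioning on $Z=z$ leaves the answer unchanged. Independence of $(R,Y(k))$ from $Z$ given $\bX$ gives $\Pr(R=r,Y(k)=y\mid\bX)=\Pr(R=r,Y(k)=y\mid\bX,Z=z)$, and Assumption~\ref{assum:indep} within the $(z,r)$ stratum finishes it. Overlap, Assumption~\ref{assum:single_blinded}(c), guarantees that conditioning on $Z=z$ is well defined. The remaining work is bookkeeping of the sums.
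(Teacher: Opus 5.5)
Your proposal is correct and is essentially the paper's argument: the first display is exactly the term-by-term subtraction of the pointwise $D(1)$ and $D(0)$ formulas from Corollary~\ref{coro:risk_additive}. For the second display the paper only reorders the same steps. It first writes $\Pr(R=r,\bY=\by\mid\bX=\bx)=\sum_{d}\Pr(R=r,D(z)=d,\bY=\by\mid\bX=\bx)$ so that the integrand becomes $u(d,\by)-u(a(r),\by)$, and then identifies each $\Pr(R=r,D(z)=d,Y(e)=y\mid\bX=\bx)$ via Assumption~\ref{assum:single_blinded}(b), consistency, and Assumption~\ref{assum:indep}; this is the same reasoning you apply when you condition $U(u;R\mid\bX=\bx)$ on $Z=z$ and then expand over the observed $D$.
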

\begin{proof}
  \begin{align*}
    &U(u; D(1)\mid \bX = \bx)
    -
    U(u; D(0)\mid \bX = \bx)
    \\
    &=
    \sum_{d=0}^{K_{D}-1} \sum_{y_{0}=0}^{K_{Y}-1} \ldots \sum_{y_{K_{D}-1}=0}^{K_{Y}-1} u(d,\by) \{\Pr(D(1) = d, Y(0) = y_{0}, \ldots Y(K_{D}-1) = y_{K_{D}-1} \mid \bX = \bx) \\
    &\hspace{15em}-\Pr(D(0) = d, Y(0) = y_{0}, \ldots Y(K_{D}-1) = y_{K_{D}-1} \mid \bX = \bx)\}\\
    &=
    \sum_{d=0}^{K_{D}-1} \sum_{y=0}^{K_{Y}-1}
    u_{y}^{d}
    \{\Pr(D = d, Y = y \mid \bX = \bx, Z = 1)
    -\Pr(D = d, Y = y \mid \bX = \bx, Z = 0)\} \\
    &\quad+\sum_{d=0}^{K_{D}-1} \sum_{\substack{d^\prime=0 \\ d^\prime \ne d}}^{K_{D}-1}\sum_{y=0}^{K_{Y}-1}\sum_{r=0}^{K_{R}-1}
    \tilde{u}_{y}^{d^\prime}
    \{\Pr(Y = y \mid R = r, D = d^\prime, \bX = \bx, Z = 1) \Pr(R = r, D = d \mid \bX = \bx, Z = 1)\\
    &\hspace{10em}-\Pr(Y = y \mid R = r, D = d^\prime, \bX = \bx, Z = 0) \Pr(R = r, D = d \mid \bX = \bx, Z = 0)\}
\end{align*}
which is straightforward from the proof of Theorem~\ref{thm:risk_additive}.

\begin{align*}
    &U(u; D(z)\mid \bX = \bx)
    -
    U(u; R\mid \bX = \bx)
    \\
    &=
    \sum_{d=0}^{K_{D}-1} \sum_{y_{0}=0}^{K_{Y}-1} \ldots \sum_{y_{K_{D}-1}=0}^{K_{Y}-1} u(d,\by) \Pr(D(z) = d, Y(0) = y_{0}, \ldots Y(K_{D}-1) = y_{K_{D}-1} \mid \bX = \bx) \\
    &\quad-
    \sum_{r=0}^{K_{R}-1} \sum_{y_{0}=0}^{K_{Y}-1} \ldots \sum_{y_{K_{D}-1}=0}^{K_{Y}-1} u(a(r),\by) \Pr(R = r, Y(0) = y_{0}, \ldots Y(K_{D}-1) = y_{K_{D}-1} \mid \bX = \bx) \\
    &=
    \sum_{r=0}^{K_{R}-1} \sum_{d=0}^{K_{D}-1} \sum_{y_{0}=0}^{K_{Y}-1} \ldots \sum_{y_{K_{D}-1}=0}^{K_{Y}-1} u(d,\by) \Pr(R = r, D(z) = d, Y(0) = y_{0}, \ldots Y(K_{D}-1) = y_{K_{D}-1} \mid \bX = \bx) \\
    &\quad-
    \sum_{r=0}^{K_{R}-1} \sum_{d=0}^{K_{D}-1} \sum_{y_{0}=0}^{K_{Y}-1} \ldots \sum_{y_{K_{D}-1}=0}^{K_{Y}-1} u(a(r),\by) \Pr(R = r, D(z) = d, Y(0) = y_{0}, \ldots Y(K_{D}-1) = y_{K_{D}-1} \mid \bX = \bx)  \\
    &=
    \sum_{r=0}^{K_{R}-1} \sum_{d=0}^{K_{D}-1} \sum_{y_{0}=0}^{K_{Y}-1} \ldots \sum_{y_{K_{D}-1}=0}^{K_{Y}-1} 
    \{u(d,\by)-u(a(r),\by)\}
    \Pr(R = r, D(z) = d, Y(0) = y_{0}, \ldots Y(K_{D}-1) = y_{K_{D}-1} \mid \bX = \bx) \\
    &=
    \sum_{r=0}^{K_{R}-1} \sum_{d=0}^{K_{D}-1} \sum_{y=0}^{K_{Y}-1}
    \Big[
    u_{y}^{d}\Pr(R = r, D(z) = d, Y(d) = y \mid \bX = \bx)\\
    &\hspace{8em}
    -u_{y}^{a(r)}\Pr(R = r, D(z) = d, Y(a(r)) = y \mid \bX = \bx)
    \Big]\\
    &\quad+
    \sum_{r=0}^{K_{R}-1} \sum_{d=0}^{K_{D}-1}
    \sum_{\substack{d^\prime=0 \\ d^\prime \ne d}}^{K_{D}-1} \sum_{y=0}^{K_{Y}-1}
    \tilde{u}_{y}^{d^\prime}
    \Pr(R = r, D(z) = d, Y(d^\prime) = y \mid \bX = \bx)\\
    &\quad-
    \sum_{r=0}^{K_{R}-1} \sum_{d=0}^{K_{D}-1}
    \sum_{\substack{d^\prime=0 \\ d^\prime \ne a(r)}}^{K_{D}-1} \sum_{y=0}^{K_{Y}-1}
    \tilde{u}_{y}^{d^\prime}
    \Pr(R = r, D(z) = d, Y(d^\prime) = y \mid \bX = \bx).
\end{align*}
The last equality follows from additive utility.

By Assumption~\ref{assum:single_blinded}, consistency, and
Assumption~\ref{assum:indep}, for any $e\in\cD$,
\begin{align*}
    &\Pr(R = r, D(z) = d, Y(e) = y \mid \bX = \bx)\\
    &=\Pr(Y = y \mid R = r, D = e, \bX = \bx, Z = z)
    \Pr(R = r, D = d \mid \bX = \bx, Z = z).
\end{align*}
Accordingly,
\begin{align*}
    &U(u; D(z)\mid \bX = \bx)
    -
    U(u; R\mid \bX = \bx)
    \\
    &=
    \sum_{r=0}^{K_{R}-1} \sum_{d=0}^{K_{D}-1} \sum_{y=0}^{K_{Y}-1}
    \Big[
    u_{y}^{d}\Pr(Y = y \mid R = r, D = d, \bX = \bx, Z = z)\\
    &\hspace{8em}
    -u_{y}^{a(r)}\Pr(Y = y \mid R = r, D = a(r), \bX = \bx, Z = z)
    \Big]\Pr(R = r, D = d \mid \bX = \bx, Z = z)\\
    &\quad+
    \sum_{r=0}^{K_{R}-1} \sum_{d=0}^{K_{D}-1}
    \sum_{\substack{d^\prime=0 \\ d^\prime \ne d}}^{K_{D}-1} \sum_{y=0}^{K_{Y}-1}
    \tilde{u}_{y}^{d^\prime}
    \Pr(Y = y \mid R = r, D = d^\prime, \bX = \bx, Z = z)
    \Pr(R = r, D = d \mid \bX = \bx, Z = z)\\
    &\quad-
    \sum_{r=0}^{K_{R}-1} \sum_{d=0}^{K_{D}-1}
    \sum_{\substack{d^\prime=0 \\ d^\prime \ne a(r)}}^{K_{D}-1} \sum_{y=0}^{K_{Y}-1}
    \tilde{u}_{y}^{d^\prime}
    \Pr(Y = y \mid R = r, D = d^\prime, \bX = \bx, Z = z)
    \Pr(R = r, D = d \mid \bX = \bx, Z = z)
\end{align*}
\end{proof}

\begin{remark}\label{remark:utility_difference}
Under a binary decision setting, or under a multi-valued decision with a restricted form of
additive utility satisfying $\tilde u_y^{d}=\tilde u_y$ for all
$d=0,\ldots,K_{D}-1$ (i.e., regret depends only on the counterfactual outcome and not on the
alternative decision), the difference in expected utility between the Human-with-AI and
Human-alone decision rules is identified under Assumption~\ref{assum:single_blinded} alone.

To see this, recall that
\begin{align*}
 &U(u; D(1)\mid \bX = \bx)
    -
    U(u; D(0)\mid \bX = \bx)
    \\
    &=
    \sum_{d=0}^{K_{D}-1} \sum_{y=0}^{K_{Y}-1}
    u_{y}^{d}
    \{\Pr(D(1) = d, Y(d) = y \mid \bX = \bx)
    -\Pr(D(0) = d, Y(d) = y \mid \bX = \bx)\} \\
    &\quad+
    \sum_{d=0}^{K_{D}-1}
    \sum_{\substack{d^\prime=0 \\ d^\prime \ne d}}^{K_{D}-1}
    \sum_{y=0}^{K_{Y}-1}
    \tilde{u}_{y}^{d^\prime} \{\Pr(D(1) = d, Y(d^\prime) = y \mid \bX = \bx)
    -\Pr(D(0) = d, Y(d^\prime) = y \mid \bX = \bx)\}
\end{align*}
Under Assumption~\ref{assum:single_blinded} and consistency, the first term is directly
identified from observed data as
\[
\sum_{d,y} u_y^d\{\Pr(D=d,Y=y\mid\bX=\bx,Z=1)-\Pr(D=d,Y=y\mid\bX=\bx,Z=0)\}.
\]

\paragraph{Binary decision case.}
When $K_{D}=2$, the second term can also be identified using the law of total probability with Assumption~\ref{assum:single_blinded} (cf. Theorem~1 of \citealp{ben2024does}). For example,
\begin{align*}
&\Pr(D(1)=0,Y(1)=y\mid\bX=\bx)-\Pr(D(0)=0,Y(1)=y\mid\bX=\bx) \\
&=\Pr(D=1,Y=y\mid\bX=\bx,Z=0)-\Pr(D=1,Y=y\mid\bX=\bx,Z=1),
\end{align*}
since
\begin{align*}
\sum_{d=0}^1 \Pr(D(1)=d,Y(1)=y\mid\bX=\bx)
=
\sum_{d=0}^1 \Pr(D(0)=d,Y(1)=y\mid\bX=\bx).
\end{align*}
An analogous identity holds for $\Pr(D(1)=1,Y(0)=y)-\Pr(D(0)=1,Y(0)=y)$.
Hence, in the binary decision case, the utility difference is identified without requiring
unconfoundedness of the decision.

\paragraph{Failure for $K_{D}\ge 3$.}
This argument does not extend to settings with three or more decision levels.
For instance, when $K_{D}=3$,
\begin{align*}
&\Pr(D(1)=2,Y(0)=y\mid\bX=\bx)-\Pr(D(0)=2,Y(0)=y\mid\bX=\bx) \\
&=\sum_{d=0}^1
\{\Pr(D(0)=d,Y(0)=y\mid\bX=\bx) 
-\Pr(D(1)=d,Y(0)=y\mid\bX=\bx)\},
\end{align*}
where only the component involving $d=0$ is identified from observed data; the term for
$d=1$ is not identified without additional assumptions.

\paragraph{Identification under a special case of utility.}
Now suppose $\tilde u_y^{d}=\tilde u_y$ for all $d$.
In this case, identification requires only the aggregate counterfactual contrast
\begin{align*}
\sum_{d=0}^{K_{D}-1}
    \sum_{\substack{d^\prime=0 \\ d^\prime \ne d}}^{K_{D}-1}
\{\Pr(D(1)=d,Y(d^\prime)=y\mid\bX=\bx)-\Pr(D(0)=d,Y(d^\prime)=y\mid\bX=\bx)\},
\end{align*}
rather than each pairwise term.
Using the law of total probability and Assumption~\ref{assum:single_blinded},
\begin{align*}
\sum_{d=0}^{K_{D}-1}
    \sum_{\substack{d^\prime=0 \\ d^\prime \ne d}}^{K_{D}-1}
&\{\Pr(D(1)=d,Y(d^\prime)=y\mid\bX=\bx)-\Pr(D(0)=d,Y(d^\prime)=y\mid\bX=\bx)\} \\
&=-
\sum_{d=0}^{K_{D}-1}
\{\Pr(D(1)=d,Y(d)=y\mid\bX=\bx)-\Pr(D(0)=d,Y(d)=y\mid\bX=\bx)\}.
\end{align*}
Substituting this expression yields
\begin{align*}
&U(u;D(1)\mid\bX=\bx)-U(u;D(0)\mid\bX=\bx) \\
&=
\sum_{d=0}^{K_{D}-1}\sum_{y=0}^{K_{Y}-1}
(u_y^d-\tilde u_y)
\{\Pr(D=d,Y=y\mid\bX=\bx,Z=1)-\Pr(D=d,Y=y\mid\bX=\bx,Z=0)\}.
\end{align*}
\end{remark}

\subsection{Proof of Theorem~\ref{thm:utility_generic}}
\label{proof:utility_generic}
Write $O_i=(Y_i,D_i,R_i,Z_i,\bX_i)$ and $O=(Y,D,R,Z,\bX)$.
For fixed $(ykdrz)$, define
\begin{equation*}
  \theta_{ykdrz}
  :=
  \E\left[
  \bbone\{D^\ast=d,R=r\}e(z,\bX)m_y^{Y}(z,r,k,\bX)
  \right].
\end{equation*}
This is the building block for both the observed-decision utility term
$(k=d)$ and the counterfactual utility term $(k=d^\prime)$.  Its
uncentered influence-function term is
\begin{align*}
  &\eta_{ykdrz}(Y,D,R,Z,\bX)\\
  &= \bbone\{D^\ast=d,R=r\}\Bigg\{
    m_y^{Y}(z,r,k,\bX)e(z,\bX)
    +\frac{\bbone\{Z=z,D=k\}}{m_k^{D}(z,r,\bX)}
    \big(\bbone\{Y=y\}-m_y^{Y}(z,r,k,\bX)\big)\\
  &\hspace{1.5in}
    +m_y^{Y}(z,r,k,\bX)\big(\bbone\{Z=z\}-e(z,\bX)\big)
    \Bigg\}.
\end{align*}
The expectation of this term equals $\theta_{ykdrz}$ because,
conditional on $(\bX,R)$, the outcome residual has mean zero among
observations with $(Z,D)=(z,k)$ and
$\E[\bbone\{Z=z\}-e(z,\bX)\mid\bX,R]=0$.
Thus,
\begin{equation*}
  \widehat{\theta}_{ykdrz}
  =
  \frac{1}{n}\sum_{i=1}^{n}
  \widehat{\eta}_{ykdrz}(O_i)
\end{equation*}
estimates
$\Pr(Y=y\mid D=k,R=r,Z=z,\bX=\bx)\Pr(D^\ast=d,R=r,Z=z\mid\bX=\bx)$
after averaging over $\bX$.

The remainder bias for this building block is
\begin{align*}
  &\E[\widehat{\eta}_{ykdrz}-\eta_{ykdrz}]\\
  &=
  \E\Bigg[
  \bbone\{D^\ast=d,R=r\}
  \Bigg\{
  \left(\frac{m_k^D(z,r,\bX)}{\hat m_k^D(z,r,\bX)}-1\right)
  e(z,\bX)
  \big(m_y^Y(z,r,k,\bX)-\hat m_y^Y(z,r,k,\bX)\big)
  \Bigg\}
  \Bigg].
\end{align*}
Therefore, by Assumptions~\ref{assum:decision_pos} and
\ref{assum:rate_condition},
\begin{equation*}
  \left|\E[\widehat{\eta}_{ykdrz}-\eta_{ykdrz}]\right|
  \le
  C
  \left\|m_k^{D}(z,r,\cdot)-\hat m_k^{D}(z,r,\cdot)\right\|_2
  \left\|m_y^{Y}(z,r,k,\cdot)-\hat m_y^{Y}(z,r,k,\cdot)\right\|_2
  =
  o_p(n^{-1/2}).
\end{equation*}

Applying Proposition 2 in \cite{kennedy_semiparametric_2023} and
linearity over the finite sums in the utility functional,
\begin{align*}
  &\widehat{\overline{U}(u; D^\ast)}-\overline{U}(u; D^\ast)\\
  &=
  \frac{1}{n}\sum_{i=1}^{n}
  \sum_{y=0}^{K_Y-1}
  \sum_{d=0}^{K_D-1}
  \sum_{r=0}^{K_R-1}
  \sum_{z=0}^{1}
  \Bigg(
  u_y^d\eta_{yddrz}(O_i)
  +\sum_{\substack{d^\prime=0\\ d^\prime\ne d}}^{K_D-1}
  \tilde u_y^{d^\prime}
  \eta_{yd^\prime drz}(O_i)
  \Bigg)\\
  &\quad
  -\overline{U}(u;D^\ast)+o_p(n^{-1/2}).
\end{align*}
This leads to
\begin{equation*}
  \sqrt{n}\left(\widehat{\overline{U}(u;D^\ast)}-\overline{U}(u;D^\ast)\right)
  \xrightarrow{d}N(0,V),
\end{equation*}
where
\begin{equation*}
  V=
  \E\Bigg[\Bigg\{
  \sum_{y=0}^{K_Y-1}
  \sum_{d=0}^{K_D-1}
  \sum_{r=0}^{K_R-1}
  \sum_{z=0}^{1}
  \Bigg(
  u_y^d\eta_{yddrz}(O)
  +\sum_{\substack{d^\prime=0\\ d^\prime\ne d}}^{K_D-1}
  \tilde u_y^{d^\prime}\eta_{yd^\prime drz}(O)
  \Bigg)
  -\overline{U}(u;D^\ast)
  \Bigg\}^2\Bigg].
\end{equation*}
  
\subsection{Evaluating Human Decision}
\label{utility_human}
We evalute the expected utility of the human decision $D(z)$ for $z \in \{0,1\}$.
By Theorem~\ref{thm:risk_additive}, the expected utility can be expressed as
\begin{align*}
    &\overline{U}(u; D(z))\\
    &=\E\Bigg[\sum_{d=0}^{K_{D}-1}  
    \sum_{y=0}^{K_{Y}-1}
    \sum_{r=0}^{K_{R}-1}
    u_{y}^{d}
   \Pr(Y = y, D = d, R = r \mid Z = z, \bX = \bx)\\
    &\qquad+
    \sum_{d=0}^{K_{D}-1}
    \sum_{\substack{d^\prime=0 \\ d^\prime \ne d}}^{K_{D}-1}
    \sum_{y = 0}^{K_{Y}-1}
    \sum_{r=0}^{K_{R}-1}
    \tilde{u}_{y}^{d^\prime}
    \Pr(Y = y\mid D = d^\prime, \bX = \bx, Z = z, R = r)\\
    &\qquad\hspace{10em}\times\Pr(D = d\mid R = r, \bX = \bx, Z = z) \Pr(R = r\mid \bX = \bx)\Bigg].
\end{align*}
using the law of total expectation and that $R \indep Z \mid \bX$ in
Assumption~\ref{assum:single_blinded} (b).  The displayed AIPW terms
below use the deterministic-recommendation specialization
$m^{R}_{r}(\bX)=\bbone\{R=r\}$, which holds in our application because
$R$ is a known function of $\bX$.  If $R$ is stochastic, the same
construction can instead include the nuisance model
$m^{R}_{r}(\bX) := \Pr(R = r\mid \bX = \bx)$ and its corresponding
augmentation term.

With the above expression, we propose an AIPW estimator using the
following two sets of uncentered influence function estimates: one for
the models of conditional expectations of compound outcomes,
$\bbone\{Y = y, D = d, R = r\}$, and the other for the models of the
product of analogous conditional expectations for counterfactual
utilities,
\begin{equation*}
    \widehat{\overline{U}(u; D(z))} = 
    \frac{1}{n} \sum_{i=1}^{n}\sum_{d=0}^{K_{D}-1}  
    \sum_{y=0}^{K_{Y}-1}
    \sum_{r=0}^{K_{R}-1} \Bigg(
    u_{y}^{d}\widehat{\varphi}_{ydrz}(Y_i,D_i,R_i,Z_i,\bX_i)
    + \sum_{\substack{d^\prime=0 \\ d^\prime \ne d}}^{K_{D}-1}
    \tilde{u}_{y}^{d^\prime}
    \widehat{\widetilde{\varphi}}_{ydd^\prime rz}(Y_i,D_i,R_i,Z_i,\bX_i)
    \Bigg).
  \end{equation*}
where
\begin{align*}
  \widehat{\varphi}_{ydrz}(Y,D,R,Z,\bX)
  =
  \bbone\{R = r\}&\Bigg\{
    \hat{m}_d^{D}(z,r,\bX)\hat{m}_y^{Y}(z,r,d,\bX)\\
  &\quad+ \frac{\bbone\{Z = z, D = d\}}{\hat{e}(z,\bX)}\Big(\bbone\{Y = y\} - \hat{m}_y^{Y}(z,r,d,\bX)\Big)\\
  &\quad+ \hat{m}_y^{Y}(z,r,d,\bX) \frac{\bbone\{Z = z\}}{\hat{e}(z,\bX)}\Big(\bbone\{D = d\} - \hat{m}_d^{D}(z,r,\bX)\Big)
    \Bigg\}, \\
  \widehat{\widetilde{\varphi}}_{ydd^\prime rz}(Y,D,R,Z,\bX)
  =
  \bbone\{R = r\}&\Bigg\{
    \hat{m}_d^{D}(z,r,\bX)\hat{m}_y^{Y}(z,r,d^\prime,\bX)\\
  &\quad+ \frac{\hat{m}_{d}^{D}(z,r,\bX)}{\hat{m}_{d^\prime}^{D}(z,r,\bX)} \frac{\bbone\{Z = z, D = d^\prime\}}{\hat{e}(z,\bX)}\Big(\bbone\{Y = y\} - \hat{m}_y^{Y}(z,r,d^\prime,\bX)\Big)\\
  &\quad+ \hat{m}_y^{Y}(z,r,d^\prime,\bX) \frac{\bbone\{Z = z\}}{\hat{e}(z,\bX)}\Big(\bbone\{D = d\} - \hat{m}_d^{D}(z,r,\bX)\Big)
    \Bigg\}.
\end{align*}

We now discuss the asymptotic properties of the proposed AIPW estimator.
Assumption~\ref{assum:decision_pos} bounds the decision probabilities
away from zero for all decision levels in each $(z,r)$ stratum.  In
particular, it covers both $m_d^{D}(z,r,\bX)$ and
$m_{d^\prime}^{D}(z,r,\bX)$, which appear in the counterfactual
component of the estimator.

\begin{corollary}
  [Asymptotic normality of the AIPW estimator for human decision]
  \label{thm:aipw_normality}
  Under Assumptions~\ref{assum:single_blinded}, \ref{assum:indep}, \ref{assum:decision_pos}, and \ref{assum:rate_condition},
  for each $z \in \{0,1\}$, we have
  \begin{equation*}
    \sqrt{n}(\widehat{\overline{U}(u; D(z))} - \overline{U}(u; D(z))) 
    \xrightarrow{d} 
    N(0, V_z)
  \end{equation*}
  where 
  \begin{equation*}
    V_z = \E\Bigg[\Bigg\{\sum_{d=0}^{K_{D}-1}  
    \sum_{y=0}^{K_{Y}-1}
    \sum_{r=0}^{K_{R}-1}\Bigg(
    u_{y}^{d}\varphi_{ydrz}(Y,D,R,Z,\bX)
    + \sum_{\substack{d^\prime=0 \\ d^\prime \ne d}}^{K_{D}-1}
    \tilde{u}_{y}^{d^\prime}
    \widetilde{\varphi}_{ydd^\prime rz}(Y,D,R,Z,\bX)
    \Bigg) - \overline{U}(u; D(z))\Bigg\}^2\Bigg].
    \end{equation*}
\end{corollary}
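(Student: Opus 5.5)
The plan follows the standard one-step / AIPW argument already used for Theorem~\ref{thm:utility_generic}, now applied to the two building blocks $\varphi_{ydrz}$ and $\widetilde{\varphi}_{ydd^\prime rz}$. The estimand $\overline{U}(u; D(z))$ is a finite linear combination, with weights $u_y^d$ and $\tilde u_y^{d^\prime}$, of two kinds of functionals. The first kind is
\[
\theta_{ydrz}:=\E[\bbone\{R=r\}m_d^D(z,r,\bX)m_y^Y(z,r,d,\bX)],
\]
which is the standard-utility term. The second kind is
\[
\tilde\theta_{ydd^\prime rz}:=\E[\bbone\{R=r\}m_d^D(z,r,\bX)m_y^Y(z,r,d^\prime,\bX)],
\]
which is the counterfactual term. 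Here I use the identification display above, the relation $R\indep Z\mid\bX$, and the deterministic-recommendation specialization. By linearity it suffices to show, for each block, two things. First, the uncentered term has the right mean. Second, the plug-in version satisfies $\frac1n\sum_i\hat\varphi(O_i)-\theta=\frac1n\sum_i\varphi(O_i)-\theta+o_p(n^{-1/2})$. The central limit theorem then gives the $N(0,V_z)$ limit.

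The first step is unbiasedness at the truth. Conditioning on $(\bX,R)$, the outcome correction in $\varphi_{ydrz}$ has mean zero. This is because $\E[\bbone\{Z=z,D=d\}(\bbone\{Y=y\}-m^Y_y)\mid \bX,R]=0$ by the definition of $m^Y_y$. The decision correction also has mean zero, since $\E[\bbone\{Z=z\}(\bbone\{D=d\}-m^D_d)\mid\bX,R]=0$. For $\widetilde\varphi$ the same holds, with the extra factor $m_d^D/m_{d^\prime}^D$ being $(\bX,R)$-measurable. Assumption~\ref{assum:decision_pos} guarantees that this factor and $1/e$ (via Assumption~\ref{assum:single_blinded}(c)) are bounded, so all terms are integrable with finite variance.

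The second step, which is the main work, is the second-order remainder. I will compute $\E[\hat\varphi-\varphi\mid$ nuisances$]$ by integrating out $Y$, then $D$, then $Z$ given $(\bX,R)$. For $\varphi_{ydrz}$, let $\delta_D=m^D_d-\hat m^D_d$, $\delta_Y=m^Y_y-\hat m^Y_y$, and $\rho=e/\hat e-1$. A direct expansion should give a remainder of the form $\E[\bbone\{R=r\}(\delta_D\delta_Y+\rho\, m_d^D\delta_Y+\rho\,\hat m^Y_y\delta_D)]$ (signs aside). Each product is bounded, using boundedness of $\hat e^{-1}$ and Cauchy--Schwarz, by the corresponding product in Assumption~\ref{assum:rate_condition}, hence is $o_p(n^{-1/2})$.

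For $\widetilde\varphi$ the analogous computation involves the ratio $\hat m^D_d/\hat m^D_{d^\prime}$ multiplying $m^D_{d^\prime}e/\hat e$. I expect the remainder to be
\[
\E\big[\bbone\{R=r\}\{(\hat m^D_d m^D_{d^\prime}/\hat m^D_{d^\prime})(e/\hat e)-\hat m^D_d\}\delta_Y'\big]+\E[\bbone\{R=r\}\,\hat m^Y_{y}(z,r,d^\prime,\cdot)\,\rho\,\delta_D]+\E[\bbone\{R=r\}\,\delta_D\,\delta_Y']+\ldots,
\]
where $\delta_Y'$ uses $d^\prime$. The first bracket factorizes as a combination of $(m^D_{d^\prime}-\hat m^D_{d^\prime})$ and $(e-\hat e)$ terms times bounded quantities. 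This is exactly why the rate conditions include the cross term $\|m_y^Y(z,r,d^\prime,\cdot)-\hat m_y^Y\|_2\|m_d^D-\hat m_d^D\|_2$. I also need to check that the $d^\prime$ version of the decision error pairs properly, since the displayed condition lists $d$ generically.

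The third step handles the empirical-process term $(\mathbb P_n-\mathbb P)(\hat\varphi-\varphi)$. I would invoke cross-fitting implicitly, via Proposition 2 of \cite{kennedy_semiparametric_2023} as in the proof of Theorem~\ref{thm:utility_generic}. That term is then $o_p(n^{-1/2})$ once $\|\hat\varphi-\varphi\|_2=o_p(1)$. This follows from the sup-norm consistency conditions in Assumption~\ref{assum:rate_condition}, together with $\hat m^D$ and $\hat e$ being bounded away from zero for large $n$. The sup-norm consistency and Assumption~\ref{assum:decision_pos} give this lower bound with probability tending to one.

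Summing over the finite index set and applying Slutsky and the CLT then yields the stated variance $V_z$. The main obstacle I anticipate is the algebra for $\widetilde\varphi$: verifying that the ratio-weighted correction cancels the first-order dependence on $m^D_{d^\prime}$, leaving only products of errors.
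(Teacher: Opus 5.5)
Your proposal is correct and follows the paper's proof essentially step for step. It uses the same building blocks $\varphi_{ydrz}$ and $\widetilde{\varphi}_{ydd^\prime rz}$, reduces the remainder for $\widetilde{\varphi}$ to the same four products of nuisance errors, and finishes with Proposition~2 of \cite{kennedy_semiparametric_2023} and the CLT. The $d^\prime$-pairing you flag is covered, since Assumption~\ref{assum:rate_condition} is quantified over every $d\in\cD$.

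There is one harmless slip. The remainder for $\varphi_{ydrz}$ is exactly $\E[\bbone\{R=r\}\,\rho\,(m_d^D m_y^Y-\hat m_d^D\hat m_y^Y)]$ and has no $\delta_D\delta_Y$ term, because it is ordinary AIPW for the compound outcome $\bbone\{Y=y,D=d,R=r\}$, whose only weighting nuisance is $e$. Also, in the $\widetilde{\varphi}$ remainder the $\delta_D\delta_Y^\prime$ term enters with a minus sign. Neither point changes the bound.
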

\begin{proof}
    We define $\beta_z$ as
  \begin{align*}
    \beta_z &:= 
    \E\Bigg[\sum_{d=0}^{K_{D}-1}  
    \sum_{y=0}^{K_{Y}-1}
    \sum_{r=0}^{K_{R}-1}
    u_{y}^{d}\E[W \mid Z = z, \bX] \\
    &\qquad\qquad+ 
    \sum_{d=0}^{K_{D}-1}  
    \sum_{y=0}^{K_{Y}-1}
    \sum_{r=0}^{K_{R}-1}
    \sum_{\substack{d^\prime=0 \\ d^\prime \ne d}}^{K_{D}-1}\tilde{u}_{y}^{d^\prime}
    \E[\Pr(Y = y\mid D = d^\prime, \bX = \bx, Z = z, R = r)\\
    &\hspace{15em} \times\Pr(D = d\mid R = r, \bX = \bx, Z = z) \Pr(R = r\mid \bX = \bx)]
    \Bigg] \\
    &= \E\Bigg[\sum_{d=0}^{K_{D}-1}  
    \sum_{y=0}^{K_{Y}-1}
    \sum_{r=0}^{K_{R}-1}
    \Bigg(
    u_{y}^{d}m^{Y}_{y} (z,r,d,\bX) m^{D}_{d}(z,r,\bX) m^{R}_{r}(\bX)
    + \sum_{\substack{d^\prime=0 \\ d^\prime \ne d}}^{K_{D}-1}\tilde{u}_{y}^{d^\prime}
    m^{Y}_{y} (z,r,d^\prime,\bX) m^{D}_{d}(z,r,\bX) m^{R}_{r}(\bX)
    \Bigg)
    \Bigg]
  \end{align*}
  where $W := \bbone\{Y = y, D = d, R = r\}$ denote a compound outcome.
  We propose AIPW estimators for the two terms, $m^{Y}_{y} (z,r,d,\bX) m^{D}_{d}(z,r,\bX) m^{R}_{r}(\bX)$
  and $m^{Y}_{y} (z,r,d^\prime,\bX) m^{D}_{d}(z,r,\bX) m^{R}_{r}(\bX)$, separately.
  
  Let the compound outcome model be
  \begin{equation*}
    m(z, \bx) := \E[W \mid Z = z, \bX = \bx] = m^{R}_{r}(\bx) m_d^{D}(z,r,\bx)m_y^{Y}(z,r,d,\bx).
  \end{equation*}
  Then, the uncentered influence-function term is given by
  \begin{equation*}
    \varphi_{ydrz}(W, Z, \bX)_{} := m(z, \bX) 
    + \frac{\bbone\{Z = z\}}{e(z, \bX)}(W - m(z, \bX)).
  \end{equation*}
  Plugging in $W$ and $m(z, \bx)$, we have
  \begin{align*}
    \varphi_{ydrz}(Y,D,R,Z,\bX)
    =
    \bbone\{R = r\}&\Bigg\{
      m_d^{D}(z,r,\bX)m_y^{Y}(z,r,d,\bX)\\
    &\quad+ \frac{\bbone\{Z = z, D = d\}}{e(z,\bX)}(\bbone\{Y = y\} - m_y^{Y}(z,r,d,\bX))\\
    &\quad+ m_y^{Y}(z,r,d,\bX) \frac{\bbone\{Z = z\}}{e(z,\bX)}(\bbone\{D = d\} - m_d^{D}(z,r,\bX))
      \Bigg\}
  \end{align*}
  where we use $m^{R}_{r}(\bX) = \bbone\{R = r\}$ with a slight abuse of notation, since it is a deterministic function of $\bX$.
  The AIPW estimator for the first term is given by
  \begin{equation*}
    \widehat{\E}[\E[W \mid Z = z, \bX]]
    =
    \frac{1}{n} \sum_{i=1}^{n} \widehat{\varphi}_{ydrz}(Y_i,D_i,R_i,Z_i,\bX_i).
  \end{equation*}

  We now examine the remainder bias of this AIPW estimator.
  \begin{align*}
    &\E[\widehat{\varphi}_{ydrz} - \varphi_{ydrz}]\\
    &= \E\Bigg[m_r^{R}(\bX)
    \Bigg\{
      \hat{m}_d^{D}(z,r,\bX)\hat{m}_y^{Y}(z,r,d,\bX) - m_d^{D}(z,r,\bX)m_y^{Y}(z,r,d,\bX)\\
    &\qquad\qquad\qquad\quad+ 
    \Bigg(\frac{e(z,\bX)}{\hat{e}(z,\bX)} - 1 \Bigg)
    m_d^{D}(z,r,\bX)(m_y^{Y}(z,r,d,\bX) - \hat{m}_y^{Y}(z,r,d,\bX))\\
    &\qquad\qquad\qquad\quad+ 
    \hat{m}_y^{Y}(z,r,d,\bX)
    \Bigg(\frac{e(z,\bX)}{\hat{e}(z,\bX)} - 1 \Bigg)
    (m_d^{D}(z,r,\bX) - \hat{m}_d^{D}(z,r,\bX))\\
    &\qquad\qquad\qquad\quad+ m_d^{D}(z,r,\bX)(m_y^{Y}(z,r,d,\bX) - \hat{m}_y^{Y}(z,r,d,\bX))\\
    &\qquad\qquad\qquad\quad+ \hat{m}_y^{Y}(z,r,d,\bX)(m_d^{D}(z,r,\bX) - \hat{m}_d^{D}(z,r,\bX))
      \Bigg\} \\
    &= \E\Bigg[m_r^{R}(\bX)
    \Bigg\{
    \Bigg(\frac{e(z,\bX)}{\hat{e}(z,\bX)} - 1 \Bigg)
    m_d^{D}(z,r,\bX)(m_y^{Y}(z,r,d,\bX) - \hat{m}_y^{Y}(z,r,d,\bX))\\
    &\qquad\qquad\qquad\quad+ 
    \hat{m}_y^{Y}(z,r,d,\bX)
    \Bigg(\frac{e(z,\bX)}{\hat{e}(z,\bX)} - 1 \Bigg)
    (m_d^{D}(z,r,\bX) - \hat{m}_d^{D}(z,r,\bX))
      \Bigg\}
    \Bigg] \\
    &= \E\Bigg[m_r^{R}(\bX)
    \Bigg\{
    \frac{e(z,\bX)-\hat{e}(z,\bX)}{\hat{e}(z,\bX)}
    (m_d^{D}(z,r,\bX)m_y^{Y}(z,r,d,\bX) - \hat{m}_d^{D}(z,r,\bX)\hat{m}_y^{Y}(z,r,d,\bX))
      \Bigg\}
    \Bigg] 
  \end{align*}  
The absolute bias is bounded by
  \begin{equation*}
    \lvert \E[\widehat{\varphi}_{ydrz} - \varphi_{ydrz}] \rvert
    \le 
    C_1(\lVert m_d^{D}(z,r,\cdot) - \hat{m}_d^{D}(z,r,\cdot) \rVert_2
    + \lVert m_y^{Y}(z,r,d,\cdot) - \hat{m}_y^{Y}(z,r,d,\cdot) \rVert_2
    )\times
    \lVert e(z,\cdot) - \hat{e}(z,\cdot) \rVert_2 
  \end{equation*}
By Assumptions~\ref{assum:single_blinded} (c) and \ref{assum:rate_condition}, this is $o_p(n^{-1/2})$.

  Similarly, we can show that the uncentered influence-function term for the second term is given by
  \begin{align*}
    \widetilde{\varphi}_{ydd^\prime rz}(Y,D,R,Z,\bX)
    =
    \bbone\{R = r\}&\Bigg\{
      m_d^{D}(z,r,\bX)m_y^{Y}(z,r,d^\prime,\bX)\\
    &\quad+ \frac{m_{d}^{D}(z,r,\bX)}{m_{d^\prime}^{D}(z,r,\bX)}
   \frac{\bbone\{Z = z, D = d^\prime\}}{e(z,\bX)}(\bbone\{Y = y\} - m_y^{Y}(z,r,d^\prime,\bX))\\
    &\quad+ m_y^{Y}(z,r,d^\prime,\bX) \frac{\bbone\{Z = z\}}{e(z,\bX)}(\bbone\{D = d\} - m_d^{D}(z,r,\bX))
      \Bigg\}
  \end{align*}
  This can be shown by using the product rule illustrated in \cite{kennedy_semiparametric_2023}:
  \begin{align*}
    &\textsc{IF}(m^{R}_{r}(\bX)m_d^{D}(z,r,\bX), m_y^{Y}(z,r,d^\prime,\bX)) \\
    &= m^{R}_{r}(\bX)m_d^{D}(z,r,\bX)\textsc{IF}(m_y^{Y}(z,r,d^\prime,\bX))
    + m_y^{Y}(z,r,d^\prime,\bX)\textsc{IF}(m^{R}_{r}(\bX)m_d^{D}(z,r,\bX)) \\
    &\quad - m^{R}_{r}(\bX)m_d^{D}(z,r,\bX), m_y^{Y}(z,r,d^\prime,\bX)
  \end{align*}
  Note that the uncentered influence-function term for the outcome model is given by
  \begin{align*}
    &\textsc{IF}(m_y^{Y}(z,r,d^\prime,\bX)) \\
    &= 
    m_y^{Y}(z,r,d^\prime,\bX) + \frac{\bbone\{Z = z, D = d^\prime, R = r\}}{\Pr(Z = z, D = d^\prime, R = r \mid \bX)}
    (\bbone\{Y = y\} - m_y^{Y}(z,r,d^\prime,\bX)) \\
    &= m_y^{Y}(z,r,d^\prime,\bX) + \frac{\bbone\{Z = z, D = d^\prime, R = r\}}{m_r^{R}(\bX)m_{d^\prime}^{D}(z,r,\bX)e(z,\bX)}
    (\bbone\{Y = y\} - m_y^{Y}(z,r,d^\prime,\bX)) 
  \end{align*}
  where the last equality uses Assumption~\ref{assum:single_blinded} (b).
  That of the product of AI model and decision model is given by
  \begin{equation*}
    \textsc{IF}(m^{R}_{r}(\bX)m_d^{D}(z,r,\bX))
    = 
    m^{R}_{r}(\bX)m_d^{D}(z,r,\bX) + \frac{\bbone\{Z = z\}}{e(z,\bX)}
    (\bbone\{D = d, R = r\} - m^{R}_{r}(\bX)m_d^{D}(z,r,\bX))
  \end{equation*}
  using the compound outcome $\bbone\{D = d, R = r\}$.
Thus, the AIPW estimator for the second term is given by
  \begin{align*}
    &\widehat{\E}[\E[\Pr(Y = y\mid D = d^\prime, \bX, Z = z, R = r) 
    \Pr(D = d\mid R = r, \bX, Z = z) \Pr(R = r\mid \bX)]]\\
    &=
    \frac{1}{n} \sum_{i=1}^{n} \widehat{\widetilde{\varphi}}_{ydd^\prime rz}(Y_i,D_i,R_i,Z_i,\bX_i).
  \end{align*} 

We now examine the remainder bias in this second AIPW estimator.
  \begin{align*}
    &\E[\widehat{\widetilde{\varphi}}_{ydd^\prime rz} - \widetilde{\varphi}_{ydd^\prime rz}]\\
    &= \E\Bigg[m_r^{R}(\bX)
    \Bigg\{
      \underbrace{\hat{m}_d^{D}(z,r,\bX)\hat{m}_y^{Y}(z,r,d^\prime,\bX) - m_d^{D}(z,r,\bX)m_y^{Y}(z,r,d^\prime,\bX)}_{T_1}\\
    &\qquad\qquad\qquad\quad+ 
    \underbrace{\frac{\hat{m}_{d}^{D}(z,r,\bX)}{\hat{m}_{d^\prime}^{D}(z,r,\bX)}\Bigg(\frac{e(z,\bX)}{\hat{e}(z,\bX)} - 1 \Bigg)
    m_{d^\prime}^{D}(z,r,\bX)(m_y^{Y}(z,r,d^\prime,\bX) - \hat{m}_y^{Y}(z,r,d^\prime,\bX))}_{T_2}\\
    &\qquad\qquad\qquad\quad+ 
    \underbrace{\hat{m}_y^{Y}(z,r,d^\prime,\bX)
    \Bigg(\frac{e(z,\bX)}{\hat{e}(z,\bX)} - 1 \Bigg)
    (m_d^{D}(z,r,\bX) - \hat{m}_d^{D}(z,r,\bX))}_{T_3}\\
    &\qquad\qquad\qquad\quad+\underbrace{\frac{\hat{m}_{d}^{D}(z,r,\bX)}{\hat{m}_{d^\prime}^{D}(z,r,\bX)} m_{d^\prime}^{D}(z,r,\bX)
    (m_y^{Y}(z,r,d^\prime,\bX) - \hat{m}_y^{Y}(z,r,d^\prime,\bX))}_{T_4}\\
    &\qquad\qquad\qquad\quad+\underbrace{\hat{m}_y^{Y}(z,r,d^\prime,\bX)(m_d^{D}(z,r,\bX) - \hat{m}_d^{D}(z,r,\bX))}_{T_5}
      \Bigg\}
  \end{align*} 
  We first combine $T_1$, $T_4$ and $T_5$ as
  \begin{align*}
    &T_1 + T_4 + T_5 \\
    &=
    \Bigg(\frac{\hat{m}_{d}^{D}(z,r,\bX)}{\hat{m}_{d^\prime}^{D}(z,r,\bX)}
    m_{d^\prime}^{D}(z,r,\bX) - m_d^{D}(z,r,\bX)\Bigg)
    (m_y^{Y}(z,r,d^\prime,\bX) - \hat{m}_y^{Y}(z,r,d^\prime,\bX))\\
    &=
    \Bigg\{\frac{m_{d^\prime}^{D}(z,r,\bX)}{\hat{m}_{d^\prime}^{D}(z,r,\bX)}
    (\hat{m}_{d}^{D}(z,r,\bX) - m_d^{D}(z,r,\bX))
    - \frac{m_{d}^{D}(z,r,\bX)}{\hat{m}_{d^\prime}^{D}(z,r,\bX)}
    (\hat{m}_{d^\prime}^{D}(z,r,\bX) - m_{d^\prime}^{D}(z,r,\bX))\Bigg\}\\
    &\qquad \times
    (m_y^{Y}(z,r,d^\prime,\bX) - \hat{m}_y^{Y}(z,r,d^\prime,\bX))
  \end{align*}
  Next, combining $T_2$ and $T_3$, we have
  \begin{align*}
    &T_2 + T_3 \\
    &=
    \Bigg(\frac{e(z,\bX)}{\hat{e}(z,\bX)} - 1 \Bigg)
    \Bigg(\frac{\hat{m}_{d}^{D}(z,r,\bX)}{\hat{m}_{d^\prime}^{D}(z,r,\bX)}
    m_{d^\prime}^{D}(z,r,\bX) - m_d^{D}(z,r,\bX)\Bigg)
    (m_y^{Y}(z,r,d^\prime,\bX) - \hat{m}_y^{Y}(z,r,d^\prime,\bX))\\
    &\quad+
    \Bigg(\frac{e(z,\bX)}{\hat{e}(z,\bX)} - 1 \Bigg)
    (m_d^{D}(z,r,\bX)m_y^{Y}(z,r,d^\prime,\bX) - \hat{m}_d^{D}(z,r,\bX)\hat{m}_y^{Y}(z,r,d^\prime,\bX)) \\
    &=\Bigg\{\frac{m_{d^\prime}^{D}(z,r,\bX)}{\hat{m}_{d^\prime}^{D}(z,r,\bX)}
    (\hat{m}_{d}^{D}(z,r,\bX) - m_d^{D}(z,r,\bX))
    - \frac{m_{d}^{D}(z,r,\bX)}{\hat{m}_{d^\prime}^{D}(z,r,\bX)}
    (\hat{m}_{d^\prime}^{D}(z,r,\bX) - m_{d^\prime}^{D}(z,r,\bX))\Bigg\}\\
    &\qquad \times
    \frac{e(z,\bX)-\hat{e}(z,\bX)}{\hat{e}(z,\bX)}(m_y^{Y}(z,r,d^\prime,\bX) - \hat{m}_y^{Y}(z,r,d^\prime,\bX))\\
    &\quad+
    \frac{e(z,\bX)-\hat{e}(z,\bX)}{\hat{e}(z,\bX)} 
    (m_d^{D}(z,r,\bX)m_y^{Y}(z,r,d^\prime,\bX) - \hat{m}_d^{D}(z,r,\bX)\hat{m}_y^{Y}(z,r,d^\prime,\bX))
  \end{align*}
  Therefore, we can write
  \begin{align*}
    &\E[\widehat{\widetilde{\varphi}}_{ydd^\prime rz} - \widetilde{\varphi}_{ydd^\prime rz}]\\
    &= \E\Bigg[m_r^{R}(\bX)
    \Bigg[
    \Bigg\{\frac{m_{d^\prime}^{D}(z,r,\bX)}{\hat{m}_{d^\prime}^{D}(z,r,\bX)}
    (\hat{m}_{d}^{D}(z,r,\bX) - m_d^{D}(z,r,\bX))
    - \frac{m_{d}^{D}(z,r,\bX)}{\hat{m}_{d^\prime}^{D}(z,r,\bX)}
    (\hat{m}_{d^\prime}^{D}(z,r,\bX) - m_{d^\prime}^{D}(z,r,\bX))\Bigg\}\\
    &\qquad \times
    (m_y^{Y}(z,r,d^\prime,\bX) - \hat{m}_y^{Y}(z,r,d^\prime,\bX))\\
    &\quad+
    \Bigg\{\frac{m_{d^\prime}^{D}(z,r,\bX)}{\hat{m}_{d^\prime}^{D}(z,r,\bX)}
    (\hat{m}_{d}^{D}(z,r,\bX) - m_d^{D}(z,r,\bX))
    - \frac{m_{d}^{D}(z,r,\bX)}{\hat{m}_{d^\prime}^{D}(z,r,\bX)}
    (\hat{m}_{d^\prime}^{D}(z,r,\bX) - m_{d^\prime}^{D}(z,r,\bX))\Bigg\}\\
    &\qquad \times
    \frac{e(z,\bX)-\hat{e}(z,\bX)}{\hat{e}(z,\bX)}(m_y^{Y}(z,r,d^\prime,\bX) - \hat{m}_y^{Y}(z,r,d^\prime,\bX))\\
    &\quad+
    \frac{e(z,\bX)-\hat{e}(z,\bX)}{\hat{e}(z,\bX)}
    (m_d^{D}(z,r,\bX)m_y^{Y}(z,r,d^\prime,\bX) - \hat{m}_d^{D}(z,r,\bX)\hat{m}_y^{Y}(z,r,d^\prime,\bX))
    \Bigg]
    \Bigg]
  \end{align*}

  Thus, the absolute bias is bounded by
  \begin{align*}
    \lvert \E[\widehat{\widetilde{\varphi}}_{ydd^\prime rz} - \widetilde{\varphi}_{ydd^\prime rz}] \rvert 
    \le C_2 &(
    \lVert m_y^{Y}(z,r,d^\prime,\cdot) - \hat{m}_y^{Y}(z,r,d^\prime,\cdot) \rVert_2
   \times \lVert m_d^{D}(z,r,\cdot) - \hat{m}_d^{D}(z,r,\cdot) \rVert_2 \\
   &\quad +
   \lVert m_y^{Y}(z,r,d^\prime,\cdot) - \hat{m}_y^{Y}(z,r,d^\prime,\cdot) \rVert_2
   \times \lVert m_{d^\prime}^{D}(z,r,\cdot) - \hat{m}_{d^\prime}^{D}(z,r,\cdot) \rVert_2 \\
    &\quad +
    \lVert m_d^{D}(z,r,\cdot) - \hat{m}_d^{D}(z,r,\cdot) \rVert_2
    \times 
   \lVert e(z,\cdot) - \hat{e}(z,\cdot) \rVert_2 \\
   &\quad +
   \lVert m_y^{Y}(z,r,d^\prime,\cdot) - \hat{m}_y^{Y}(z,r,d^\prime,\cdot) \rVert_2
    \times 
   \lVert e(z,\cdot) - \hat{e}(z,\cdot) \rVert_2
    )
\end{align*}
This is $o_p(n^{-1/2})$ by Assumptions~\ref{assum:single_blinded} (c), \ref{assum:decision_pos}, and \ref{assum:rate_condition}.

Applying Proposition 2 in \cite{kennedy_semiparametric_2023},
we can then write
\begin{align*}
  \hat{\beta}_z - \beta_z
  &=
  \frac{1}{n} \sum_{i=1}^{n} \sum_{d=0}^{K_{D}-1}  
    \sum_{y=0}^{K_{Y}-1}
    \sum_{r=0}^{K_{R}-1}\Bigg(
    u_{y}^{d}\varphi_{ydrz}(Y_i,D_i,R_i,Z_i,\bX_i)
    + \sum_{\substack{d^\prime=0 \\ d^\prime \ne d}}^{K_{D}-1}
    \tilde{u}_{y}^{d^\prime}
    \widetilde{\varphi}_{ydd^\prime rz}(Y_i,D_i,R_i,Z_i,\bX_i)
    \Bigg)\\
    &\quad
    - \beta_z
    + o_p(n^{-1/2}) 
\end{align*}
where
\begin{equation*}
    \hat{\beta}_z = 
    \frac{1}{n} \sum_{i=1}^{n}\sum_{d=0}^{K_{D}-1}  
    \sum_{y=0}^{K_{Y}-1}
    \sum_{r=0}^{K_{R}-1} \Bigg(
    u_{y}^{d}\widehat{\varphi}_{ydrz}(Y_i,D_i,R_i,Z_i,\bX_i)
    + \sum_{\substack{d^\prime=0 \\ d^\prime \ne d}}^{K_{D}-1}
    \tilde{u}_{y}^{d^\prime}
    \widehat{\widetilde{\varphi}}_{ydd^\prime rz}(Y_i,D_i,R_i,Z_i,\bX_i)
    \Bigg).
  \end{equation*}
  This leads to the desired result,
  \begin{equation*}
    \sqrt{n}(\hat{\beta}_z - \beta_z) \xrightarrow{d} N(0, V_z)
  \end{equation*}
  where 
  \begin{equation*}
    V_z = \E\Bigg[\Bigg\{\sum_{d=0}^{K_{D}-1}  
    \sum_{y=0}^{K_{Y}-1}
    \sum_{r=0}^{K_{R}-1}\Bigg(
    u_{y}^{d}\varphi_{ydrz}(Y,D,R,Z,\bX)
    + \sum_{\substack{d^\prime=0 \\ d^\prime \ne d}}^{K_{D}-1}
    \tilde{u}_{y}^{d^\prime}
    \widetilde{\varphi}_{ydd^\prime rz}(Y,D,R,Z,\bX)
    \Bigg) - \beta_z\Bigg\}^2\Bigg].
  \end{equation*}

\end{proof}
  
\subsection{Evaluating AI Recommendation}
\label{utility_ai}
Now, we turn to the expected utility of following the AI recommendation $R$, which is
given by
\begin{align*}
    &\overline{U}(u; R)\\
    &=\E\Bigg[
    \sum_{y=0}^{K_{Y}-1}
    \sum_{r=0}^{K_{R}-1}
    \sum_{z=0}^{1}
    u_{y}^{a(r)}
   \Pr(Y = y \mid D = a(r), R = r, Z = z, \bX = \bx)\\
    &\qquad\hspace{10em}\times\Pr(R = r, Z = z\mid \bX = \bx)\\
    &\qquad+
    \sum_{y = 0}^{K_{Y}-1}
    \sum_{r=0}^{K_{R}-1}
    \sum_{\substack{d^\prime=0 \\ d^\prime \ne a(r)}}^{K_{D}-1}
    \sum_{z=0}^{1}
    \tilde{u}_{y}^{d^\prime}
    \Pr(Y = y\mid D = d^\prime, R = r, Z = z, \bX = \bx)\\
    &\qquad\hspace{10em}\times\Pr(R = r, Z = z\mid \bX = \bx)\Bigg].
\end{align*}

Following the same strategy as the one used for the evaluation of
human decision above, we propose an AIPW estimator using the following
two sets of uncentered influence function estimates: one for
$\Pr(Y = y \mid D = a(r), R = r, Z = z, \bX = \bx)\Pr(R = r, Z = z\mid
\bX = \bx)$ and the other for
$\Pr(Y = y\mid D = d^\prime, R = r, Z = z, \bX = \bx)\Pr(R = r, Z =
z\mid \bX = \bx)$.

\begin{equation*}
    \widehat{\overline{U}(u; R)} = 
    \frac{1}{n} \sum_{i=1}^{n}\sum_{y=0}^{K_{Y}-1}
    \sum_{r=0}^{K_{R}-1}
    \sum_{z=0}^{1}
    \Bigg(
    u_{y}^{a(r)} \widehat{\psi}_{yrz}(Y_{i},D_{i},R_{i},Z_{i},\bX_{i})
    + \sum_{\substack{d^\prime=0 \\ d^\prime \ne a(r)}}^{K_{D}-1}
    \tilde{u}_{y}^{d^\prime}
    \widehat{\widetilde{\psi}}_{yrd^\prime z}(Y_{i},D_{i},R_{i},Z_{i},\bX_{i})
    \Bigg),
  \end{equation*}
where
\begin{align*}
    &\widehat{\psi}_{yrz}(Y,D,R,Z,\bX)\\
    &= \bbone\{R = r\} \Bigg\{
      \hat{m}_y^{Y}(z,r,a(r),\bX)\hat{e}(z, \bX)+ \frac{\bbone\{Z = z, D = a(r)\}}{\hat{m}_{a(r)}^{D}(z,r,\bX)}
      (\bbone\{Y = y\} - \hat{m}_y^{Y}(z,r,a(r),\bX)) \\
      &\qquad\qquad\qquad\quad+ \hat{m}_y^{Y}(z,r,a(r),\bX)(\bbone\{Z = z\} - \hat{e}(z,\bX))
      \Bigg\},
  \end{align*}
  and 
  \begin{align*}
    &\widehat{\widetilde{\psi}}_{yrd^\prime z}(Y,D,R,Z,\bX)\\
    &= \bbone\{R = r\} \Bigg\{
      \hat{m}_y^{Y}(z,r,d^\prime,\bX)\hat{e}(z, \bX)+ \frac{\bbone\{Z = z, D = d^\prime\}}{\hat{m}_{d^\prime}^{D}(z,r,\bX)}
      (\bbone\{Y = y\} - \hat{m}_y^{Y}(z,r,d^\prime,\bX)) \\
      &\qquad\qquad\qquad\quad+ \hat{m}_y^{Y}(z,r,d^\prime,\bX)(\bbone\{Z = z\} - \hat{e}(z,\bX))
      \Bigg\}.
  \end{align*}

  The following corollary establishes the asymptotic normality of this AIPW estimator.
  \begin{corollary}
  [Asymptotic normality of the AIPW estimator for AI recommendation]\label{thm:utility_ai}
  Under Assumptions~\ref{assum:single_blinded}, \ref{assum:indep}, \ref{assum:decision_pos}, and \ref{assum:rate_condition},
  we have
  \begin{equation*}
    \sqrt{n}(\widehat{\overline{U}(u; R)} - \overline{U}(u; R)) \xrightarrow{d} N(0, V)
  \end{equation*}
  where
  \begin{equation*}
    V = \E\Bigg[\Bigg\{\sum_{y=0}^{K_{Y}-1}
    \sum_{r=0}^{K_{R}-1}
    \sum_{z=0}^{1}
    \Bigg(
    u_{y}^{a(r)} \psi_{yrz}(Y_{i},D_{i},R_{i},Z_{i},\bX_{i})
    + \sum_{\substack{d^\prime=0 \\ d^\prime \ne a(r)}}^{K_{D}-1}
    \tilde{u}_{y}^{d^\prime}
    \widetilde{\psi}_{yrd^\prime z}(Y_{i},D_{i},R_{i},Z_{i},\bX_{i})
    \Bigg) - \overline{U}(u; R)\Bigg\}^2\Bigg].
  \end{equation*}
\end{corollary}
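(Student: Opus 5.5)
The plan is to treat the corollary as a special case of Theorem~\ref{thm:utility_generic}, with the evaluated rule being the deterministic map $D^\ast = a(R)$. Since $R$ is a known function of $\bX$ in our application, $D^\ast$ is measurable with respect to $(R,\bX)$, so the hypotheses of Theorem~\ref{thm:risk_additive} and Theorem~\ref{thm:utility_generic} hold. The first step is to check that the estimator reduces correctly. In the generic estimator, the factor $\bbone\{D^\ast=d,R=r\}$ equals $\bbone\{R=r\}\bbone\{a(r)=d\}$. Summing over $d$ therefore collapses the $d$-index to $d=a(r)$. Under this collapse, $\widehat{\eta}_{y\,a(r)\,a(r)\,rz}$ coincides term by term with $\widehat{\psi}_{yrz}$, and $\widehat{\eta}_{yd^\prime a(r) rz}$ coincides with $\widehat{\widetilde{\psi}}_{yrd^\prime z}$. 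So $\widehat{\overline{U}(u;R)}$ is exactly the generic estimator applied to $D^\ast=a(R)$, and the same holds for the population influence-function terms.

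To keep the argument self-contained, I would still verify the two key properties directly for each building block $\theta_{yrkz}=\E[\bbone\{R=r\}e(z,\bX)m^Y_y(z,r,k,\bX)]$ with $k\in\{a(r)\}\cup\{d^\prime\ne a(r)\}$.

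\emph{Correct mean.} Conditional on $(\bX,R)$, the IPW residual term has mean zero because $\E[\bbone\{Z=z,D=k\}(\bbone\{Y=y\}-m^Y_y)\mid \bX,R,Z,D]=0$. The propensity augmentation also has mean zero since $\E[\bbone\{Z=z\}-e(z,\bX)\mid\bX,R]=0$, using $Z\indep R\mid\bX$ from Assumption~\ref{assum:single_blinded}(b). Summing $\theta$ over $z$ then recovers the identification formula for $\overline{U}(u;R)$ in Corollary~\ref{coro:risk_additive}, written with the summation over $z$ as in the expression at the start of Appendix~\ref{utility_ai}.

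\emph{Second-order remainder.} Subtracting $\eta$ from $\widehat{\eta}$ and taking iterated expectations given $\bX,R$ gives the expression already derived in the proof of Theorem~\ref{thm:utility_generic}:
\[
\E\bigl[\bbone\{R=r\}(m^D_k/\hat m^D_k-1)\,e(z,\bX)\,(m^Y_y-\hat m^Y_y)\bigr],
\]
plus cross terms involving $(e-\hat e)$ times the errors in $m^Y$. Cauchy--Schwarz bounds the result by products of $L_2$ errors. Assumption~\ref{assum:decision_pos}, together with the sup-norm consistency of $\hat m^D$ in Assumption~\ref{assum:rate_condition}, bounds $1/\hat m^D_k$ with probability tending to one. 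The product rates in Assumption~\ref{assum:rate_condition} then make the remainder $o_p(n^{-1/2})$. In the RCT, $\hat e=e$, so the $e$-terms vanish.

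Finally, I would invoke Proposition~2 of \cite{kennedy_semiparametric_2023}, together with linearity over the finitely many $(y,r,z,d^\prime)$ terms, to obtain the asymptotically linear expansion. The empirical-process term is negligible by sample splitting and $L_2$ consistency, using the $\|\cdot\|_\infty$ bounds. The CLT then yields $N(0,V)$ with $V$ equal to the variance of the summed uncentered influence function minus $\overline{U}(u;R)$.

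I expect the main obstacle to be the remainder step for the counterfactual blocks $\widetilde{\psi}$. These weight by $1/\hat m^D_{d^\prime}$ for decisions $d^\prime\ne a(r)$ that may be rare in stratum $r$, so I would lean explicitly on Assumption~\ref{assum:decision_pos} for every $d^\prime$, not only for the recommended decision. I would also check that the $\bbone\{R=r\}$ factor does not create additional nuisance error, which holds because $R$ is deterministic given $\bX$.
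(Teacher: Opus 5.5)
Your proposal is correct and follows the paper's argument in substance. The paper reaches $\psi_{yrz}$ and $\widetilde{\psi}_{yrd^\prime z}$ through the influence-function product rule rather than through your observation that they coincide with $\eta_{y\,a(r)\,a(r)\,rz}$ and $\eta_{y d^\prime a(r) rz}$ from Theorem~\ref{thm:utility_generic} at $D^\ast=a(R)$, which is valid. It then computes exactly your remainder $\E[\bbone\{R=r\}(m^D_k/\hat m^D_k-1)\,e(z,\bX)\,(m^Y_y-\hat m^Y_y)]$, bounds it by the product of $L_2$ errors using Assumptions~\ref{assum:decision_pos} and~\ref{assum:rate_condition}, and concludes with Proposition~2 of \cite{kennedy_semiparametric_2023}.

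One small correction: the remainder has no $(e-\hat e)$ cross terms at all. The identity $\hat m^Y_y\hat e(z,\bX)+\hat m^Y_y\{\bbone\{Z=z\}-\hat e(z,\bX)\}=\hat m^Y_y\bbone\{Z=z\}$ shows that $\hat e$ drops out of the estimator entirely. The remainder is therefore that single product term whether or not $e$ is known.
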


\begin{proof}
  Recall that the uncentered influence-function term for the outcome model is given by
  \begin{align*}
    &\textsc{IF}(m_y^{Y}(z,r,a(r),\bX)) \\
    &= m_y^{Y}(z,r,a(r),\bX) + \frac{\bbone\{Z = z, D = a(r), R = r\}}{m_r^{R}(\bX)m_{a(r)}^{D}(z,r,\bX)e(z,\bX)}
    (\bbone\{Y = y\} - m_y^{Y}(z,r,a(r),\bX)) \\
    &\textsc{IF}(m_y^{Y}(z,r,d^\prime,\bX)) \\
    &= m_y^{Y}(z,r,d^\prime,\bX) + \frac{\bbone\{Z = z, D = d^\prime, R = r\}}{m_r^{R}(\bX)m_{d^\prime}^{D}(z,r,\bX)e(z,\bX)}
    (\bbone\{Y = y\} - m_y^{Y}(z,r,d^\prime,\bX)) 
  \end{align*}
  That of $\Pr(R = r, Z = z\mid \bX = \bx)$ is given by
  \begin{equation*}
    \textsc{IF}(\Pr(R = r, Z = z\mid \bX = \bx))
    = 
    m^{R}_{r}(\bX)e(z, \bX)
    + e(z,\bX)(\bbone\{R = r\} - m^{R}_{r}(\bX))
    + m^{R}_{r}(\bX)(\bbone\{Z = z\} - e(z,\bX))
  \end{equation*}
  where we use that $\Pr(R = r, Z = z\mid \bX = \bx) = \Pr(R = r \mid Z = z, \bX = \bx)e(z,\bX)$
  and $\Pr(R = r \mid Z = z, \bX = \bx) = m^{R}_{r}(\bX)$ in our setup.
  Let $\psi_{yrz}(Y,D,R,Z,\bX) := \textsc{IF}(m_y^{Y}(z,r,a(r),\bX)m^{R}_{r}(\bX)e(z, \bX))$
  and $\widetilde{\psi}_{yrd^\prime z}(Y,D,R,Z,\bX) := \textsc{IF}(m_y^{Y}(z,r,d^\prime,\bX)m^{R}_{r}(\bX)e(z, \bX))$.
  Using product rule, the centered influence-function term for the first term is given by
  \begin{align*}
    &\psi_{yrz}(Y,D,R,Z,\bX)\\
    &= 
    \textsc{IF}(m_y^{Y}(z,r,a(r),\bX))m^{R}_{r}(\bX)e(z, \bX)
    + \textsc{IF}(m^{R}_{r}(\bX)e(z, \bX))m_y^{Y}(z,r,a(r),\bX) \\
    &\quad- m_y^{Y}(z,r,a(r),\bX)m^{R}_{r}(\bX)e(z, \bX) \\
    &= \bbone\{R = r\} \Bigg\{
      m_y^{Y}(z,r,a(r),\bX)e(z, \bX)+ \frac{\bbone\{Z = z, D = a(r)\}}{m_{a(r)}^{D}(z,r,\bX)}
      (\bbone\{Y = y\} - m_y^{Y}(z,r,a(r),\bX)) \\
      &\qquad\qquad\qquad\quad+ m_y^{Y}(z,r,a(r),\bX)(\bbone\{Z = z\} - e(z,\bX))
      \Bigg\}
  \end{align*}
  and similarly
  \begin{align*}
    &\widetilde{\psi}_{yrd^\prime z}(Y,D,R,Z,\bX)\\
    &= \bbone\{R = r\} \Bigg\{
      m_y^{Y}(z,r,d^\prime,\bX)e(z, \bX)+ \frac{\bbone\{Z = z, D = d^\prime\}}{m_{d^\prime}^{D}(z,r,\bX)}
      (\bbone\{Y = y\} - m_y^{Y}(z,r,d^\prime,\bX)) \\
      &\qquad\qquad\qquad\quad+ m_y^{Y}(z,r,d^\prime,\bX)(\bbone\{Z = z\} - e(z,\bX))
      \Bigg\}
  \end{align*}
  where we use $m^{R}_{r}(\bX) = \bbone\{R = r\}$ with a slight abuse of notation, since it is a deterministic function of $\bX$.
  Thus, we use the following two AIPW estimators for each term:
  \begin{align*}
    &\widehat{\E}[\Pr(Y = y \mid D = a(r), R = r, Z = z, \bX = \bx)\Pr(R = r, Z = z\mid \bX = \bx)]\\
    &=
    \frac{1}{n} \sum_{i=1}^{n} \widehat{\psi}_{yrz}(Y,D,R,Z,\bX)
  \end{align*}
  and
  \begin{align*}
    &\widehat{\E}[\Pr(Y = y\mid D = d^\prime, R = r, Z = z, \bX = \bx)\Pr(R = r, Z = z\mid \bX = \bx)]\\
    &=
    \frac{1}{n} \sum_{i=1}^{n} \widehat{\widetilde{\psi}}_{yrd^\prime z}(Y,D,R,Z,\bX)
  \end{align*}
  We examine the remainder bias in each term.
  \begin{align*}
    &\E[\widehat{\psi}_{yrz}(Y,D,R,Z,\bX) - \psi_{yrz}(Y,D,R,Z,\bX)] \\
    &= \E \Bigg[
      m^{R}_{r}(\bX) \Bigg\{
        \hat{m}_y^{Y}(z,r,a(r),\bX)\hat{e}(z, \bX)
        - m_y^{Y}(z,r,a(r),\bX)e(z, \bX) \\
      &\qquad\qquad\qquad\quad+ 
      \Bigg(\frac{m_{a(r)}^{D}(z,r,\bX)}{\hat{m}_{a(r)}^{D}(z,r,\bX)} - 1 \Bigg)
      e(z,\bX)
      (m_y^{Y}(z,r,a(r),\bX) - \hat{m}_y^{Y}(z,r,a(r),\bX)) \\
      &\qquad\qquad\qquad\quad+ 
      \hat{m}_y^{Y}(z,r,a(r),\bX)
      (e(z,\bX) - \hat{e}(z,\bX))\\
      &\qquad\qquad\qquad\quad+ e(z,\bX)
      (m_y^{Y}(z,r,a(r),\bX) - \hat{m}_y^{Y}(z,r,a(r),\bX)) 
      \Bigg\}
      \Bigg]  \\
    &= \E \Bigg[
      m^{R}_{r}(\bX) \Bigg\{
      \Bigg(\frac{m_{a(r)}^{D}(z,r,\bX)}{\hat{m}_{a(r)}^{D}(z,r,\bX)} - 1 \Bigg)
      e(z,\bX)
      (m_y^{Y}(z,r,a(r),\bX) - \hat{m}_y^{Y}(z,r,a(r),\bX)) 
      \Bigg\}
      \Bigg] 
  \end{align*}

  The absolute bias is bounded by
  \begin{equation*}
    \lvert \E[\widehat{\psi}_{yrz} - \psi_{yrz}] \rvert
    \le 
    C_1\lVert m_{a(r)}^{D}(z,r,\cdot) - \hat{m}_{a(r)}^{D}(z,r,\cdot) \rVert_2
    \times \lVert m_y^{Y}(z,r,a(r),\cdot) - \hat{m}_y^{Y}(z,r,a(r),\cdot) \rVert_2
  \end{equation*}
By Assumptions~\ref{assum:decision_pos} and \ref{assum:rate_condition}, this is $o_p(n^{-1/2})$.

Lastly, the remainder bias in the second term is given by
  \begin{align*}
    &\E[\widehat{\widetilde{\psi}}_{yrd^\prime z}(Y,D,R,Z,\bX) - \widetilde{\psi}_{yrd^\prime z}(Y,D,R,Z,\bX)] \\
    &= \E \Bigg[
      m^{R}_{r}(\bX) \Bigg\{
        \hat{m}_y^{Y}(z,r,d^\prime,\bX)\hat{e}(z, \bX)
        - m_y^{Y}(z,r,d^\prime,\bX)e(z, \bX) \\
      &\qquad\qquad\qquad\quad+ 
      \Bigg(\frac{m_{d^\prime}^{D}(z,r,\bX)}{\hat{m}_{d^\prime}^{D}(z,r,\bX)} - 1 \Bigg)
      e(z,\bX)
      (m_y^{Y}(z,r,d^\prime,\bX) - \hat{m}_y^{Y}(z,r,d^\prime,\bX)) \\
      &\qquad\qquad\qquad\quad+ 
      \hat{m}_y^{Y}(z,r,d^\prime,\bX)
      (e(z,\bX) - \hat{e}(z,\bX))\\
      &\qquad\qquad\qquad\quad+ e(z,\bX)
      (m_y^{Y}(z,r,d^\prime,\bX) - \hat{m}_y^{Y}(z,r,d^\prime,\bX)) 
      \Bigg\}
      \Bigg]  \\
    &= \E \Bigg[
      m^{R}_{r}(\bX) \Bigg\{
      \Bigg(\frac{m_{d^\prime}^{D}(z,r,\bX)}{\hat{m}_{d^\prime}^{D}(z,r,\bX)} - 1 \Bigg)
      e(z,\bX)
      (m_y^{Y}(z,r,d^\prime,\bX) - \hat{m}_y^{Y}(z,r,d^\prime,\bX)) 
      \Bigg\}
      \Bigg] 
  \end{align*}
  The absolute bias is bounded by
  \begin{equation*}
    \lvert \E[\widehat{\widetilde{\psi}}_{yrd^\prime z} - \widetilde{\psi}_{yrd^\prime z}] \rvert
    \le
    C_2\lVert m_{d^\prime}^{D}(z,r,\cdot) - \hat{m}_{d^\prime}^{D}(z,r,\cdot) \rVert_2
    \times \lVert m_y^{Y}(z,r,d^\prime,\cdot) - \hat{m}_y^{Y}(z,r,d^\prime,\cdot) \rVert_2
  \end{equation*}
By Assumptions~\ref{assum:decision_pos} and \ref{assum:rate_condition}, this is $o_p(n^{-1/2})$.

  Applying Proposition 2 in \cite{kennedy_semiparametric_2023},
  we can then write
  \begin{align*}
    &\widehat{\overline{U}(u; R)} - \overline{U}(u; R) \\
    &=
    \frac{1}{n} \sum_{i=1}^{n}\sum_{y=0}^{K_{Y}-1}
    \sum_{r=0}^{K_{R}-1}
    \sum_{z=0}^{1}
    \Bigg(
    u_{y}^{a(r)} \psi_{yrz}(Y_i,D_i,R_i,Z_i,\bX_i)
    + \sum_{\substack{d^\prime=0 \\ d^\prime \ne a(r)}}^{K_{D}-1}
    \tilde{u}_{y}^{d^\prime}
    \widetilde{\psi}_{yrd^\prime z}(Y_i,D_i,R_i,Z_i,\bX_i)
    \Bigg)\\
    &\quad
    - \overline{U}(u; R)
    + o_p(n^{-1/2}) 
  \end{align*}
  This leads to the desired result,
  \begin{equation*}
    \sqrt{n}(\widehat{\overline{U}(u; R)} - \overline{U}(u; R)) \xrightarrow{d} N(0, V)
  \end{equation*}
  where 
  \begin{equation*}
    V = \E\Bigg[\Bigg\{\sum_{y=0}^{K_{Y}-1}
    \sum_{r=0}^{K_{R}-1}
    \sum_{z=0}^{1}
    \Bigg(
    u_{y}^{a(r)} \psi_{yrz}(Y,D,R,Z,\bX)
    + \sum_{\substack{d^\prime=0 \\ d^\prime \ne a(r)}}^{K_{D}-1}
    \tilde{u}_{y}^{d^\prime}
    \widetilde{\psi}_{yrd^\prime z}(Y,D,R,Z,\bX)
    \Bigg) - \overline{U}(u; R)\Bigg\}^2\Bigg].
  \end{equation*} 
\end{proof}

\section{Utility Specification under Binary Case}

\subsection{Additive Counterfactual Utility with Cost of Decision, Outcome, and Regret}
\label{additive_loss}

\begin{table}[H]
\centering 
    \begin{tabular}{l|c|c|c|}
  \multicolumn{2}{c}{} & \multicolumn{2}{c}{\textbf{Decision}} \\\cline{3-4}
  \multicolumn{2}{c|}{} 
    & \multirow{2}{*}{Release $(D^\ast = 0)$} 
    & \multirow{2}{*}{Cash bail $(D^\ast = 1)$} \\
  \multicolumn{2}{c|}{} & & \\ \cline{2-4}

  & \multirow{2}{*}{\shortstack{Safe\\$(Y(0) = 0,\ Y(1) = 0)$}} 
    & \multirow{2}{*}{\shortstack[l]{$-r_{\nocrime}^\cash$\\$=-r_{\nocrime}^\ROR c_{\crime}^\cash$}} 
    & \multirow{2}{*}{$-c^\cash-r_{\nocrime}^\ROR$} \\
  & & & \\ \cline{2-4}

  \multirow{3}{*}{\textbf{Principal}}
  & \multirow{2}{*}{\shortstack{Backlash\\$(Y(0) = 0,\ Y(1) = 1)$}} 
    & \multirow{2}{*}{$0$} 
    & \multirow{2}{*}{$-c^\cash -c_\crime^\cash -r_{\nocrime}^\ROR$} \\
  & & & \\ \cline{2-4}

  \textbf{Strata}
  & \multirow{2}{*}{\shortstack{Preventable\\$(Y(0) = 1,\ Y(1) = 0)$}} 
    & \multirow{2}{*}{\shortstack[l]{$-c_\crime^\ROR -r_{\nocrime}^\cash$\\$=-1 -r_{\nocrime}^\ROR c_\crime^\cash$}}
    & \multirow{2}{*}{$-c^\cash$} \\
  & & & \\ \cline{2-4}

    & \multirow{2}{*}{\shortstack{Hopeless\\$(Y(0) = 1,\ Y(1) = 1)$}} 
    & \multirow{2}{*}{\shortstack[l]{$-c_\crime^\ROR$\\$=-1$}} 
    & \multirow{2}{*}{$-c^\cash -c_\crime^\cash$} \\
  & & & \\ \cline{2-4}
\end{tabular}
\caption{Additive counterfactual utilities in the application study. $c^\cash$: cost of cash bail. $c_\crime^\cash$: cost of an undesirable event under cash bail. $r_{\nocrime}^\ROR$: regret of an absence of undesirable event under ROR. $r_{\nocrime}^\cash$: regret of an absence of undesirable event under cash bail.} 
\label{tab:binary_empirical}
\end{table}

\subsection{Alternative Specification}
\label{alternative_loss}
In this section, we consider alternative specification of additive counterfactual utility with binary decision and outcome, for a better interpretability.
We begin by standardizing one of the eight parameters: the utility of the absence of an undesirable event under an ROR decision. For the remaining three utility parameters for the realized outcome under the observed decision, $u_{y}^{d}$, we consider two additive costs: $c_{y} \ge 0$, representing the cost of an undesirable outcome, and $c_{d} \ge 0$, representing the cost of cash bail.
\begin{align*}
    u_{\nocrime}^{\ROR} &= 1\\
    u_{\crime}^{\ROR} &= 1 - c_{y} \\
    u_{\nocrime}^{\cash} &= 1 - c_{d}  \\
    u_{\crime}^{\cash} &= 1 - c_{d} - c_{y} 
\end{align*}

Now, turning to the four counterfactual utility parameters for the counterfactual outcome under the alternative decision, $\tilde{u}_{y}^{1-d}$, we assume a similar additive utility structure with two costs, $c_{y}$ and $c_{d}$. To allow the counterfactual utility parameters to differ from the observed ones in both magnitude and sign, we introduce a discount factor, $\gamma \ge 0$.
\begin{align*}
    \tilde{u}_{\nocrime}^{\ROR} = -\gamma u_{0}^{0} &= -\gamma \\
    \tilde{u}_{\crime}^{\ROR} = -\gamma u_{0}^{1} &= -\gamma(1 - c_{y}) \\
    \tilde{u}_{\nocrime}^{\cash} = -\gamma u_{1}^{0} &= -\gamma(1 - c_{d})  \\
    \tilde{u}_{\crime}^{\cash} = -\gamma u_{1}^{1} &= -\gamma(1 - c_{d} - c_{y}) 
\end{align*}

Recall that in the binary case, the existing risk score framework imposes additional constraints requiring that utilities be equal between the \textit{Safe} and \textit{Backlash} strata and between the \textit{Preventable} and \textit{Hopeless} strata; that is, $\tilde{u}_{\nocrime}^{\cash} = \tilde{u}_{\crime}^{\cash}$ and $u_{\nocrime}^{\cash} = u_{\crime}^{\cash}$. 
In other words, if we were to impose the utility structure with three parameters---two costs and a discount factor---the risk score framework implicitly assumes a zero cost of undesirable outcome, i.e., $c_{y} = 0$.
Yet, this would make all utilities, for each combination of the decision and principal strata, equal to $1 - c_d - \gamma$.

\begin{table}[H]
\centering 
    \begin{tabular}{l|c|c|c|}
  \multicolumn{2}{c}{} & \multicolumn{2}{c}{\textbf{Decision}} \\\cline{3-4}
  \multicolumn{2}{c|}{} 
    & \multirow{2}{*}{Release $(D^\ast = 0)$} 
    & \multirow{2}{*}{Cash bail $(D^\ast = 1)$} \\
  \multicolumn{2}{c|}{} & & \\ \cline{2-4}

  & \multirow{2}{*}{\shortstack{Safe\\$(Y(0) = 0,\ Y(1) = 0)$}} 
    & \multirow{2}{*}{$1 - \gamma(1 - c_{d})$} 
    & \multirow{2}{*}{$1 - c_{d} - \gamma$} \\
  & & & \\ \cline{2-4}

  \multirow{3}{*}{\textbf{Principal}}
  & \multirow{2}{*}{\shortstack{Backlash\\$(Y(0) = 0,\ Y(1) = 1)$}} 
    & \multirow{2}{*}{$1 - \gamma(1 - c_{d} - c_{y})$} 
    & \multirow{2}{*}{$1 - c_{d} - c_{y} - \gamma$} \\
  & & & \\ \cline{2-4}

  \textbf{Strata}
  & \multirow{2}{*}{\shortstack{Preventable\\$(Y(0) = 1,\ Y(1) = 0)$}} 
    & \multirow{2}{*}{$1 - c_{y} - \gamma(1 - c_{d})$} 
    & \multirow{2}{*}{$1 - c_{d} - \gamma(1 - c_{y})$} \\
  & & & \\ \cline{2-4}

    & \multirow{2}{*}{\shortstack{Hopeless\\$(Y(0) = 1,\ Y(1) = 1)$}} 
    & \multirow{2}{*}{$1 - c_{y} - \gamma(1 - c_{d} - c_{y})$} 
    & \multirow{2}{*}{$1 - c_{d} - c_{y} - \gamma(1 - c_{y})$} \\
  & & & \\ \cline{2-4}
\end{tabular}
\caption{Additive counterfactual utilities in the application study. $c_y$ represents cost of undesirable outcome; $u_d$, cost of cash bail; $\gamma$, a discount factor.} 
\label{tbl:add_utility_app}
\end{table}

Table~\ref{tbl:add_utility_app} summarizes the additive counterfactual utilities using three parameters: $c_y$, $c_d$, and $\gamma$. For example, if $\gamma = 1$, meaning that the decision maker weighs the utility of the realized outcome under the observed decision the same as the counterfactual utility, then the utility under release is symmetric to that under cash bail (same magnitude but opposite sign). In this case, the utility of the safe stratum is also the same as that of the hopeless stratum.

\section{Additional Results}

\subsection{Human versus PSA}

\begin{figure}[H]
    \centering
    \includegraphics[width=\linewidth]{figs/rec_u00_0_utility_diff.pdf}
    \caption{Estimated preference for human decisions over PSA-guided decisions. Each panel corresponds to a different regret parameter $r_{\nocrime}^{\ROR}$; $x$-axis = cost of crime under cash bail $c_{\crime}^{\cash}$; $y$-axis = cost of cash bail $c_d$; $c_{\crime}^{\ROR} = 1$ and $r_{\nocrime}^{\cash} = r_{\nocrime}^{\ROR} c_{\crime}^{\cash}$. Blue region at $c_{\crime}^{\cash} = 0$: risk score framework. First panel ($r_{\nocrime}^{\ROR} = 0$): standard decision framework.}
\end{figure}

\subsection{Estimated Utility}

\begin{figure}[H]
    \centering
    \includegraphics[width=\linewidth, trim=0 0 0 0.4in, clip]{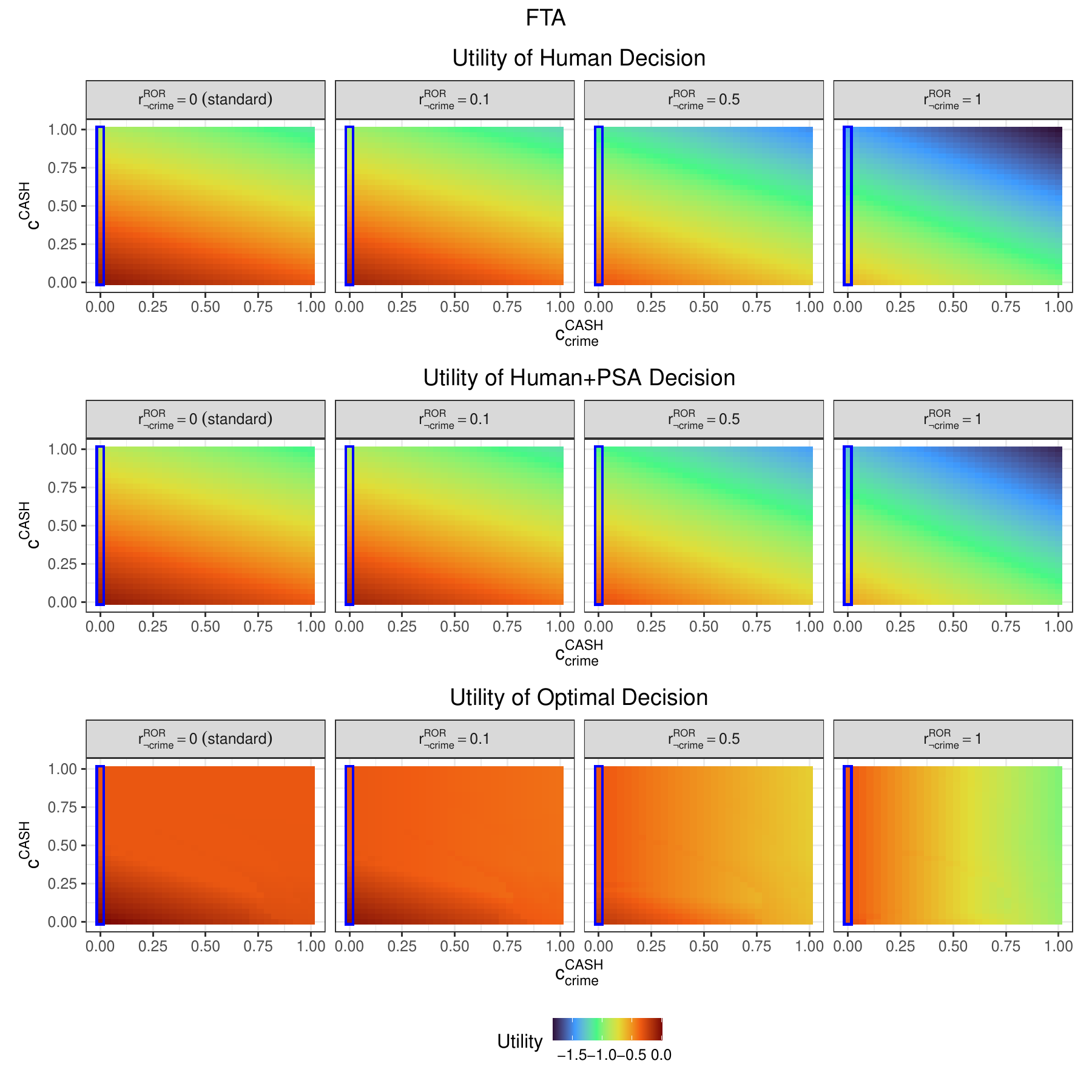}
    \caption{Estimated utility of different decision-making regimes (FTA outcome). 
    The optimal decision tree policy has maximum depth 2 and minimum leaf size 10.
    Each panel corresponds to a different regret parameter $r_{\nocrime}^{\ROR}$; $x$-axis = cost of crime under cash bail $c_{\crime}^{\cash}$; $y$-axis = cost of cash bail $c_d$; $c_{\crime}^{\ROR} = 1$ and $r_{\nocrime}^{\cash} = r_{\nocrime}^{\ROR} c_{\crime}^{\cash}$. Blue region at $c_{\crime}^{\cash} = 0$: risk score framework. First panel ($r_{\nocrime}^{\ROR} = 0$): standard decision framework.}
\end{figure}

\begin{figure}[H]
    \centering
    \includegraphics[width=\linewidth, trim=0 0 0 0.4in, clip]{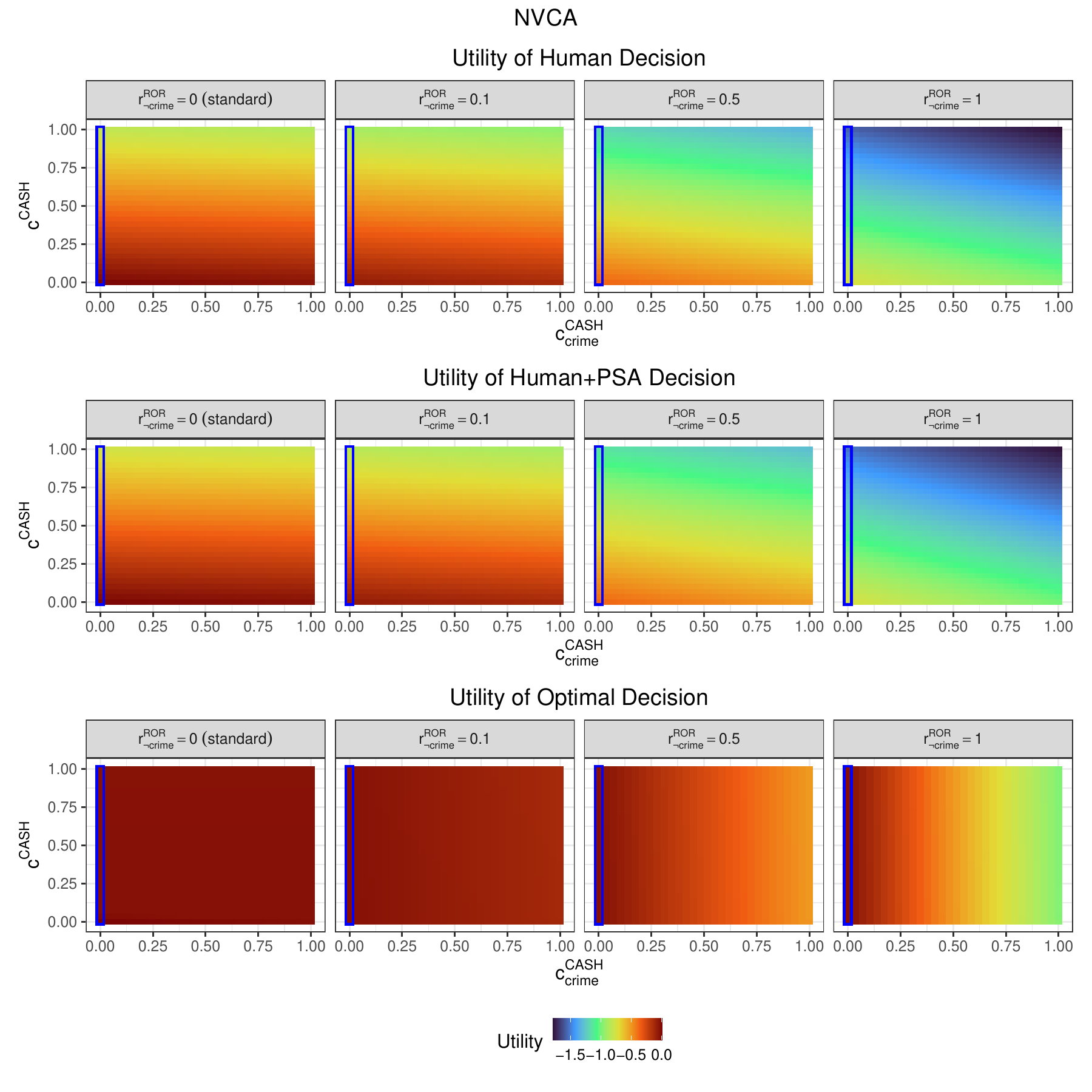}
    \caption{Estimated utility of different decision-making regimes (NVCA outcome). 
    The optimal decision tree policy has maximum depth 2 and minimum leaf size 10.
    Each panel corresponds to a different regret parameter $r_{\nocrime}^{\ROR}$; $x$-axis = cost of crime under cash bail $c_{\crime}^{\cash}$; $y$-axis = cost of cash bail $c_d$; $c_{\crime}^{\ROR} = 1$ and $r_{\nocrime}^{\cash} = r_{\nocrime}^{\ROR} c_{\crime}^{\cash}$. Blue region at $c_{\crime}^{\cash} = 0$: risk score framework. First panel ($r_{\nocrime}^{\ROR} = 0$): standard decision framework.}
\end{figure}

\subsection{Results with $u_0^0 = 1$}
\label{app:u00_1}

\begin{figure}[H]
    \centering
    \includegraphics[width=\linewidth]{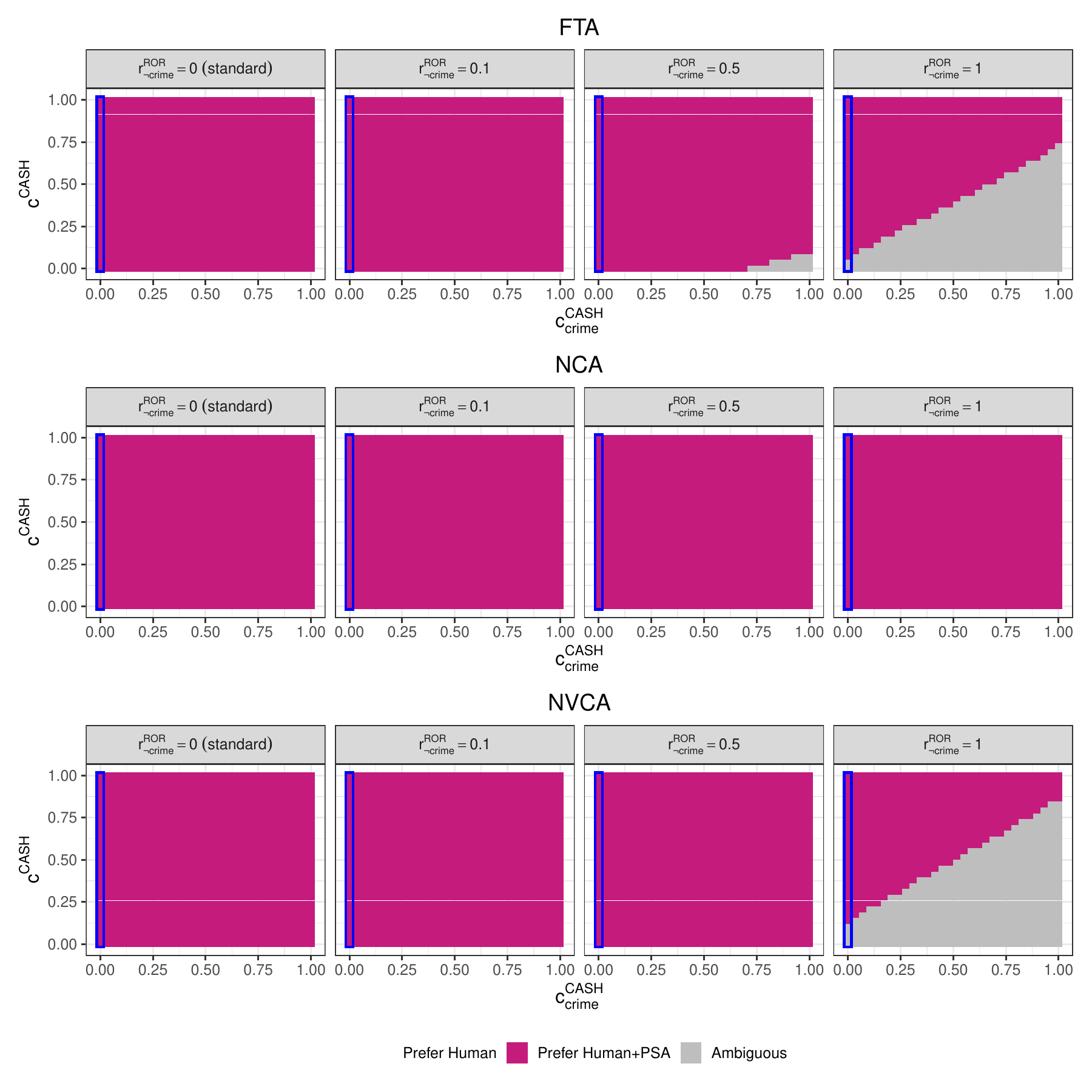}
    \caption{Estimated preference for human decisions over PSA-guided decisions ($u_0^0 = 1$). Each panel corresponds to a different regret parameter $r_{\nocrime}^{\ROR}$; $x$-axis = cost of crime under cash bail $c_{\crime}^{\cash}$; $y$-axis = cost of cash bail $c_d$; $c_{\crime}^{\ROR} = 1$ and $r_{\nocrime}^{\cash} = r_{\nocrime}^{\ROR} c_{\crime}^{\cash}$. Blue region at $c_{\crime}^{\cash} = 0$: risk score framework. First panel ($r_{\nocrime}^{\ROR} = 0$): standard decision framework.}
\end{figure}

\begin{figure}[H]
    \centering
    \includegraphics[width=\linewidth]{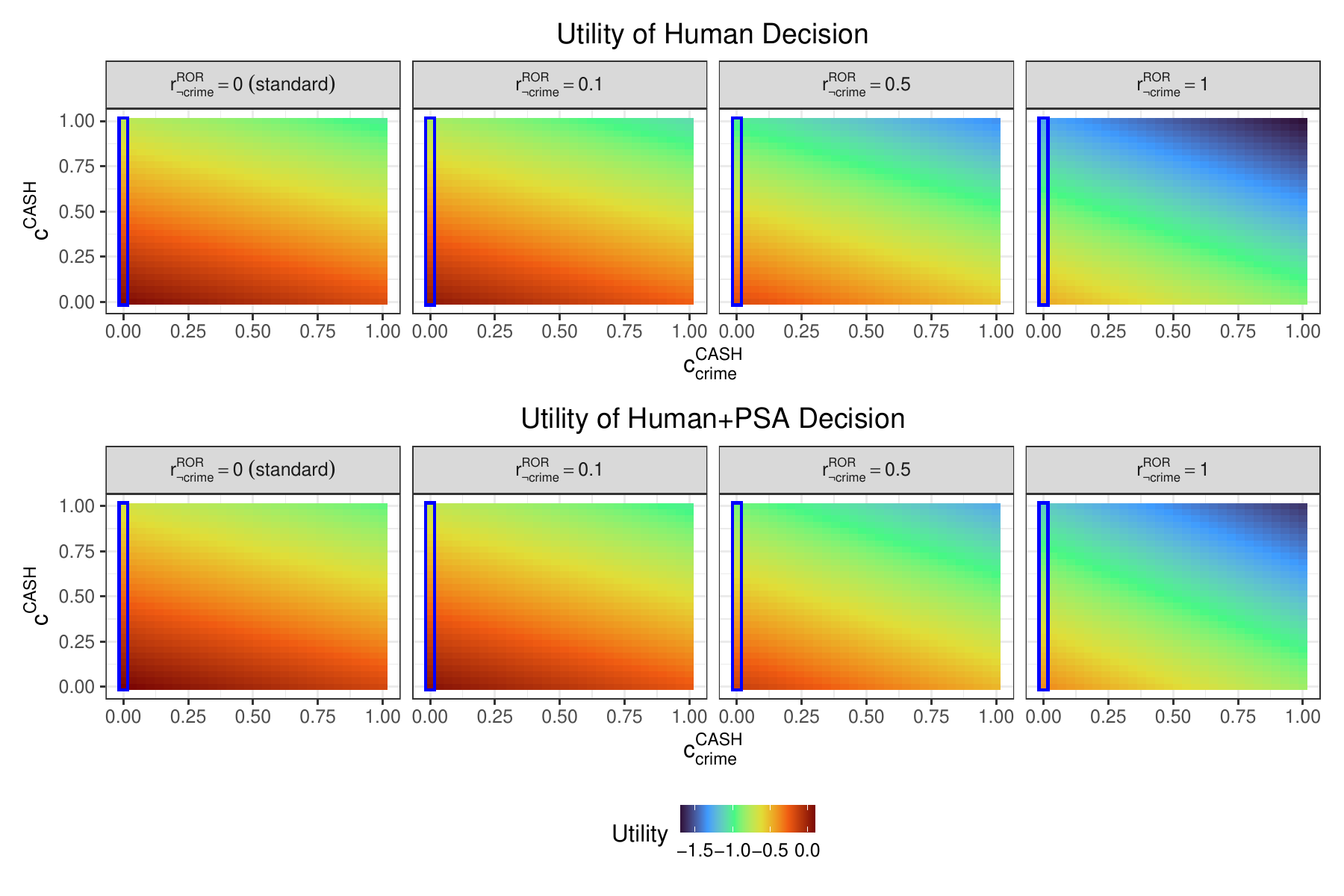}
    \caption{Estimated utility of different decision-making regimes (FTA outcome, $u_0^0 = 1$). Each panel corresponds to a different regret parameter $r_{\nocrime}^{\ROR}$; $x$-axis = cost of crime under cash bail $c_{\crime}^{\cash}$; $y$-axis = cost of cash bail $c_d$; $c_{\crime}^{\ROR} = 1$ and $r_{\nocrime}^{\cash} = r_{\nocrime}^{\ROR} c_{\crime}^{\cash}$. Blue region at $c_{\crime}^{\cash} = 0$: risk score framework. First panel ($r_{\nocrime}^{\ROR} = 0$): standard decision framework.}
\end{figure}

\begin{figure}[H]
    \centering
    \includegraphics[width=\linewidth]{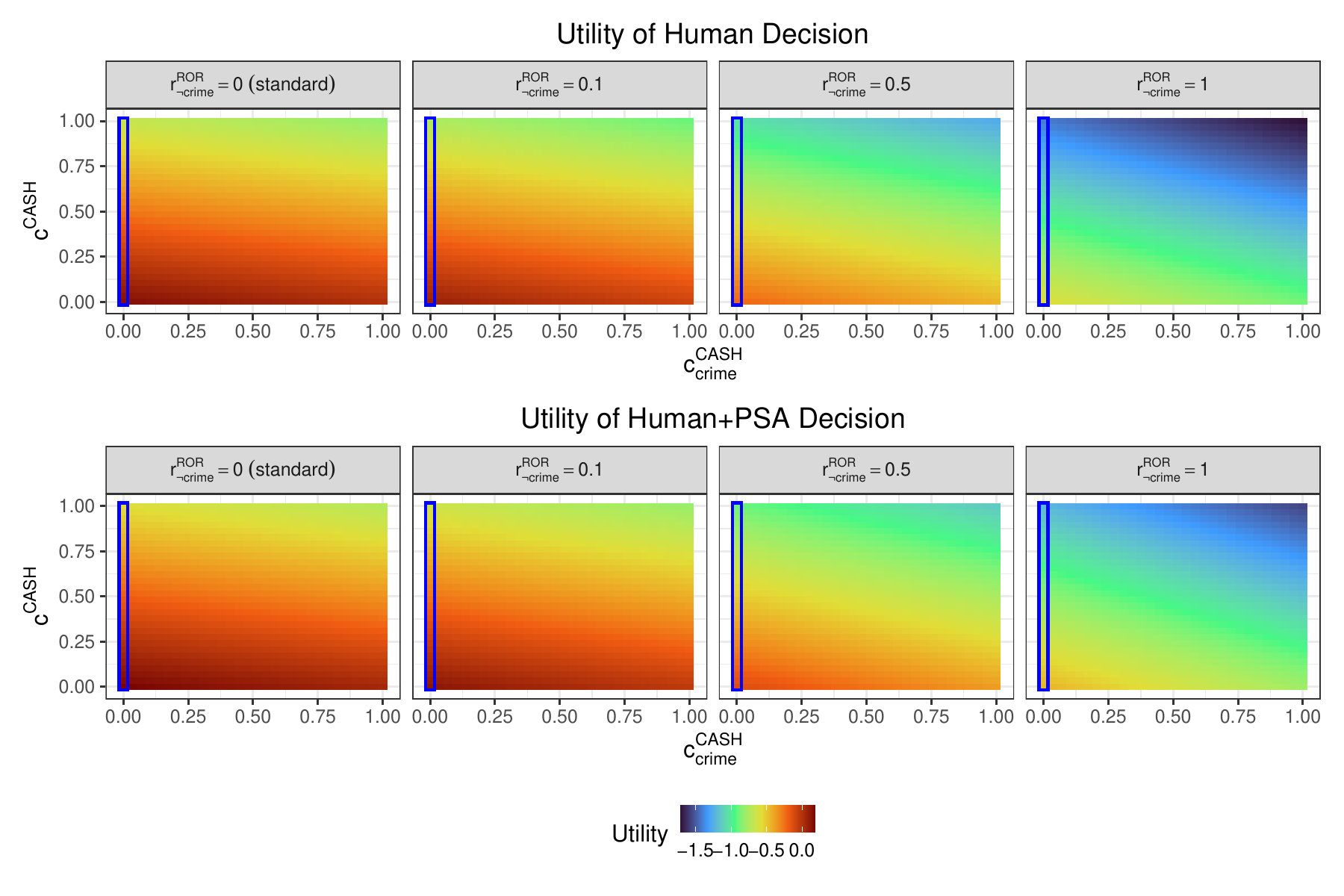}
    \caption{Estimated utility of different decision-making regimes (NCA outcome, $u_0^0 = 1$). Each panel corresponds to a different regret parameter $r_{\nocrime}^{\ROR}$; $x$-axis = cost of crime under cash bail $c_{\crime}^{\cash}$; $y$-axis = cost of cash bail $c_d$; $c_{\crime}^{\ROR} = 1$ and $r_{\nocrime}^{\cash} = r_{\nocrime}^{\ROR} c_{\crime}^{\cash}$. Blue region at $c_{\crime}^{\cash} = 0$: risk score framework. First panel ($r_{\nocrime}^{\ROR} = 0$): standard decision framework.}
\end{figure}

\begin{figure}[H]
    \centering
    \includegraphics[width=\linewidth]{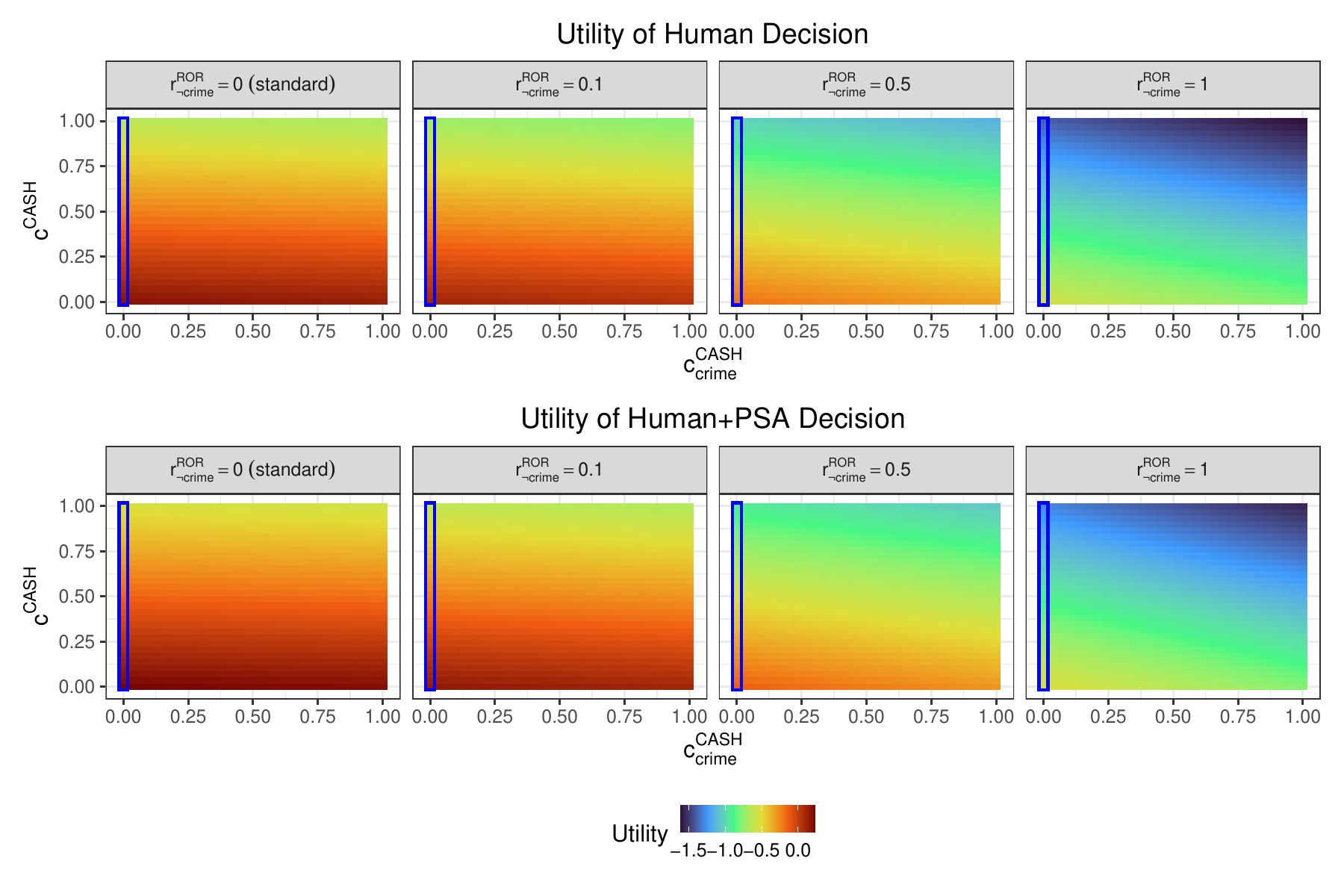}
    \caption{Estimated utility of different decision-making regimes (NVCA outcome, $u_0^0 = 1$). Each panel corresponds to a different regret parameter $r_{\nocrime}^{\ROR}$; $x$-axis = cost of crime under cash bail $c_{\crime}^{\cash}$; $y$-axis = cost of cash bail $c_d$; $c_{\crime}^{\ROR} = 1$ and $r_{\nocrime}^{\cash} = r_{\nocrime}^{\ROR} c_{\crime}^{\cash}$. Blue region at $c_{\crime}^{\cash} = 0$: risk score framework. First panel ($r_{\nocrime}^{\ROR} = 0$): standard decision framework.}
\end{figure}

\end{document}